\newcommand{\be}{\begin{equation}}
\newcommand{\ee}{\end{equation}}
\newcommand{\ba}{\begin{equation} \begin{aligned}}
\newcommand{\ea}{\end{aligned} \end{equation}}
\newcommand{\ddt}[1]{\frac{\mathrm{d}#1}{\mathrm{d}t}}
\newcommand{\dint}[1]{\mathrm{d}#1}
\newcommand{\kmax}{k_{\mathrm{max}}}
\newcommand{\dpt}{{\tilde{p}}}
\newcommand{\dd}{{\tilde{p}}}
\newcommand{\EX}[1]{\mathbb{E}\left[#1\right]}
\newcommand{\myvec}[1]{ \mathbf{#1} }
\newcommand{\mymat}[1]{ \boldsymbol{#1} }
\newcommand{\mygmat}[1]{ \boldsymbol{#1} }
\newcommand{\mygvec}[1]{ \boldsymbol{#1} }
\newcommand{\ones}{\mathbf{1}}
\newcommand{\TR}[1]{#1^{\!\top}}
\newcommand{\bZ}{\mathbf{Z}}
\newtheorem{prop}{Proposition}
\numberwithin{equation}{section}
\begin{document}
\title{\sc Heterogeneous network epidemics: real-time growth, variance and extinction of infection}
\date{}

\author{Frank Ball \and Thomas House}

\maketitle

\begin{abstract}
Recent years have seen a large amount of interest in epidemics on networks as a
way of representing the complex structure of contacts capable of spreading
infections through the modern human population.  The configuration model is a
popular choice in theoretical studies since it combines the ability to specify
the distribution of the number of contacts (degree) with analytical
tractability. Here we consider the early real-time behaviour of the Markovian
SIR epidemic model on a configuration model network using a multitype
branching process. We find closed-form analytic expressions for the mean and
variance of the number of infectious individuals as a function of time and the
degree of the initially infected individual(s), and write down a system of
differential equations for the probability of extinction by time $t$ that are numerically
fast compared to Monte Carlo simulation. We show that these quantities are all
sensitive to the degree distribution -- in particular we confirm that the
mean prevalence of infection depends on the first two moments of the degree
distribution and the variance in prevalence depends on the first three moments
of the degree distribution.  In contrast to most existing analytic approaches,
the accuracy of these results does not depend on having a large number of
infectious individuals, meaning that in the large population limit they would
be asymptotically exact even for one initial infectious individual.\\

\noindent{}Keywords: SIR epidemic; Configuration model; Branching process
\end{abstract}

\section{Introduction}
\label{sec:intro}
\subsection{Background}
\label{sec:background}

Models of infectious disease transmission have, from relatively modest
beginnings (e.g.~Bailey~\cite{Bailey:1957}), developed a rich domain of
applicability covering the whole spectrum of human, animal and plant pathogens,
and informing the study of questions from viral evolution, through epidemiology
of infectious diseases, to public health policy (see Heesterbeek {\it et
al.}~\cite{Heesterbeek:2015}). Increasingly, networks have been seen as a way
of modelling the complex, heterogeneous patterns of contacts between
individuals (Danon {\it et al.}~\cite{Danon:2011}).

In theoretical studies, the configuration model has been a popular choice due
to the ability to specify the number of contacts each individual has that are
capable of spreading disease, while allowing for analytic results to be
obtained (e.g.~Molloy and Reed~\cite{Molloy:1995} and Newman~\cite{Newman:2002}).
Ball and Neal~\cite{Ball:2008} used an effective degree approach (which we
describe in Section~\ref{subsec:effectivedegree} below -- cf.\ Lindquist {\it et
al.}~\cite{Lindquist:2010}) to derive a system of ordinary differential
equations that describes the deterministic limit of the epidemic model as the
population size $N \to \infty$. A much simpler (equivalent) system of only $4$
ordinary differential equations was obtained by Volz~\cite{Volz:2008} and
subsequently shown by Miller {\it et al.}~\cite{Miller:2011,Miller:2012} to be
essentially one-dimensional (the 4 ODEs were also shown by House and
Keeling~\cite{House:2010} to be a special case of the much higher dimensional
pair approximation model of Eames and Keeling~\cite{Eames:2002}, in which the
degree structure is explicit). Fully rigorous proofs of convergence in
probability of the scaled stochastic model to the deteministic limit are given
by Decreusefond {\it et al.}~\cite{Decreusefond:2012}, Bohman and
Picollelli~\cite{Bohman:2012}, Barbour and Reinert~\cite{Barbour:2013} and
Janson {\it et al.}~\cite{Janson:2014}. These works are primarily
concerned with the temporal behaviour of \emph{proportions} of the population
in different epidemiological compartments (susceptible, infectious and removed)
over the main body of a large epidemic. Here, we are also concerned with
temporal behaviour, but focus on \emph{numbers} infected early in the epidemic,
including the possibility of early stochastic extinction.

In a recent paper, Graham and House~\cite{Graham:2014} use a pairwise
approximation in conjunction with the central limit theorem for density
dependent population processes (Ethier and Kurtz~\cite{Ethier:1986}, Chapter
11) to obtain a closed-form approximation to the mean and variance of
prevalence in the linearised model which approximates the early asymptotic
exponential growth phase of a Markovian SIR epidemic on a configuration network.
In particular, they find that, under these approximations, the variance in
disease prevalence is determined by the first three moments of the network
degree distribution. In this paper, we use the effective degree approach of
Ball and Neal~\cite{Ball:2008} to approximate the early stages of the epidemic
by a continuous-time, multitype Markovian branching process, which is then
analysed in detail.  For $t\ge 0$, let $Z(t)$ denote the total number of
individuals alive in this branching process at time $t$, so $Z(t)$
approximates disease prevalence in the epidemic model during its early
asymptotic exponential growth phase.  Explicit closed-form expressions are
derived for the mean and variance of $Z(t)$, the covariance of $Z(t)$ and
$Z(s)$ to give the behaviour over time, and also for the probability of
extinction $\pi(t) = \mathbb{P}(Z(t)=0)$.  As in Graham and
House~\cite{Graham:2014}, the mean and variance in disease prevalence depends on the
degree distribution only through its first two and three moments, respectively.

The results in Graham and House~\cite{Graham:2014} assume implicitly that the
initial number of infectives is sufficiently large for the density dependent
population process central limit theorem to yield a good approximation.  In
contrast, our results assume any arbitrary, but specified, initial number of
infectives.  The  asymptotic distribution of types in the branching process,
when it does not go extinct, is also available in closed-form and enables us to
obtain a Gaussian process approximation, with explicit mean and covariance
function, for the prevalence in the early asymptotic exponential growth phase
of an SIR epidemic, with few initial infectives, which takes off and becomes
established. We show that this approximation can be applied together with the
methods of Ross {\it et al.}~\cite{Ross:2006} to estimate
epidemiological parameters from early prevalence data of a simulated epidemic
provided the first three moments of the degree distribution are known.

\subsection{Outline of the paper}

The paper is organised as follows.  The configuration network model and a
Markov SIR epidemic on that network are described in
Section~\ref{subsec:model}.  The effective degree construction of this epidemic
is outlined in Section~\ref{subsec:effectivedegree}.  Approximation of the
early stages of this epidemic by a branching process is outlined in
Section~\ref{subsec:BP}, where conditions are given for the mean, variance and
covariance functions of the number of infectives in the epidemic process to
converge to the corresponding quantities of the approximating branching process
as the population size tend to infinity.   The representation of the
approximating branching process as a continuous-time, multitype Markov
branching process is outlined in Section~\ref{subsec:BP} and described more
explicitly in Section~\ref{subsec:BPexplicit}.  The mean, variance and covariance
functions of the total number of individuals alive in the branching process are
considered in Sections~\ref{sec:means},~\ref{sec:var} and~\ref{sec:cov},
respectively.  Explicit closed-form expressions are obtained for each of these
quantities and for their limits as time $t \to \infty$.  The arguments in
Sections~\ref{sec:means},~\ref{sec:var} and~\ref{sec:cov} assume that
underlying degree distribution has a maximum degree.  In
Section~\ref{sec:unbounded}, we show that these expressions continue to hold in
the unbounded degree setting, subject to the degree distribution satisfying
suitable moment conditions.  The probability that the branching process is
extinct at time $t$ is studied in Section~\ref{sec:ext}.  Closed-form
expressions for this probability, given the initial state of the branching
process, are not available so asymptotic results as $t \to 0$ and $t \to
\infty$ are considered.

The mean, variance and covariance functions derived in
Sections~\ref{sec:means},~\ref{sec:var} and~\ref{sec:cov} are unconditional, so
they include realisations of the branching process which result in
extinction.  However, in the epidemic setting, we are often interested in
analysing the behaviour of epidemics that take off and become established,
which correspond to non-extinction of the branching process.  In
Section~\ref{sec:fluc}, we first derive the mean and variance of the total
number of individuals alive in the branching process at time $t$, conditional
upon the process having survived to time $t$; fully closed-form results are not
available owing to the absence of a closed-form expression for the survival
probability.  We then consider realisations of the branching process which
reach some specified size, $K$ say, with time being set to zero the first time
the total number of individuals alive is $K$.  The results in
Section~\ref{sec:means} yield an explicit expression for the asymptotic
distribution of types, given that the branching process does not go extinct,
which, provided $K$ is sufficiently large, enables the above branching process starting
from $K$ individuals to be approximated by a Gaussian process whose mean and
covariance functions are determined explicitly.  The theory is illustrated by
numerical examples of both forward simulation and inference in
Section~\ref{sec:numerical} and some concluding comments are given in
Section~\ref{sec:conc}.

In general, we define notation as it is introduced; we also collect notation
that is used in multiple sections in Table~\ref{tab:notation}.

\section{Model and approximating branching process}
\label{sec:model-BP}

\subsection{Model}
\label{subsec:model}

We consider the spread of an SIR epidemic on a network of $N$ individuals,
labelled $1,2,\ldots,N$, constructed using the configuration model as follows
(see e.g.\ Newman \cite{Newman:2002}).  Let $D$ be a random variable which
describes the degree of a typical individual and let $p_k=\mathbb{P}(D=k)$
$(k=0,1,\ldots)$.  Let $D_1,D_2,\ldots,D_N$ be independent realisations of $D$
and, for $i=1,2,\ldots, N$, attach $D_i$ stubs (half-edges) to individual $i$.
Pair up these stubs uniformly at random to form the edges in the network. If
$D_1+D_2+\ldots+D_N$ is odd, there will be a left-over stub, which is ignored;
the resulting network may have other `defects' such as self-loops and multiple
edges between pairs of individuals but, provided that $D$ has finite variance,
such imperfections become sparse in the network as $N \to \infty$ (see
e.g.~Durrett~\cite{Durrett:2007}, Theorem 3.1.2).  An alternative to the
degrees $D_1,D_2,\ldots,D_N$ being random is, for each $N=1,2,\cdots$, to replace
$\myvec{D}=(D_1,D_2,\ldots,D_N)$ by
$\myvec{D}^{(N)}=\left(D_1^{(N)},D_2^{(N)},\ldots,D_N^{(N)}\right)$, where the degree
sequences $\myvec{D}^{(N)}$ $(N=1,2,\ldots)$ are prescribed and satisfy
$p_k^{(N)}=N^{-1} \sum_{i=1}^N \delta_{k,D_i^{(N)}} \to p_k$ as $N \to \infty$ $(k=0,1,\ldots)$,
where the Kronecker delta $\delta_{k,j}$ is 1 if $k=j$ and $0$ otherwise (see  e.g.~Molloy and
Reed~\cite{Molloy:1995}).

The epidemic is defined as follows.  Initially some individuals are infective
and the remaining individuals are susceptible.  Infective individuals have
independent infectious periods, each having an exponential distribution with
rate $\gamma$ (and hence mean $\gamma^{-1}$), after which they become recovered
and play no further role in the epidemic.  Throughout its infectious period, an
infective contacts each of its susceptible neighbours in the network at the
points of independent Poisson processes having rate $\tau$, so the probability
that a given infective contacts a given neighbour before the infective recovers
is $\tau/(\gamma+\tau)$.  Any contacted susceptible immediately becomes
infective and may transmit the infection to any of its neighbouring
susceptibles; i.e.~there is no latent period.  All the infectious periods and
Poisson processes governing transmission of infection are mutually independent.
The epidemic ends as soon as there is no infective present in the network.

\subsection{The effective degree model}
\label{subsec:effectivedegree}

Ball and Neal~\cite{Ball:2008} introduced an `effective-degree' construction of
the above epidemic, in which the network is constructed as the epidemic
progresses.  The process starts with some individuals infective and the
remaining individuals susceptible, but with none of the stubs paired up.  For
$i=1,2,\ldots,N$, the effective degree of individual $i$ is initially $D_i$.
Infected individuals transmit infection by pairing their stubs with stubs
attached to susceptible individuals in the following fashion.  An infected
individual makes infectious contacts down its unpaired stubs independently at
rate $\tau$ and is removed at rate $\gamma$. When an infective, individual $i$
say, transmits infection down a stub that stub is paired with a stub (attached
to individual $j$, say) chosen independently and uniformly at random from all
the unpaired stubs, to form an edge.  The effective degrees of individuals $i$
and $j$ are both reduced by $1$.  If $i=j$ then the effective degree of
individual $i$ is reduced by $2$ but this will not significantly affect the
dynamics for large populations since the probability of it happening is
$O(N^{-1}$).  If individual $j$ is susceptible then it becomes infective and
can transmit infection down any of its unattached stubs.  As before, the
epidemic ends as soon as there is no infective present.  The network is then
typically only partially constructed but that does not matter if interest is
focussed on properties of the epidemic.  In the original formulation of Ball
and Neal~\cite{Ball:2008}, when an infective recovers its unpaired stubs, if
any, were paired with stubs chosen uniformly at random without replacement from
the set of unpaired stubs but that is unnecessary; the stubs from such an
infective can simply be left in the set of unpaired stubs.

\subsection{Approximating multitype branching process}
\label{subsec:BP}

Suppose that the size $N$ of the network is large and the initial number of
infectives is small.  Then during the early stages of an epidemic it is very
likely that each time an infective individual transmits infection down a stub
that stub is paired with a stub belonging to a susceptible individual.  It
follows that the early stages of such an epidemic can be approximated by a
branching process in which each newly-infected individual has their ``full"
effective degree (i.e. their actual degree minus one for the stub that is
paired with their infector).  This approximation can be made fully rigorous by
considering a sequence of epidemics, indexed by $N$, and using a coupling
argument; see e.g.~Ball and Neal~\cite{Ball:2008}, which treats a more general
model in which infective individuals also make contacts with individuals chosen
uniformly at random from the population.  Let $E_N$ denote the epidemic on a
network of $N$ individuals and let $\mathcal{B}$ denote the approximating
branching process.  Then following  Ball and Neal~\cite{Ball:2008} (see
Appendix A) if $\mu_D=\EX{D}$ is finite then the epidemics $E_1,E_2\dots$ and
the branching process $\mathcal{B}$ can be constructed on a common probability
space so that, with probability one, over any finite time interval $[0,t]$ the
process of infectives in $E_N$ and the branching process $\mathcal{B}$ coincide
for all sufficiently large $N$.  The same result holds for the model with
prescribed degree sequences provided that $p^{(N)}_k \to p_k$ $(k=0,1,\ldots)$
and $\mu_D^{(N)}=\sum_{k=0}^\infty p^{(N)}_k k \to \mu_D$ as $N \to \infty$,
where $\sum_{k=0}^{\infty} p_k=1$ and $\mu_D < \infty$.

As indicated in Appendix~\ref{app:moments}, the branching process $\mathcal{B}$
is not an almost sure upperbound for the process of infectives in $E_N$, so
unlike in Theorem 3.1 of Ball and Donnelly~\cite{Ball:1995} which considers
homogeneously mixing epidemics, one cannot simply use the dominated convergence
theorem to deduce convergence of moments of the number of infectives in the
epidemic process to corresponding moments of the branching process as $N \to
\infty$.   If there is a maximum degree $\kmax$ (i.e.\ $p_k=0$ for all
$k>\kmax$, or $p_k^{(N)}=0$ for all $k>\kmax$ and all $N$ in the model with
prescribed degrees) then, for all $N$, the process of infectives in $E_N$ is
bounded above by a branching process in which each newly-infected individual
has the maximun effective degree $\kmax-1$, so in that case the dominated
convergence theorem can be used to prove convergence of moments.  In
Appendix~\ref{app:moments}, we consider the case when there is no maximum
degree and use uniform integrability arguments to determine sufficient
conditions for the mean and variance of the number of infectives at any given
time $t \ge 0$, and the covariance of the number of infectives at any given
times $t,s \ge 0$, in the epidemic $E_N$ to converge to the corresponding mean,
variance and covariance of the branching process $\mathcal{B}$ as $N \to
\infty$.  Specifically, we prove that (i) in the model with prescribed degrees
these moments converge if, in addition to the conditions given above, there
exists $\delta>0$ such that $\mu_{D^{3+\delta}}^{(N)}=\sum_{k=0}^\infty
p^{(N)}_k k^{3+\delta} \to \mu_{D^{3+\delta}}=\sum_{k=0}^\infty p_k
k^{3+\delta} $ as $N \to \infty$, where $\mu_{D^{3+\delta}} < \infty$; and (ii)
in the model with random degrees they converge if the moment-generating
function $M_{D^2}(\theta)=\EX{\exp \left(\theta D^2\right)}$ of $D^2$ is finite
for some $\theta_0>0$. Note that the latter condition implies that
$\EX{D^\alpha}< \infty$ for all $\alpha \ge 0$.

In this context, we note that, for the model with prescribed degrees, the
weakest conditions obtained on the moments of the degree distribution for
convergence of the scaled stochastic epidemic on to its deterministic limit are
given by Janson {\it et al.}~\cite{Janson:2014}, who require uniform
boundedness of the second moment of $\myvec{D}^{(N)}$.  However that paper, and
the other related papers cited in the second paragraph of
Section~\ref{sec:background}, (i) are concerned with the entire time course of
the epidemic; (ii) assume that either the epidemic starts with a positive
fraction of the population infected in the limit as $N \to \infty$, or if that
limiting fraction is zero then the convergence is for epidemics which take off
and involves a random time translation describing when the epidemic becomes
suitably established; and (iii) consider the evolution of the proportion of the
population that is susceptible, infective or recovered.  By contrast, this
paper is concerned with epidemics initiated by few infectives and considers the
number, rather than proportion, of infectives during the early phase of such an
epidemic.  Under the coupling mentioned above, in the limit as $N \to \infty$,
if an epidemic takes off then the duration of its early (exponentially growing)
phase tends almost surely to infinity.

The limiting branching process may be described by a continuous-time multitype
Markov branching process, with the type of an infective corresponding to its
effective degree.  Let $\tilde{D}$ denote the (size-biased) degree of a typical
neighbour of a typical individual in the network and let
$\tilde{p}_k=\mathbb{P}(\tilde{D}=k)$ $(k=1,2,\ldots)$. Then
$\tilde{p}_k=\mu_D^{-1} k p_k$, where $\mu_D=\EX{D}$, since when a stub is
paired it is $k$ times as likely to be paired with a stub from a given
individual having degree $k$ than it is with a stub from a given individual
having degree $1$.  Under the branching process approximation, the effective
degree of a newly infected individual is distributed according to
$\tilde{D}-1$, since one of that individual's stubs is `used up' when it is
infected.  Note for future reference that $\mu_{\tilde{D}}=\mu_D^{-1}\mu_{D^2}$
and, more generally, $\mu_{f(\tilde{D})}=\mu_D^{-1} \mu_{Df(D)}$ for any
real-valued function $f$.  (For a random variable, $X$ say, we use $\mu_X$ to
denote its expectation $\EX{X}$.  Thus, for example, $\mu_{Df(D)}=\EX{Df(D)}$.)

\subsection{Explicit form for the multitype branching process}
\label{subsec:BPexplicit}

We now assume that there is a maximum degree $\kmax$.
We show in Section~\ref{sec:unbounded} that our results for moments of the branching process
extend to the case of no maximum degree size, subject to suitable moment conditions on $D$.
Thus
the type space for the branching process is $\mathcal{K}=\{0,1,\ldots,\kmax\}$.
Note that only initial infectives can have type $\kmax$.  For
$k\in\mathcal{K}$, an individual of type $k$ dies if either its infectious
period comes to an end or it transmits infection down one of its unattached
stubs, whichever happens first.  If the former happens first then the
individual has no offspring, otherwise it has two offspring, namely an
individual of type $k-1$ and an individual whose type is distributed according
to  $\tilde{D}-1$.  Note that an individual of type $0$ necessarily has no
offspring when it dies.  Thus, for $k\in\mathcal{K}$, the lifetime of an
individual of type $k$ has an exponential distribution with rate
\be
\label{omegak}
 \omega_k = \gamma + \tau k \text{ ,}
\ee
and when it dies its offspring is distributed as follows (recall that
$\dpt_{\kmax+1}=0$):
\ba
\mathbb{P}\left(\mathrm{Offspring} = \varnothing | \mathrm{Parent\ type} = k \right) &
 = \frac{\gamma}{\gamma + \tau k} \text{ ,} \\
 \mathbb{P}\left(\mathrm{Offspring} = \{k-1,l\} | \mathrm{Parent\ type} = k \right) &
 = \frac{\tau k \dpt_{l+1}}{\gamma + \tau k} \quad(l=0,1,\ldots,\kmax-1) \text{ .}
\ea
The joint probability-generating function (PGF) for offspring of a type-$k$
individual is therefore
\be
\label{PGFPk}
P_k(\myvec{s}) = \frac{1}{\omega_k} \left(
	\gamma + \tau k s_{k-1} \sum_{l=0}^{\kmax-1} \dpt_{l+1} s_l
\right) \text{ ,}
\ee
where $\myvec{s}=(s_k)$. In general we will write $\myvec{v} = (v_k) =
\TR{(v_0,v_1,\ldots,v_{\kmax})}$ for a column vector in $\mathbb{R}^{\kmax+1}$,
where $^\top$ denotes transpose. Verbally, we will call $v_0$ the $0$-th element of
such a vector, $v_1$ the first element etc. For $k\in\mathcal{K}$, let
$\mygvec{\partial}P_k(\myvec{s})$ be the column vector whose $i$-th element is
$\frac{\partial P_k(\myvec{s})}{\partial s_i}$ and let $\mygmat{\partial}^2
P_{k}(\myvec{s})$ be the matrix whose $(i,j)$-th element is $\frac{{\partial}^2
P_{k}(\myvec{s})}{\partial s_i \partial s_j}$.  We note for future reference
that, for $i,j,k\in\mathcal{K}$,
\ba
\label{fac2off}
\left(\mygvec{\partial}P_k(\ones)\right)_i&=\frac{\tau k}{\gamma+\tau k}\left(\dpt_{i+1}+\delta_{k-1,i}\right) \text{ ,} \\
\left[\mygmat{\partial}^2 P_{k}(\ones)\right]_{i,j}&=\frac{\tau k}{\gamma+\tau k}\left(\dpt_{i+1}\delta_{k-1,j}+\dpt_{j+1}\delta_{k-1,i}\right) \text{ ,}
\ea
where $\ones$ is the length-$(\kmax+1)$ column vector of ones.\\

\noindent{}For $t \ge 0$, let $\bZ(t)=\left(Z_i(t)\right)$, where $Z_i(t)$
denotes the number of individuals of type $i$ alive at time $t$, and let
$Z(t)=Z_0(t)+Z_1(t)+\ldots+Z_{\kmax}(t)= \TR{\ones}\bZ(t) $ denote the total
number individuals alive at time $t$.  For $k\in\mathcal{K}$, we use the
notation $\{\bZ^{(k)}(t):t \ge 0\}$, where $\bZ^{(k)}(t)=(Z^{(k)}_i(t))$, to
denote a process starting with a single individual, whose type is $k$, at time
$0$ (i.e.\ where $Z_i^{(k)}(0) = \delta_{i,k}$, $i \in \mathcal{K}$).  Further,
$Z^{(k)}(t)=\TR{\ones}{\bZ^{(k)}(t)}$ denotes the total number of individuals
at time $t$ in such a process.

\section{Behaviour of means}
\label{sec:means}
In the next three sections we consider the behaviour of the mean, variance and covariance function of the total number of individuals over time in the branching process which approximates the initial phase of an epidemic.
For $t \ge 0$ and $i,j,k\in\mathcal{K}$, let
\ba
\nonumber
\mymat{M}(t)& =[m_{i,j}(t)] \text{ , where}\quad
m_{i,j}(t)=\EX{Z^{(i)}_j(t)}\text{ ,} \\
\myvec{m}^{(k)}(t)& =\EX{\bZ^{(k)}(t)}=\TR{\mygmat{M}(t)}{\myvec{u}_k}
\text{ ,} \\
m^{(k)}(t)& =\EX{Z^{(k)}(t)}=\TR{\ones}\myvec{m}^{(k)}(t)
=\TR{\myvec{u}_k} \mymat{M}(t)\myvec{1}\text{ ,}
\ea
where $\myvec{u}_k$ is a length-$(\kmax+1)$ column vector with $k$-th element
equal to 1 and other elements equal to 0. A standard argument using the
Kolmogorov forward equation (see e.g.~Dorman {\it et al.}~\cite{Dorman:2004},
Section 7 and recall that $\dd_{\kmax+1}=0$) then yields that
\be
\label{Meq}
\ddt{}\mygmat{M}(t)=\mygmat{M}(t) \mygmat{\Omega}\text{ ,}\qquad \mygmat{M}(0)=\mygmat{I}\text{ ,}
\ee
where $\mygmat{I}$ denotes the $(\kmax+1) \times (\kmax+1)$ identity matrix and $\mygmat{\Omega}=[\Omega_{l,k}]$ is the
$(\kmax+1) \times (\kmax+1)$ matrix with elements given by
\be
\nonumber
\Omega_{l,k} = \tau l \left(\dd_{k+1} + \delta_{l,k+1}\right)
 -(\gamma + \tau l) \delta_{l,k} \qquad (l,k \in \mathcal{K}) \text{ .}
\ee
The solution to~\eqref{Meq} is then straightforwardly given by
\be
\label{meanM}
\mygmat{M}(t)={\rm e}^{\mygmat{\Omega}t}
=\sum_{l=0}^\infty \frac{t^l \mygmat{\Omega}^l}{l!} \text{ .}
\ee
We show in Appendix~\ref{app:eigenvalues} that the eigenvalues of $\mygmat{\Omega}$ are
\be
\label{eigomega}
\lambda_i = \begin{cases}
	- \gamma - i \tau & \text{ for } i \in \{0,2,3,\ldots,\kmax\} \text{ ,}\\
	\tau(({\textstyle \sum_{l=0}^{\kmax}} l \dd_{l+1}) -1)  - \gamma & \text{ for }
	i = 1 \text{ .}
\end{cases}
\ee
We denote the dominant eigenvalue, $\lambda_1$, by $r$, so
\be
\label{growthrate}
r= \tau(({\textstyle \sum_{l=0}^{\kmax}} l \dd_{l+1}) -1)  - \gamma
= \tau \mu_{\tilde{D}-2}  - \gamma \text{ .}
\ee
If $r \le 0$, the branching process $\{\bZ(t):t \ge 0\}$ goes extinct almost
surely.  If $r>0$, then $r$ gives the asymptotic exponential growth rate of
$\{Z(t):t \ge 0\}$ (and also of $\{Z_i(t):t \ge 0\}$ for $i \in \mathcal{K} \setminus \{\kmax\}$) when $\{\bZ(t):t \ge 0\}$ does not go extinct.\\

\noindent{}For $i\in \mathcal{K}$, let $\TR{\myvec{w}_i} = (w_{i,k})$ be a
left eigenvector of $\mygmat{\Omega}$ corresponding to the eigenvalue
$\lambda_i$, so
\be
\label{leig}
\TR{\myvec{w}_i} \mygmat{\Omega} = \lambda_i \TR{\myvec{w}_i} \text{ .}
\ee
The Perron-Frobenius theory implies that $\myvec{w}_1$ can be chosen so that
all of its elements are positive and $\TR{\myvec{w}_1} \ones=1$.  The
left-eigenvector $\myvec{w}_1$ then yields a probability distribution which
gives the asymptotic relative frequencies of the different types, as $t \to
\infty$, when $\{\bZ(t):t \ge 0\}$ does not go extinct.\\

\noindent{}Expanding~\eqref{leig} in components yields
\be
\label{leigcomp}
\sum_{l=0}^{\kmax} w_{1,l}  \left(
\tau l \left(\dd_{k+1} + \delta_{l,k+1}\right)
-(\gamma + \tau l) \delta_{l,k} \right) = r w_{1,k} \quad (k\in\mathcal{K}) \text{ .}
\ee
Let $w(s)=\sum_{l=0}^{\kmax} s^l w_{1,l}$ $(s \ge 0)$ denote the
(probability-)generating function of $\myvec{w}_1$.
Multiplying~\eqref{leigcomp} by $s^k$ and summing over $k$ yields
\be
\label{diffw}
\tau f_{\tilde{D}-1}(s)\mu_W+\tau(1-s) w'(s)=(r+\gamma)w(s)\text{ ,}
\ee
where $f_{\tilde{D}-1}(s)=\sum_{k=1}^{\kmax} \dd_k s^{k-1}$ is the PGF of
$\tilde{D}-1$ and $\mu_W=\sum_{k=0}^{\kmax} k w_{1,k}$ is the mean of the
distribution $\myvec{w}_1$.  Setting $s=1$ in~\eqref{diffw} and
using~\eqref{growthrate} yields
\be
\label{meanW}
\mu_W=\frac{r+\gamma}{\tau}=\mu_{\tilde{D}-2}\text{ .}
\ee

Note that~\eqref{meanW} has a simple intuitive explanation.  For large $t$, a
typical individual gives birth at rate $\sum_{l=0}^{\kmax}w_{1,l} l \tau=\tau\mu_W$ and dies
completely (i.e.~without producing any offspring) at rate $\gamma$, so the
population growth rate $r=\tau\mu_W-\gamma$ and~\eqref{meanW} follows using~\eqref{growthrate} .

For $i,k \in \mathbb{Z}^+$, let $k_{[i]}=k(k-1)\ldots(k-i+1)$ denote a falling
factorial, with the convention $k_{[0]}=1$.  For $i=0,1,\ldots$, let
$\mu_W^{[i]}=\sum_{k=0}^{\kmax} k_{[i]} w_{1,k}$ be the $i$th factorial-moment
of the distribution $\myvec{w}_1$, so $\mu_W^{[0]}=1$ and $\mu_W^{[1]}=\mu_W$.
Note that $\mu_W^{[i]}=w^{(i)}(1)$ $(i=0,1,\ldots)$, where
$w^{(i)}(s)$ denotes the $i$-th derivative of $w(s)$.  Repeated differentiation
of~\eqref{diffw} yields
\be
\label{w1fac}
\mu_W^{[i]}=\frac{\mu_{\tilde{D}-2}}{\mu_{\tilde{D}-2+i}}\mu_{\tilde{D}-1}^{[i]}\text{ ,}
\ee
where $\mu_{\tilde{D}-1}^{[i]}$ is the $i$-th factorial-moment of $\tilde{D}-1$.
Note that $\mu_{\tilde{D}-1}^{[i]}=0$ for $i \ge \kmax$.  It then follows,
using the inversion formula which expresses the probability mass function of a
non-negative integer-valued random variable in terms of its factorial-moments
(see e.g.~Daley and Vere-Jones~\cite{Daley:1988}, page 117), that
\be
\label{w1vec}
w_{1,k}=\begin{cases}
	\sum_{i=k}^{\kmax -1} (-1)^{i-k} {i \choose k}\frac{\mu_{\tilde{D}-2} \mu_{\tilde{D}-1}^{[i]}}{i! \mu_{\tilde{D}-2+i}} & \text{ if } k=0,1,\ldots,\kmax -1 \text{ ,}\\
	0 & \text{ if }
	k = \kmax \text{ .}
\end{cases}
\ee
Observe that $w_{1,\kmax}=0$ since only initial infectives can have type $\kmax$.
Observe also that $\myvec{w}_1$ is determined just by the degree distribution of the
network and is invariant to the epidemic parameters $\gamma$ and $\tau$.\\

\noindent{}For $t\ge 0$, let $\myvec{m}(t)=(m^{(i)}(t))=\TR{(\mymat{M}(t)\ones)}$.
Thus the $k$-th element of $\myvec{m}(t)$ contains the mean total population
size at time $t$ given that the process starts with a single individual whose
type is $k$.  We derive a simple expression for $\myvec{m}(t)$.  The following
proposition is useful.
\begin{prop}
\label{prop:matid}
	For a matrix $\mymat{M}$ and vectors $\myvec{x}$, $\myvec{y}$ such that
	$\mymat{M} \myvec{x} = a \myvec{x} + b \myvec{y}$ and $\mymat{M} \myvec{y} = c \myvec{y}$,
	where $a,b$ and $c$ are scalars satisfying $a \ne c$,
	\be
	{\rm e}^{\mymat{M}t} \myvec{x} = {\rm e}^{at} \myvec{x} + \frac{b}{a-c}
	\left({\rm e}^{at} - {\rm e}^{ct}\right) \myvec{y}  \qquad \mbox{and}\qquad
	{\rm e}^{\mymat{M}t} \myvec{y} = {\rm e}^{c t} \myvec{y} \text{ .}
	\ee
\end{prop}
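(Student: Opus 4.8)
The plan is to verify both identities directly from the power-series definition ${\rm e}^{\mymat{M}t}=\sum_{l=0}^{\infty}\frac{t^l}{l!}\mymat{M}^l$, so that no assumption on diagonalisability or invertibility of $\mymat{M}$ is needed. I would begin with the second identity, which is immediate: since $\mymat{M}\myvec{y}=c\myvec{y}$, an easy induction gives $\mymat{M}^l\myvec{y}=c^l\myvec{y}$ for all $l\ge 0$, and hence ${\rm e}^{\mymat{M}t}\myvec{y}=\sum_{l=0}^{\infty}\frac{(ct)^l}{l!}\myvec{y}={\rm e}^{ct}\myvec{y}$.

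For the first identity, the key step is to compute $\mymat{M}^l\myvec{x}$ explicitly. Using $\mymat{M}\myvec{x}=a\myvec{x}+b\myvec{y}$ together with $\mymat{M}\myvec{y}=c\myvec{y}$, I would prove by induction on $l$ that
\be
\mymat{M}^l \myvec{x} = a^l \myvec{x} + b\left(\sum_{j=0}^{l-1} a^{l-1-j}c^{j}\right)\myvec{y}\qquad(l\ge 0),
\ee
with the empty sum (for $l=0$) interpreted as zero. The inductive step is a one-line calculation: applying $\mymat{M}$ turns $a^l\myvec{x}$ into $a^{l+1}\myvec{x}+a^l b\myvec{y}$ and multiplies the existing $\myvec{y}$-coefficient by $c$, and the two resulting $\myvec{y}$-contributions combine to give exactly the sum with $l$ replaced by $l+1$. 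This is the point at which the hypothesis $a\ne c$ becomes useful, since it lets me rewrite the geometric sum in closed form as $\sum_{j=0}^{l-1}a^{l-1-j}c^{j}=\frac{a^l-c^l}{a-c}$, so that $\mymat{M}^l\myvec{x}=a^l\myvec{x}+\frac{b}{a-c}(a^l-c^l)\myvec{y}$ for every $l\ge 0$ (the $l=0$ term vanishing consistently).

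It then remains to substitute this into the power series and sum term by term. Splitting the result into its $\myvec{x}$- and $\myvec{y}$-parts gives
\be
{\rm e}^{\mymat{M}t}\myvec{x}=\left(\sum_{l=0}^{\infty}\frac{(at)^l}{l!}\right)\myvec{x}+\frac{b}{a-c}\left(\sum_{l=0}^{\infty}\frac{(at)^l}{l!}-\sum_{l=0}^{\infty}\frac{(ct)^l}{l!}\right)\myvec{y},
\ee
and recognising the two exponential series yields ${\rm e}^{\mymat{M}t}\myvec{x}={\rm e}^{at}\myvec{x}+\frac{b}{a-c}({\rm e}^{at}-{\rm e}^{ct})\myvec{y}$, as required; the rearrangement of the double series is justified by absolute convergence.

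I do not expect any genuine obstacle here: the result is an elementary fact about the invariant two-dimensional subspace $\mathrm{span}\{\myvec{x},\myvec{y}\}$, on which $\mymat{M}$ acts by the lower-triangular matrix $\left(\begin{smallmatrix}a&0\\b&c\end{smallmatrix}\right)$. The only point requiring care is the correct bookkeeping of the geometric sum and the explicit use of $a\ne c$ when summing it. As a cross-check, one may instead note that $\myvec{g}(t)={\rm e}^{\mymat{M}t}\myvec{x}$ solves the linear ODE $\myvec{g}'(t)=a\myvec{g}(t)+b\,{\rm e}^{ct}\myvec{y}$ with $\myvec{g}(0)=\myvec{x}$ (using that $\mymat{M}$ commutes with ${\rm e}^{\mymat{M}t}$ and the already-established second identity), and solve it with the integrating factor ${\rm e}^{-at}$; this reproduces the same closed form.
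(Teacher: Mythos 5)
Your proof is correct and follows essentially the same route as the paper's own: both expand ${\rm e}^{\mymat{M}t}$ as a power series, establish $\mymat{M}^l\myvec{x}=a^l\myvec{x}+b\bigl(\sum_{j=0}^{l-1}a^{l-1-j}c^{j}\bigr)\myvec{y}$ by induction, sum the geometric series in closed form using $a\ne c$, and recognise the exponential series. Your write-up is in fact slightly tidier (the closed form $\frac{a^l-c^l}{a-c}$ avoids the paper's division by $c$, and you note the absolute-convergence justification), but there is no substantive difference in method.
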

\begin{proof}
	The second identity follows straightforwardly from the definition of the matrix exponential and
	the fact that $\myvec{y}$ is a right eigenvector with eigenvalue $c$. For the first identity,
	\ba
	{\rm e}^{\mymat{M} t} \myvec{x} & = \sum_{i=0}^{\infty} \frac{1}{i!}(\mymat{M}t)^i \myvec{x} \\
 & = \sum_{i=0}^{\infty} \frac{t^i}{i!} \left( a^i \myvec{x} +
	\sum_{j=0}^{i-1} b c^{i-1-j}a^j \myvec{y} \right) \\
	& = {\rm e}^{at}\myvec{x} + \sum_{i=0}^{\infty} \frac{t^i}{i!}
b c^{i-1} \frac{\left(\frac{a}{c}\right)^i -1}{\left(\frac{a}{c}\right)-1} \myvec{y} \\
& = {\rm e}^{at} \myvec{x} + \frac{b}{a-c}
	\left({\rm e}^{at} - {\rm e}^{ct}\right) \myvec{y} \text{ .}
	\ea
\end{proof}

\noindent{}Let $\myvec{n}=\TR{(0,1,\ldots,\kmax)}$.  Observe that
\be
\mygmat{\Omega} \ones = \tau \myvec{n} - \gamma \ones  \qquad \mbox{and}\qquad
\mygmat{\Omega} \myvec{n} = r \myvec{n} \text{ ,} \label{idents}
\ee
so using Proposition~\ref{prop:matid} with $\mymat{M}=\mygmat{\Omega},
\myvec{x}=\ones, \myvec{y}=\myvec{n}, a=-\gamma, b=\tau$ and $c=r$, and
recalling from~\eqref{growthrate} that $r+\gamma=\mu_{\tilde{D}-2}\tau$, we
have
\be
{\rm e}^{\mygmat{\Omega} x} \ones =\mu_{\tilde{D}-2}^{-1}  \left({\rm e}^{rx} - {\rm e}^{-\gamma x}
\right) \myvec{n} + {\rm e}^{-\gamma x} \ones  \qquad \mbox{and} \qquad
{\rm e}^{\mygmat{\Omega} x} \myvec{n} = {\rm e}^{r x}\myvec{n} \text{ .}
\label{n1}
\ee
Thus,
\be
\label{meantot}
\myvec{m}(t)=\mu_{\tilde{D}-2}^{-1}\left({\rm e}^{rt} - {\rm e}^{-\gamma t}
\right) \myvec{n} + {\rm e}^{-\gamma t}\ones \text{ .}
\ee
and
\be
\label{asymmeantot}
\lim_{t \to \infty}{\rm e}^{-r t} \myvec{m}(t)=\mu_{\tilde{D}-2}^{-1} \myvec{n}.
\ee
While it is well known that asymptotically the mean prevalence grows
exponentially with rate constant $r$, i.e.\ that $\mathrm{prevalence}
\propto{\rm e}^{rt}$, these results allow us to see from~\eqref{meantot} that
the rate of convergence to this asymptotic behaviour is $r+\gamma$, and
from~\eqref{asymmeantot} that the constant of proportionality is the degree of
the initially infected individual divided by $\mu_{\tilde{D}-2}$.\\

\noindent{}We also consider the relationship between the equations above and
the diverse ODE approaches to the mean behaviour of the full network epidemic
model. Miller and Kiss~\cite{Miller:2014} consider several such approaches;
their notation can be related to ours by defining
\be
I(t) = \TR{\myvec{m}^{(k)}(t)} \myvec{1} \text{ ,} \qquad
\lambda(t) = \TR{\myvec{m}^{(k)}(t)} \myvec{n} \text{ .}
\label{iladef}
\ee
Substituting~\eqref{iladef} and~\eqref{idents} into~\eqref{Meq} gives
\be
\ddt{I} = \tau \lambda - \gamma I \text{ ,} \qquad
\ddt{\lambda} = r \lambda \text{ .}
\ee
This pair of equations can be derived from various models considered by Miller
and Kiss~\cite[c.f.~Section 3.4.1]{Miller:2014} assuming an initially small
infectious population and negligible susceptible depletion. Therefore, our
results suggest that the ODE approaches to mean behaviour do not require
correction as the infectious population becomes extremely small, and the
typical assumption that $1 \ll I(t=0) \ll N$ for the ODE system to hold may be
too conservative, with $N \gg 1$ being all that is required.

\section{Variance}
\label{sec:var}

The variance in infectious prevalence during the exponentially growing phase of an epidemic was considered in Graham and
House~\cite{Graham:2014}, but using the diffusion limit and an argument about
the neighbourhood around an infective. A branching process limit lets us be
more explicit. For $k\in\mathcal{K}$, let $\myvec{u}_k$ denote the
length-$(\kmax+1)$ column vector whose element corresponding to type $k$ is $1$ and all other
elements are $0$, so ${\myvec{u}_k}=(\delta_{i,k})$.  For $t \ge 0$ and $k \in \mathcal{K}$, let $\sigma_{ij}^{(k)}(t)={\rm cov}\left(Z^{(k)}_i(t),Z^{(k)}_j(t)\right)$ $(i,j \in \mathcal{K})$.

A matrix integrating
factor argument gives
\ba
\mymat{V}^{(k)}(t)=\left[\sigma_{ij}^{(k)}(t)\right] & = \EX{\myvec{Z}^{(k)}(t) \TR{\myvec{Z}^{(k)}(t)}}
-\EX{\myvec{Z}^{(k)}(t)}\EX{\TR{\myvec{Z}^{(k)}(t)}}
\\ & = \int_{0}^{t} {\rm e}^{\TR{\mygmat{\Omega}}(t-u)} \mymat{B}_k(u)
{\rm e}^{{\mygmat{\Omega}}(t-u)} \dint{u}
\text{ ,} \label{Vk}
\ea
where
\ba
\label{C_k}
\mymat{B}_k(t) & = \sum_{l=0}^{\kmax} \left({\rm e}^{\mygmat{\Omega}t}\right)_{k,l} \mymat{C}_l \text{ ,}
\\
\mymat{C}_k & = \omega_k \left(\mygmat{\partial}^2 P_{k}(\ones) +
	\mathrm{diag}(\myvec{f}_k) -
\myvec{u}_k \TR{\myvec{f}_k}-
\myvec{f}_k \TR{\myvec{u}_k}+
\myvec{u}_k \TR{\myvec{u}_k}
\right) \text{ ,} \\
\myvec{f}_k & = \mygvec{\partial}P_k(\ones) \text{ .}
\ea
See Dorman {\it et al.}~\cite{Dorman:2004}, Section 9, and also Athreya and
Ney~\cite{Athreya:1972}, page 203, for details\footnote{There is a
small error in the latter -- in the expression for $b_{jk}^{(i)}$ on page 203
of~\cite{Athreya:1972},
$\delta_{jk}-b_{ij}\delta_{ik}-b_{ik}\delta_{ij}+\delta_{ij}\delta_{ik}$ should
be replaced by $b_{ij}\delta_{jk}-b_{ik}\delta_{ij}-b_{ij}\delta_{ik}$.}.\\

\noindent{}For $t \ge 0$ and $k\in\mathcal{K}$, let $v^{(k)}(t)$ denote the
variance of the total population size at time $t$ given that the process starts
with a single individual, whose type is $k$. Then $v^{(k)}(t)= \TR{\ones}
\mymat{V}^{(k)}(t) \ones$ and it follows using~\eqref{Vk} that
\ba
v^{(k)}(t) & = \int_{0}^{t} \TR{({\rm e}^{\mygmat{\Omega}(t-u)} \ones)}
\mymat{B}_k(u) ({\rm e}^{{\mygmat{\Omega}}(t-u)} \ones) \dint{u}\\
& = \int_{0}^{t} \left(
\mu_{\tilde{D}-2}^{-1} \left({\rm e}^{r(t-u)} - {\rm e}^{-\gamma (t-u)}
\right)\right)^2
\TR{\myvec{n}}\mymat{B}_k(u) \myvec{n}  \; \dint{u} \\
 &\quad + 2 \int_{0}^{t}
\mu_{\tilde{D}-2}^{-1} \left({\rm e}^{r(t-u)} - {\rm e}^{-\gamma (t-u)}
\right){\rm e}^{-\gamma (t-u)}
\TR{\myvec{1}}\mymat{B}_k(u) \myvec{n}  \; \dint{u}\\
 &\quad + \int_{0}^{t}
{\rm e}^{-2 \gamma (t-u)}
\TR{\myvec{1}}\mymat{B}_k(u) \myvec{1} \; \dint{u}  \text{ ,}
\label{vk}
\ea
where we have used the first equation in~\eqref{n1} in deriving the last line.
This quantity has an exact but rather complex closed-form solution, which we give below
and derive in Appendix~\ref{app:variance}.

Let $\myvec{n}_2=\TR{(0^2,1^2,\ldots,{\kmax}^2)}$ and, for $t\ge 0$, let $\myvec{v}(t)=(v^{(i)}(t))$.  Then
\be
\label{vvec1}
\myvec{v}(t)=\alpha_0(t) \ones + \alpha_1(t) \myvec{n} +  \alpha_2(t) \myvec{n}_2\text{ ,}
\ee
where
\ba
\label{alphas}
\alpha_0(t) &= \gamma I_2(t)\text{ ,}\\
\alpha_1(t) &= \gamma \mu_{\tilde{D}-2}^{-1}\left[I_1(t)-I_2(t)+2I_3(t)+\mu_{\tilde{D}-2}^{-1}\mu_{\tilde{D}}^{-1}\mu_{(\tilde{D}-1)^2+1}(I_4(t)-I_5(t))\right]\\
&\quad+\tau\left[I_1(t)+2I_3(t)+\mu_{\tilde{D}-2}^{-2}\mu_{(\tilde{D}-2)^2}I_4(t)\right]\text{ ,}\\
\alpha_2(t) &= \gamma \mu_{\tilde{D}-2}^{-2} I_5(t)\text{ ,}
\ea
with
\ba
\label{integrals}
I_1(t)&=\frac{{\rm e}^{rt}-{\rm e}^{-2\gamma t}}{r+2\gamma}\text{ ,}\\
I_2(t)&=\frac{{\rm e}^{-\gamma t}\left(1-{\rm e}^{-\gamma t}\right)}{\gamma}\text{ ,}\\
I_3(t)&=\frac{{\rm e}^{rt}\left(1-{\rm e}^{-\gamma t}\right)}{\gamma}-I_1(t)\text{ ,}\\
I_4(t)&=\frac{{\rm e}^{rt}\left({\rm e}^{rt}-1\right)}{r}-2I_3(t)-I_1(t) \text{ ,}\\
I_5(t)&=\frac{{\rm e}^{2rt}-{\rm e}^{-(\gamma+2\tau)t}}{2r+2\tau+\gamma}
-2\frac{{\rm e}^{-\gamma t}\left({\rm e}^{rt}-{\rm e}^{-2\tau t}\right)}{r+2\tau}+\frac{{\rm e}^{-\gamma t}\left({\rm e}^{-2\tau t}-{\rm e}^{-\gamma t}\right)}{\gamma-2\tau}\text{ .}
\ea
(In~\eqref{integrals}, if a denominator is zero then the expression is given by the limit as that denominator tends to zero.  For example, if $\gamma=0$ then $I_2(t)=t{\rm e}^{-\gamma t}$, and if $\gamma=2\tau$, the final term in $I_5(t)$ is replaced by $t{\rm
e}^{-2 \gamma t}$.)

\noindent{}Equation~\eqref{vvec1} leads to a rather complex expression for
$\myvec{v}(t)$ in terms of elementary functions.  However, its asymptotic form
as $t \to \infty$ is much simpler.  Note that $\lim_{t \to \infty} {\rm
e}^{-2rt} I_k(t)=0$, for $k=1,2,3$, $\lim_{t \to \infty} {\rm e}^{-2rt}
I_4(t)=\frac{1}{r}$ and  $\lim_{t \to \infty} {\rm e}^{-2rt}
I_5(t)=\frac{1}{2r+2\tau+\gamma}=\frac{1}{2\mu_{\tilde{D}-1}\tau-\gamma}$.
Substituting these limits into~\eqref{vvec1} and~\eqref{alphas} yields
\be
\label{asymvar}
\lim_{t \to \infty} {\rm e}^{-2rt}\myvec{v}(t)=
\frac{1}{\mu_{\tilde{D}-2}^2\left(2\mu_{\tilde{D}-1}\tau-\gamma\right)}
\left[\frac{2\tau \mu_{\tilde{D}-1}\left(\mu_{(\tilde{D}-2)^2} \tau +\gamma\right)}{r}\myvec{n}
+\gamma \myvec{n}_2\right]\text{ .}
\ee
Note that both the asymptotic and exact expressions for $\myvec{v}(t)$ depend
on the degree distribution $D$ only through its first three moments.\\

\noindent{}It follows from~\eqref{asymmeantot} and~\eqref{asymvar} that, for
$k\in\mathcal{K}$,
\be
\label{asymvarmean}
\lim_{t \to \infty}\frac{{\rm var}\left(Z^{(k)}(t)\right)}{\EX{Z^{(k)}(t)}^2}=
\frac{1}{2\mu_{\tilde{D}-1}\tau-\gamma}
\left[\gamma+\frac{2\tau \mu_{\tilde{D}-1}\left(\mu_{(\tilde{D}-2)^2} \tau +\gamma\right)}{kr}\right]
\text{ .}
\ee
We note two features of these results. First, the equations~\eqref{vvec1}
and~\eqref{alphas} involve many rates that are linear combinations of $r$,
$\tau$ and $\gamma$, with the dominant being $2r$ and the subdominant being
$r$.  This leads to complex real-time behaviour as the system approaches its
asymptotic limit. In the diffusion limit, only the dominant and subdominant
rates are present, leading to the same overall rate of convergence
$r$, but other rates are not present~\cite{Graham:2014}. Secondly, the
dependence of the variance on initial conditions is not simple proportionality,
meaning that~\eqref{asymvar} contains terms proportional to both $\myvec{n}$
and $\myvec{n}_2$ (unless $\gamma=0$) and the right-hand side of~\eqref{asymvarmean} depends on
$k$.

\section{Covariance function}
\label{sec:cov}

For $t,s \ge 0$ and $k\in\mathcal{K}$, let $\sigma^{(k)}(t,s)={\rm
cov}\left(Z^{(k)}(t), Z^{(k)}(s)\right)$ denote the covariance of the total
population sizes at times $t$ and $s$ in the branching process which
approximates the early phase of an epidemic, given that the process starts with
a single individual, whose type is $k$.  We assume without loss of generality
that $t \le s$; although this choice does not respect alphabetical order, the
majority of results that follow take $t$ as an argument rather than $s$, and
are therefore more easily read as functions of time.  Then
\ba
\label{sigmast}
\sigma^{(k)}(t,s)&=\EX{{\rm cov}\left(Z^{(k)}(t), Z^{(k)}(s)|\bZ^{(k)}(t)\right)}\\
&\quad+{\rm cov}\left(\EX{Z^{(k)}(t)|\bZ^{(k)}(t)},\EX{Z^{(k)}(s)|\bZ^{(k)}(t)}\right)\text{ .}
\ea
The first term on the right hand side of~\eqref{sigmast} is zero, since
$Z^{(k)}(t)$ is non-random given $\bZ^{(k)}(t)$.
Now, $\EX{Z^{(k)}(t)|\bZ^{(k)}(t)}=\TR{\myvec{1}}\bZ^{(k)}(t)$ and $\EX{Z^{(k)}(s)|\bZ^{(k)}(t)}=\linebreak\TR{\bZ^{(k)}(t)}\mygmat{M}(s-t)\ones$, so
\ba
\label{sigmast1}
\sigma^{(k)}(t,s)&={\rm cov}\left(\TR{\myvec{1}}\bZ^{(k)}(t),\TR{\bZ^{(k)}(t)}\mygmat{M}(s-t)\ones\right)\\
&=\TR{\myvec{1}}\mymat{V}^{(k)}(t)\mygmat{M}(s-t)\ones\\
&=\mu_{\tilde{D}-2}^{-1}\left({\rm e}^{r(s-t)}-{\rm e}^{-\gamma(s-t)}\right)\TR{\myvec{1}}\mymat{V}^{(k)}(t)\myvec{n}+{\rm e}^{-\gamma(s-t)}v^{(k)}(t)
\text{ ,}
\ea
using~\eqref{meanM}, the first equation in~\eqref{n1} and noting that $\TR{\myvec{1}}\mymat{V}^{(k)}(t)\myvec{1}=v^{(k)}(t)$.  This leads to an exact, closed-form expression for the covariance function in terms of elementary functions, which we state below and derive in Appendix~\ref{app:covariance}.

For $t,s\ge 0$, let
$\myvec{\sigma}(t,s)=\TR{\left(\sigma^{(0)}(t,s),\sigma^{(1)}(t,s),\ldots,\sigma^{(\kmax)}(t,s)\right)}$.  Then
\be
\label{sigmavec}
\myvec{\sigma}(t,s)=\mu_{\tilde{D}-2}^{-1}\left({\rm e}^{r(s-t)}-{\rm e}^{-\gamma(s-t)}\right)\left(\beta_1(t) \myvec{n} +  \beta_2(t) \myvec{n}_2\right)+{\rm e}^{-\gamma(s-t)}\myvec{v}(t)
\text{ ,}
\ee
where
\ba
\label{betas}
\beta_1(t) &= \gamma \mu_{\tilde{D}-2}^{-1}\left[\mu_{\tilde{D}}^{-1}\mu_{(\tilde{D}-1)^2+1}(I_7(t)-I_8(t))+\mu_{\tilde{D}-2}I_6(t)\right]\\
&\quad+\tau\mu_{\tilde{D}-2}^{-1}\left[\mu_{(\tilde{D}-2)^2}I_7(t)+ \mu_{\tilde{D}-2}^2 I_6(t)\right]\text{ ,}\\
\beta_2(t) &= \gamma \mu_{\tilde{D}-2}^{-1} I_8(t)\text{ ,}
\ea
with
\ba
\label{integrals1}
I_6(t)&=\frac{{\rm e}^{rt}\left(1-{\rm e}^{-\gamma t}\right)}{\gamma} \text{ ,}\\
I_7(t)&=\frac{{\rm e}^{rt}\left({\rm e}^{rt}-1\right)}{r}-I_6(t) \text{ ,}\\
I_8(t)&=\frac{{\rm e}^{2rt}-{\rm e}^{-(\gamma+2\tau)t}}{2r+2\tau+\gamma}-\frac{{\rm e}^{-\gamma t}\left({\rm e}^{rt}-{\rm e}^{-2\tau t}\right)}{r+2\tau}
\text{ .}
\ea
(As at~\eqref{integrals}, an appropriate limit is taken if a denominator in~\eqref{integrals1} is zero.)

The covariance function takes a simple form in the limit as $t$ and $s \to \infty$.
Note that $\lim_{t \to \infty} {\rm e}^{-2rt} I_6(t)=0$, $\lim_{t \to \infty} {\rm
e}^{-2rt} I_7(t)=\frac{1}{r}$ and  $\lim_{t \to \infty} {\rm e}^{-2rt}
I_8(t)=\frac{1}{2\mu_{\tilde{D}-1}\tau-\gamma}$.  Substituting these limits
into~\eqref{sigmavec} yields that, for any  $s\ge0$,
\be
\label{asymcov}
\lim_{t \to \infty} {\rm e}^{-2rt}\myvec{\sigma}(t,t+s)={\rm e}^{rs} \lim_{t \to \infty} {\rm e}^{-2rt}\myvec{v}(t)\text{ .}
\ee
It follows that, for $k\in\mathcal{K}$ and $s>0$,
\be
\nonumber
\lim_{t \to \infty} {\rm corr}\left(Z^{(k)}(t),Z^{(k)}(t+s)\right)=1\text{ ,}
\ee
where ${\rm corr}$ denotes correlation.  This is not surprising since it is
well known that
\be
\nonumber
{\rm e}^{-rt} Z^{(k)}(t) \xrightarrow{\text{a.s.}} W^{(k)} \quad\text{as } t \rightarrow \infty\text{ ,}
\ee
where $\xrightarrow{\text{a.s.}}$ denotes almost sure convergence (i.e.
convergence with probability $1$) and $W^{(k)}$ is a non-negative random which
satsifies $W^{(k)}=0$ if and only if the branching process goes extinct; see
e.g.~Athreya and Ney~\cite{Athreya:1972}, Theorem V.7.2.

\section{Unbounded degree distributions}
\label{sec:unbounded}

The above results have all assumed that there is a maximum degree $\kmax$.
Suppose that is not the case, so the branching process has countably many
types.  For $t\ge 0$, let $\bZ(t)=\TR{(Z_0(t), Z_1(t),\ldots)}$,
where $Z_i(t)$ denotes the number of individuals of type $i$ alive at time $t$,
and let $Z(t)=\sum_{i=0}^\infty Z_i(t)$ denote the total number individuals alive
at time $t$. (For ease of notation we drop explict reference to the type of the
initial individual.)  For $\kmax=1,2,\ldots$, let $\{\bZ(t,\kmax):t \ge 0\}$
denote the branching process derived from $\{\bZ(t):t \ge 0\}$ by ignoring all
individuals having type strictly greater than $\kmax$ and any offspring of such
individuals.  For $t \ge 0$, let $Z(t, \kmax)=\TR{\ones}\bZ(t,\kmax)$ be the total number of individuals alive in
$\{\bZ(t,\kmax):t \ge 0\}$ at time $t$.  Now, for any $t \ge 0$, $Z(t, \kmax)$
is monotonically increasing in $\kmax$ and converges almost surely to $Z(t)$ as
$\kmax \to \infty$.  Thus, by the monotone convergence theorem,
$\EX{Z(t)}=\lim_{\kmax \to \infty}\EX{Z(t,\kmax)}$.

The process $\{\bZ(t,\kmax):t \ge 0\}$ behaves like the branching process
described in Section~\ref{subsec:BP} but with infection rate $\tau$ replaced by
$\tau(\kmax)=\tau\mathbb{P}\left(\tilde{D} \le \kmax+1\right)$, and size-biased
degree distribution $\tilde{D}$ replaced by $\tilde{D}(\kmax)$, where
\be
\nonumber
\mathbb{P}\left(\tilde{D}(\kmax)=k\right)= \begin{cases}
	\frac{\dd_k}{\mathbb{P}\left(\tilde{D} \le \kmax+1\right)} & \text{ if } k=1,2,\ldots,\kmax+1 \text{ ,}\\
	0 & \text{ if } k= \kmax+2,\kmax+3,\ldots  \text{ .}
\end{cases}
\ee
The presence of $\kmax+1$ rather than $\kmax$ is because contacts with
individuals having degree strictly greater than $\kmax+1$ are ignored, as they
yield individuals with effective degree (and hence type) strictly greater than
$\kmax$.  Now $\tau(\kmax) \to \tau$ and $\EX{\tilde{D}(\kmax)} \to
\EX{\tilde{D}}$ as $\kmax \to \infty$, so the expression~\eqref{meantot} for
the mean total population size at time $t$ continues to hold in the unbouded
degree case, provided that $\EX{\tilde{D}}<\infty$, or equivalently that
$\EX{D^2}<\infty$.  A similar argument shows that the expressions for the
variance of $Z(t)$ and the covariance of $Z(t)$ and $Z(s)$, derived in
Sections~\ref{sec:var} and~\ref{sec:cov}, respectively, continue to hold
provided $\EX{\tilde{D}^2}<\infty$, or equivalently $\EX{D^3}<\infty$.

\section{Probability of extinction}
\label{sec:ext}

For $t \ge 0$ and $k\in\mathcal{K}$, let
$\pi_k(t)=\mathbb{P}\left(Z^{(k)}(t)=0\right)$ be the probability that the
branching process is extinct at time $t$ given that it starts with one
individual of type $k$.  Then in general
\be
\ddt{}\pi_k(t) = -\omega_k \pi_k(t) + \omega_k P_k(\mygvec{\pi}(t)) \text{ ,}
\ee
where $\mygvec{\pi}(t)=(\pi_i(t))$. For our specific model, using~\eqref{omegak} and~\eqref{PGFPk}, we have
\be
\ddt{}\pi_k(t) = -(\gamma + \tau k) \pi_k(t)
+ \gamma + \tau k \pi_{k-1}(t) \sum_{l=0}^{\kmax-1} \dd_{l+1} \pi_l(t) \text{ .} \label{qqext}
\ee
These equations are not amenable to closed-form solution. Note, however, that
studies of time to extinction for network epidemics --
e.g.~Holme~\cite{Holme:2013} -- have tended to be based on Monte Carlo methods,
but~\eqref{qqext} could provide a complementary approach that is numerically
cheaper and more analytically tractable.

We will now consider three regimes in which asymptotic methods can be used to
bound the real-time behaviour of the probabilities of extinction. In
particular, we will see that early real-time behaviour is bounded by the death
rates $\omega_k$, while late-time behaviour is bounded by the asymptotic
real-time growth rate $r$ provided $r>-\gamma$.

\subsection{Late behaviour of the subcritical case}

Suppose that $r<0$, so the branching process is subcritical.  For $t \ge 0$ and
$k\in\mathbb{N}_0=\{0,1,\ldots\}$, we will work with the probability of
survival $q_k(t)=1-\pi_k(t)=\mathbb{P}\left(Z^{(k)}(t)>0\right)$. Now
$\mathbb{P}\left(Z^{(k)}(t)>0\right) \le \EX{Z^{(k)}(t)}$, so
using~\eqref{meantot} a simple upper bound for $q_k(t)$, valid also in the
unbounded degree setting using the results in Section~\ref{sec:unbounded}, is
\be
\label{qkupper}
q_k(t) \le k\mu_{\tilde{D}-2}^{-1}\left({\rm e}^{rt} - {\rm e}^{-\gamma t}
\right) + {\rm e}^{-\gamma t} \text{ .}
\ee
Note that $\mu_{\tilde{D}} < \infty$ is a necessary condition for $r<0$.
Under
the stronger condition that $\mu_{\tilde{D}^2}< \infty$,
Windridge~\cite{Windridge:2014} gives an exponential approximation, for large $t$, to
a quantity closely related to $q_k(t)$.  He assumes that what we call type-$0$
individuals are dead.  For $k=1,2,\ldots$, let $\hat{q}_k(t)=\mathbb{P}\left(\sum_{i=1}^\infty
Z_i^{(k)}(t)>0\right)$.  Then, Windridge shows that there exists a constant
$\hat{c} \in (0,1]$ such that, for any $a < \min\{\tau, -r\}$,
\be
\hat{q}_k(t) =\hat{c} k {\rm e}^{rt}\left(1+O(k{\rm e}^{-at})\right) \quad
\mbox{as } t \to \infty \text{ ,} \label{windq}
\ee
for any $k \ge 1$.  The constant $\hat{c}=\lim_{t\to \infty}{\rm e}^{-r t}
\hat{q}_1(t)$.  Note that for some practical purposes, $\hat{q}_k(t)$ may be of more
interest than $q_k(t)$, since type-$0$ individuals are unable to transmit
infection.  In particular, in Appendix~\ref{app:survival} we sketch the
argument that for the case where $r>-\gamma$ an analogous result
to~\eqref{windq} holds, i.e.~, for $k \ge 1$,
\be
{q}_k(t)\sim c k {\rm e}^{rt} \quad
\text{as}\quad t \to \infty \text{ ,}\quad
\text{where} \quad {c}=\lim_{t\to \infty}{\rm e}^{-r t}
q_1(t)>0 \text{ .} \label{ourq}
\ee
(For real-valued functions, $f$ and $g$ say, $f(t) \sim g(t)$ as $t \to \infty$ if
$\lim_{t \to \infty} f(t)/g(t)=1$.)

For the case where $r<-\gamma$ (so, from~\eqref{growthrate}, $\mu_{\tilde{D}}<2$) we show in Appendix~\ref{app:survival} that if $\mu_{\tilde{D}^2}< \infty$ then, for $k \ge 0$,
\be
{q}_k(t)\sim (1-k\mu_{\tilde{D}-2}^{-1}){\rm e}^{- \gamma t} \quad
\text{as}\quad t \to \infty \text{ .}\label{ourq1}
\ee
Note that in this case the asymptotic behaviour of the survival probability ${q}_k(t)$ is independent of the infection rate $\tau$.
The case $r<-\gamma$ could occur, for
example, at the end of an epidemic where $\tau \gg \gamma$. Such an epidemic
would consist primarily of transmission events at early times, with the late
behaviour dominated by recovery events.

\subsection{Late behaviour of the supercritical case}

An approximation to  $q_k(t)$ in the supercritical case ($r>0$) can
be obtained by exploiting the fact that a supercritical branching process
conditioned on extinction is probabilistically equivalent to a subcritical
branching process.  For $k\in\mathbb{N}_0$, let $T^{(k)}=\inf\{t \ge
0:Z^{(k)}(t)=0\}$ denote the extinction time of the branching process given
that it starts with one individual of type $k$, where $T^{(k)}=\infty$ if the
branching process survives forever, and let
$\pi_k=\mathbb{P}\left(T^{(k)}<\infty\right)=\pi_k(\infty)$ be the probability
that the branching process ultimately goes extinct.  Then,
\be
\label{qksuper}
q_k(t)=1-\pi_k+\pi_k \mathbb{P}\left(T^{(k)}>t|T^{(k)}<\infty\right)\text{ .}
\ee
Let $\{\tilde{\myvec{Z}}^{(k)}(t):t \ge 0 \}$ be distributed as
$\{\myvec{Z}^{(k)}(t):t \ge 0 |T^{(k)}<\infty \}$.  Then it follows from
Waugh~\cite{Waugh:1958}, Section 5, that $\{\tilde{\myvec{Z}}^{(k)}(t):t \ge 0
\}$ is also a continuous-time multitype Markov branching process, in which the
lifetime of a typical type-$k$ individual has an exponential distribution with
rate $\gamma+\tau k$, as at~\eqref{omegak}, but when it dies its offspring is
now distributed as follows:
\ba
\mathbb{P}\left(\mathrm{Offspring} = \varnothing | \mathrm{Parent\ type} = k \right) &
 = \frac{1}{\pi_k}\frac{\gamma}{\gamma + \tau k} \text{ ,} \\
 \mathbb{P}\left(\mathrm{Offspring} = \{k-1,l\} | \mathrm{Parent\ type} = k \right) &
 = \frac{\pi_{k-1} \pi_l}{\pi_k}\frac{\tau k \dd_{l+1}}{\gamma + \tau k}
 \quad(l\in\mathbb{N}_0) \text{ .}
\ea
Suppose now that there is a maximum degree size $\kmax$.  Let
$\tilde{\mygmat{\Omega}}=[\tilde{\Omega}_{l,k}]$ be the $(\kmax+1) \times
(\kmax+1)$ matrix with elements given by
\be
\nonumber
\tilde{\Omega}_{l,k} = \frac{\pi_{l-1}\pi_k}{\pi_l}\tau l \left(\dd_{k+1} + \delta_{l,k+1}\right)
-(\gamma + \tau l) \delta_{l,k} \qquad (l,k\in\mathcal{K}) \text{ .}
\ee
Then, recalling~\eqref{meanM},
\be
\EX{\tilde{Z}^{(k)}(t)}=\TR{\myvec{u}_k}{\rm e}^{\tilde{\mygmat{\Omega}}t} \ones \text{ ,}
\ee
where $\tilde{Z}^{(k)}(t)=\tilde{Z}^{(k)}_0(t)+\tilde{Z}^{(k)}_1(t)+\ldots+\tilde{Z}^{(k)}_{\kmax}(t)$.

Let $\tilde{r}$ denote the dominant eigenvalue of $\tilde{\mygmat{\Omega}}$ and
note that $\tilde{r}<0$.  For $t \ge 0$ and $k=0,1,\ldots, \kmax$, let
$\tilde{q}_k(t)=\mathbb{P}\left(\tilde{Z}^{(k)}(t)>0\right)$ be the probability
that the branching process $\{\tilde{\myvec{Z}}^{(k)}(t):t \ge 0 \}$ is not
extinct at time $t$ given that it starts with one individual of type $k$.  Then
we expect that arguments similar to those used in the proof of
Heinzmann~\cite{Heinzmann:2009}, Theorem 3.1, will show that there exists
constants $\tilde{c}_1, \tilde{c}_2,\ldots,\tilde{c}_{\kmax}$, satisfying
$0<\tilde{c}_k<\infty$, such for $k=1,2,\ldots,\kmax$,
\be
\label{qkcond}
\tilde{q}_k(t) =\tilde{c}_k {\rm e}^{\tilde{r}t}\left(1+o({\rm e}^{-\tilde{\gamma}t})\right) \quad \mbox{as } t \to \infty \text{ ,}
\ee
for any $\tilde{\gamma}>0$.  It then follows using~\eqref{qksuper} that
\be
\label{qksuperapprox}
q_k(t)=1-\pi_k+\pi_k \tilde{c}_k {\rm e}^{\tilde{r}t}\left(1+o({\rm
e}^{-\tilde{\gamma}t})\right) \quad \mbox{as } t \to \infty \text{ .}
\ee
Heinzmann~\cite{Heinzmann:2009}, Theorem 3.1, cannot be applied directly as it
assumes that the matrix $\tilde{\mygmat{\Omega}}$ is irreducible.  We do not
consider it here but we expect that Heinzmann's proof can be extended to our
situation.  If we assume that type-$0$ individuals are dead and only consider
initial individuals of types $1,2,\ldots,\kmax-1$  (recall that only initial
infectives can have type $\kmax$) then $\tilde{\mygmat{\Omega}}$ becomes a
$(\kmax-1) \times (\kmax-1)$ irreducible matrix.
Heinzmann~\cite{Heinzmann:2009}, Theorem 3.1, then yields~\eqref{qkcond}; note
that now $\pi_k$ is replaced by $\bar{\pi}_k=1-\lim_{t \to \infty}
\bar{q}_k(t)$ $(k=1,2,\kmax-1)$ in the definition of $\tilde{\mygmat{\Omega}}$
and
$\tilde{q}_k(t)=\tilde{q}_k(t)=\mathbb{P}\left(\sum_{i=1}^{\kmax-1}\tilde{Z}^{(k)}_i(t)>0\right)$.
The approximation~\eqref{qksuperapprox} then holds with $q_k(t)$ and $\pi_k$
replaced by $\bar{q}_k(t)$ and $\bar{\pi}_k$, repsectively.  Moreover, if we
then let $\TR{\tilde{\myvec{f}}_1}$ and $\tilde{\myvec{b}}_1$ be left and right
eigenvectors of $\tilde{\mygmat{\Omega}}$ corresponding to the eigenvalue
$\tilde{r}$, satisfying  $\TR{\tilde{\myvec{f}}_1}\tilde{\myvec{b}}_1=1$, then
$\tilde{c}_k=\left(\TR{\myvec{u}_k}\tilde{\myvec{b}}_1\right)h^*$, where
$h^*=\lim_{t \to \infty} {\rm
e}^{-\tilde{r}t}\TR{\tilde{\myvec{f}}_1}\tilde{\myvec{q}}(t)$ and
$\tilde{\myvec{q}}(t)=\TR{\left(\tilde{q}_i(t),\tilde{q}_2(t), \ldots,
\tilde{q}_{\kmax -1}(t)\right)}$.  Unfortunately, unlike with
$\mygmat{\Omega}$, there do not appear to be closed-form expressions for
$\tilde{r}$ and its associated eigenvectors.

\subsection{Early behaviour and matched asymptotics}

Matched asymptotics is a standard technique in mathematical biology for writing
down approximations to non-linear models that match known asymptotic behaviour
\cite{Murray:2002,Murray:2003}. While numerical solution of the
ODEs~\eqref{qqext} is efficient (as we have noted above) we now obtain a crude
approximation to the full system that takes a closed form in terms of
elementary functions.

First note that for $k\in\mathcal{K}$,  $\pi_k(0) = 0$ and $\pi_k(t)$ is
monotonically increasing with $t$. If we neglect the quadratic terms in $\pi$
in~\eqref{qqext} then, since these are only positive and increasing over time,
we get a lower bound for the extinction probabilities:
\be
\pi_k(t) \geq \pi_k^{(0)}(t)
 = \frac{\gamma}{\omega_k} \left( 1 - {\rm e}^{-\omega_k t} \right)
\text{ .}
\ee
Note that in standard matched asymptotics, we would identify a small parameter
from a ratio of rate constants as the basis for a systematic approximation
scheme~\cite{Murray:1984}; an alternative would be to approximate
systematically by, for example, letting $\pi_k^{(1)} = \pi_k - \pi_k^{(0)}$,
substituting into~\eqref{qqext} and neglecting quadratic terms to give a linear
set of equations for the next order of approximation. Here we consider only the
lowest order approximation, and hence define an `internal' solution for the
survival probability as:
\be
q_k^{(I)}(t) = 1-\pi_k^{(0)}(t) =
 \frac{1}{\omega_k} \left( \tau k + \gamma {\rm e}^{-\omega_k t} \right)
\text{ .}
\ee
Next, supposing we are in the subcritical case so that our
result~\eqref{ourq} holds. We will call this the `external' solution
\be
q_k^{(E)}(t) = ck{\rm e}^{r t} \text{ .}
\ee
To fix the constant $c$, we match the late behaviour of the internal solution
with the early behaviour of the external solution:
\be
\overline{q} = \lim_{t\rightarrow \infty} q_k^{(I)}(t) = \lim_{t\rightarrow 0} q_k^{(E)}(t)
\quad \Rightarrow \quad q_k^{(E)}(t) = \frac{\tau k}{\omega_k} {\rm e}^{r t} \text{ .}
\ee
Finally, the matched asymptotic solution is
\be
q_k^{(A)}(t) = q_k^{(I)}(t) + q_k^{(E)}(t) - \overline{q} =
\frac{1}{\omega_k} \left( \tau k {\rm e}^{r t} + \gamma {\rm e}^{-\omega_k t} \right)
\text{ .}
\ee
We compared this approximation as well as the internal and external solutions
to the exact solution $q_k(t)$, with results shown in Figure~\ref{fig:ma}. As
advertised, this is a relatively crude approximation, but is expressed in terms
of elementary functions and satisfies known asymptotic limits.

\section{Fluctuations in the emerging phase of a major outbreak}
\label{sec:fluc}

We now consider the early behaviour of supercritical epidemics that take off
(i.e.\ do not go extinct early on but ultimately end owing to long-term depletion of
susceptibles).  The early stages of such an epidemic are approximated by the
branching process defined in Section~\ref{subsec:BP} but conditioned on
non-extinction.  It is straightforward to adapt the results on means and
variances in Sections~\ref{sec:means} and~\ref{sec:var} to condition on
$Z^{(k)}(t)>0$.  Elementary calculation shows that, for $t \ge 0$ and
$k\in\mathbb{N}_0$,
\ba
\nonumber
\EX{Z^{(k)}(t)\Big|Z^{(k)}(t)>0}&=\frac{\EX{Z^{(k)}(t)}}{q_k(t)}
\text{ ,}\\
{\rm var}\left(Z^{(k)}(t)\Big|Z^{(k)}(t)>0\right)& =\frac{{\rm
var}\left(Z^{(k)}(t)\right)}{q_k(t)}-\pi_k(t)\left(\frac{\EX{Z^{(k)}(t)}}{q_k(t)}\right)^2
\text{ .}
\ea
Expressions for $\EX{Z^{(k)}(t)\Big|Z^{(k)}(t)>0}$ and ${\rm
var}\left(Z^{(k)}(t)\Big|Z^{(k)}(t)>0\right)$ then follow using~\eqref{meantot}
and~\eqref{vvec1}, respectively, though there is no closed-form formula for
$q_k(t)$ or $\pi_k(t)$.  Note that, assuming $r>0$ so $\pi_k<1$,
\be
\nonumber
\lim_{t \to \infty}\frac{{\rm var}\left(Z^{(k)}(t)\Big|Z^{(k)}(t)>0\right)}{\EX{Z^{(k)}(t)\Big|Z^{(k)}(t)>0}^2}
=\left(1-\pi_k\right)\lim_{t \to \infty}\frac{{\rm var}\left(Z^{(k)}(t)\right)}{\EX{Z^{(k)}(t)}^2}-\pi_k \text{ ,}
\ee
which depends on the degree $k$ of the initial infective. \\

\noindent{}The diffusion approximation studied in Graham and
House~\cite{Graham:2014} corresponds to the case where the number of infectives
at time $t=0$ is large.  Return to  the case where there is a maximum degree
$\kmax$ and suppose that the branching process does not go extinct.  Then it
follows from ~Athreya and Ney~\cite{Athreya:1972}, Theorem V.7.2, that, for any
$k\in\mathcal{K}$,
\be
\nonumber
\frac{\myvec{Z}^{(k)}(t)}{Z^{(k)}(t)} \xrightarrow{\text{a.s.}}
\myvec{w}_1 \quad\text{as } t \rightarrow \infty\text{ ,}
\ee
where $\myvec{w}_1$, given by~\eqref{w1vec}, is a left eigenvector of
$\mygmat{\Omega}$ corresponding to the dominant eigenvalue $\lambda_1=r$.  Thus
if an epidemic takes off and is still in its exponentiallly growing phase then
the relative frequencies of the different types of infectives will be close to
$\myvec{w}_1$.  Hence, we now assume that the initial number of individuals in
the branching process $Z(0)=K$, where $K$ is large, and that $Z_i(0) \approx
w_{1,i} K$ for $i \in \mathcal{K}$.  Label the initial individuals
$1,2,\ldots,K$.  Then, for $t \ge 0$, the total population size is
$Z(t)=\hat{Z}_1(t)+ \hat{Z}_2(t)+\ldots+\hat{Z}_K(t)$, where $\hat{Z}_i(t)$
denotes the total number of descentants of the initial individual $i$ that are
alive at time $t$, including $i$ itself if it is still alive.  Thus,
$\EX{Z(t)}=\sum_{i=1}^K \EX{\hat{Z}_i(t)}$, for all $t \ge 0$, and, since the
processes $\{\hat{Z}_i(t):t \ge 0\}$ $(i=1,2,\ldots,K)$ are mutually
independent, ${\rm var}\left(Z(t)\right)=\sum_{i=1}^K {\rm
var}\left(\hat{Z}_i(t)\right)$, for all $t \ge 0$, and ${\rm cov}\left(Z(t),
Z(s)\right)= \sum_{i=1}^K {\rm cov}\left(\hat{Z}_i(t), \hat{Z}_i(s)\right)$,
for all $t,s \ge 0$. \\

\noindent{}Note that~\eqref{w1fac} implies that
\be
\nonumber
\TR{\myvec{w}_1}\myvec{n}=\sum_{i=0}^{\kmax} i w_{1,i}=\mu_{\tilde{D}-2} \qquad\mbox{and}\qquad \TR{\myvec{w}_1}\myvec{n}_2=
\sum_{i=0}^{\kmax} i^2 w_{1,i}=\frac{\mu_{\tilde{D}-2} \mu_{(\tilde{D}-1)^2+1}}{\mu_{\tilde{D}}} \text{ .}
\ee
Assuming that the above approximation is exact, then, for $t \ge 0$, it follows from~\eqref{meantot} that
\be
\label{meanz}
\EX{Z(t)}=K\TR{\myvec{w}_1}\myvec{m}(t)=K {\rm e}^{rt}
\ee
and, after a little algebra, it follows from~\eqref{vvec1} that
\ba
{\rm var}\left(Z(t)\right) &= K\TR{\myvec{w}_1}\myvec{v}(t)\\
&=K\left(\gamma\left[I_9(t)+\mu_{\tilde{D}}^{-1}\mu_{\tilde{D}-2}^{-1}\left(\sigma_{\tilde{D}}^2+2\right) I_4(t)\right]+
\tau \mu_{\tilde{D}-2}^{-1}\left[\mu_{\tilde{D}-2}^2 I_9(t)+\sigma_{\tilde{D}}^2 I_4(t)\right]\right) \text{ ,} \label{varz}
\ea
where $\sigma_{\tilde{D}}^2={\rm var}(\tilde{D})$ and
\be
\nonumber
I_9(t)=\frac{{\rm e}^{rt}\left({\rm e}^{rt}-1\right)}{r}\text{ .}
\ee
Comparison of~\eqref{meanz} and~\eqref{varz} with the diffusion-based result of
Graham and House~\cite{Graham:2014} in the limit of large $t$ gives agreement
when $\gamma = 0$ (i.e.\ for the SI model) but not for $\gamma>0$. We believe
that this is due to the fact that the diffusion model was only four
dimensional, so a heuristic argument (given in Section 3.3 of Graham and
House~\cite{Graham:2014}, which gave results that were in good agreement with
simulation) about the neighbourhood of an infective node had to be made, in
contrast to the approach here that deals with each effective degree explicitly
and so has $\kmax+1$ dimensions. The argument about the neighbourhood around an
infective tries to account for correlations caused by variability in recovery
times, and so if $\gamma \rightarrow 0$ then these correlations do not exist.

Recent work by Constable and McKane~\cite{Constable:2014} considered the
reduction of high-dimensional stochastic models to low-dimensional diffusions
and this approach was shown to be asymptotically exact for some systems in the
small-noise limit by Parsons and Rogers~\cite{Parsons:2015}.  It is an open
question whether the argument in Section 3.3 of Graham and
House~\cite{Graham:2014} could be justified rigorously by a similar argument,
however we note that a branching-process approach makes fewer assumptions than
a low-dimensional diffusion limit and so will be more generally applicable.\\

\noindent{}Considering further results that can be obtained, it follows
from~\eqref{sigmavec} and a little algebra that, for $0 \le t \le s$,
\ba
\label{covz}
{\rm cov}\left(Z(t), Z(s)\right)&={\rm e}^{-\gamma(s-t)} {\rm var}\left(Z(t)\right)+K\mu_{\tilde{D}-2}^{-1}\left({\rm e}^{r(s-t)}-{\rm e}^{-\gamma(s-t)}\right)\\
&\quad \times
\left\{\gamma \mu_{\tilde{D}}^{-1}\left[\mu_{(\tilde{D}-1)^2+1}I_9(t)-
\left(\sigma_{\tilde{D}}^2+2\right) I_6(t)\right]\right.\\
&\quad +\left. \tau\left[\mu_{\tilde{D}-2}^2 I_9(t)-\sigma_{\tilde{D}}^2 I_6(t)\right]\right\}
\text{ .}
\ea
It seems plausible that these results extend to the case when there is no
maximal degree but that would involve results for countably infinite matrices
which we do not consider here.

Recall that the processes $\{\hat{Z}_i(t):t \ge 0\}$ $(i=1,2,\ldots,K)$ are
mutually independent.  It follows using the central limit theorem that, for
sufficiently large $K$, the process $\{Z(t):t \ge 0\}$, which approximates the
prevalence of infection during the early growth of an epidemic, is
approximately Gaussian with mean function given by~\eqref{meanz} and covariance
function given by~\eqref{covz}.

\section{Numerical examples}
\label{sec:numerical}

\subsection{Forward simulations}

We conducted a series of numerical experiments to provide specific examples of
the general results presented here. $M=10^4$ Monte Carlo simulations were
performed on three different configuration model networks, each of size
$N=10^4$, and with the degree distributions shown in Figure~\ref{fig:restart}
Row (i). (Note that each Monte Carlo simulation consisted of first simulating a network
and then simulating a single epidemic on it.)  Two different scenarios were considered. In the first -- most commonly
considered in the literature when simulations are compared to analytic
approaches -- $\mathrm{time}=0$ was defined as the first point when prevalence
is at a given level, $K$. In our simulations we took $K=100$, but in general
$K$ should take a value where the probability of extinction has become
negligible, but the depletion of the susceptible population has not had a
significant effect on the epidemic dynamics. In the second, each epidemic was
started from one node, picked uniformly at random, so the probability of
extinction played a major role. This scenario is less commonly considered when
comparing real-time simulated epidemics to differential equation models because
the latter are typically designed to hold when the epidemic is already
established.

Since analytic results for the probabilities of extinction $\pi_k(t)$ are not
available, the branching process results required numerical integration of
ordinary differential equations (in our case using Runge-Kutta methods). We
stress that the computational effort required to do this is much less than that
involved in performing Monte Carlo simulations, and has the benefit of not
depending on $N$.

The results for the first approach (restarting time at the first time
prevalence reaches 100) are given in Figure~\ref{fig:restart}. Row (ii) shows
sample trajectories (which all agree on prevalence at time 0).  Row (iii) shows
the simulated mean after time 0 on a logarithmic scale, which initially grows
at the constant rate predicted by the branching process model, and then reduces
as the susceptible population is depleted. Row (iv) shows the variance, which
has not converged to its asymptotic growth regime by the time prevalence is
equal to 100, an effect that is captured by the branching process model. The
variance does not take its largest value at the peak prevalence, but instead
has local maxima before and after the peak.

Figure~\ref{fig:sims} shows the results for the second approach in which there
is one randomly chosen initial infective at time 0. Row (i) shows some sample
trajectories.  Row (ii) shows the extinction probabilities, which are
accurately captured in the branching process model until very close to the end
of the epidemic when prevalence is low and extinction becomes likely again. Row
(iii) shows the mean, which does not start growing at a constant rate with the
convergence rate accurately captured in the branching process model. Row (iv)
shows the variance and convergence onto its asymptotic value; in this case
there is a single maximum just before the peak in prevalence.

Another important point is that while mean numbers infected are comparable
between Figures~\ref{fig:restart} and~\ref{fig:sims}, the variability in
the time for the epidemic to take off, as well as the contribution from
extinct epidemics, makes the real-time variance in Figure~\ref{fig:sims}
orders of magnitude larger than in Figure~\ref{fig:restart}.

\subsection{Statistical inference}

\label{sec:infer}

In order to demonstrate the potential use of the real-time effective degree
branching process model for statistical inference, we carried out a simulation
study. Here we simulated one epidemic that took off on a configuration model network of size
$N=10^6$ with degree distribution $D^{(3)}$ as in the right-hand column of
Figure~\ref{fig:restart} ($d^{(3)}_1 = 1/8$, $d^{(3)}_3 = 5/6$, $d^{(3)}_9 =
1/24$) and true rates ${\tau}_0 = 2$, ${\gamma}_0 = 1$. Letting $I(t)$ be
the prevalence of infection in the network model, we set time $t=0$ when
$I(t)=100$ for the first time and make 40 evenly-spaced observations (with gap
$\delta t = 0.05$ between each) of $I(t)$ up to $t_{\mathrm{end}}=2$.

We then define an approximate likelihood based on the methods of Ross {\it et al.}~\cite{Ross:2006}, in which a Gaussian process approximation based
on known first and second moments is used, which will be more accurate for
larger $N$, larger $I(0)$ and smaller $\delta t$. There should be, however, no
\textit{a priori} obstacle to fitting our model to data on smaller populations
even with incomplete data, for example by using Markov chain Monte Carlo methods to perform
multiple imputation as suggested by O'Neill and Roberts~\cite{ONeill:1999}.

Explicitly, we let the probability density function $f$ for sequential
observations be given by
\be
f(I(t+\delta t)|I(t)) = \mathcal{N}(\mathbb{E}[Z(t + \delta t) | Z(t)=I(t)] ,
\text{var}(Z(t + \delta t) | Z(t)=I(t)) \text{ ,}
\ee
where $\mathcal{N}(m,V)$ is the probability density function of a normal
distribution with mean $m$ and variance $V$, and the expectation and variance
of $Z(t + \delta t)$ are given by the results of Section~\ref{sec:fluc} above. The
likelihood is then
\be
L =
\prod_{t\in\{0,\delta t, \ldots, t_{\mathrm{end}}-\delta t\}}
f(I(t+\delta t)|I(t)) \text{ .}
\ee
We consider values of this likelihood across the range of rate constant
parameters $\tau$ and $\gamma$ under two different degree distributions: the
correct one, $D^{(3)}$, and a misspecified degree distribution $D^{(1)}$, which
is the one used in the left-hand column of Figure~\ref{fig:restart} ($d^{(1)}_3
= 1$).

Figure~\ref{fig:infer} shows the first quarter of the simulated epidemic
together with the Gaussian process approximation, as well as likelihood
surfaces for the correct and misspecified degree distributions.  Performing
maximum likelihood estimation using MATLAB's \texttt{mle()} function allows us
to obtain point estimates for parameters $\hat{\tau}$ and $\hat{\gamma}$, as
well as asymptotic 95\% confidence intervals and the parameter covariance matrix
$\hat{C}$ from the inverse Hessian. We quote results to 2 significant figures;
the asymptotic approximations also give very slightly negative lower confidence
intervals for $\hat{\gamma}$ which we round up to 0.  For the correct degree
distribution we obtain
\be
\begin{pmatrix} \hat{\tau}^{(3)}\\ \hat{\gamma}^{(3)} \end{pmatrix}
= \begin{pmatrix} 1.9 & [1.4,2.4]\\ 0.8  & [0,1.7] \end{pmatrix} \text{ ,}
	\qquad \hat{\boldsymbol{C}}^{(3)} = \begin{pmatrix}
	0.069 & 0.11\\ 0.11 & 0.19 \end{pmatrix} \text{ ,}
\ee
and for the misspecified degree distribution we obtain
\be
\begin{pmatrix} \hat{\tau}^{(1)}\\ \hat{\gamma}^{(1)} \end{pmatrix}
= \begin{pmatrix} 3.2 & [2.3,4.0]\\ 0.8  & [0,1.7] \end{pmatrix} \text{ ,}
	\qquad \hat{\boldsymbol{C}}^{(1)} = \begin{pmatrix}
	0.0045 & 0.011\\ 0.011 & 0.030 \end{pmatrix} \text{ .}
\ee
This shows that knowledge of the correct distribution allows both $\tau$ and
$\gamma$ to be estimated; although as would be expected the early asymptotic
growth rate $r$ is much more closely constrained by simulated data than other
directions in parameter space. It also shows that misspecification of the
degree distribution allows $r$ to be identified, but biases the estimate of, in
this case, $\tau$.

\section{Concluding comments}
\label{sec:conc}

\subsection{Summary of results}

In this paper, we have provided explicit closed-form expressions for the
real-time mean, variance and covariance function for disease prevalence during the early stages of the
Markovian SIR model on a configuration model network, as well as deriving
differential equations for the probabilities of extinction over time that are
relatively numerically cheap to solve. These allow for a more explicit
treatment of e.g.\ rates of convergence to asymptotic behaviour than has
previously been possible.

\subsection{Future directions}

We believe that the methods of real-time, multitype branching processes could
be more widely applied in infectious disease epidemiology, since they provide
results concerning extinction and variance that are not available using
deterministic differential equation models. For example, the effective-degree
based methodology presented here may be extended to include degree correlation
(e.g.\ in the sense of Newman~\cite{Newman:2002b}) by keeping track of the
actual, as well as effective, degrees of individuals, though the type space
becomes larger and explicit analytic results are unlikely to be available.  We
note that there is increasing interest in the eradication of infections
(e.g.~Klepac {\it et al.}~\cite{Klepac:2012}) and that arguably calculating
extinction probabilities and variability in outbreak sizes is of equal or
greater importance in this context than calculation of mean behaviour.

The explicit closed-form expressions derived have the potential to enhance
statistical work on epidemic prevalence curves. In particular, many empirically
observed epidemics of human pathogens exhibit more variability around the trend
than simple models would predict (see Black {\it et al.}~\cite{Black:2014},
particularly Section 1, for a discussion of this), which can bias parameter
estimation if an insufficiently variable model is used. Application of our
methods to real data would be an interesting extension of our work.

The possibility of a more general non-Markovian stochastic epidemic being
approximated by an appropriate real-time branching process is raised by the
results of Barbour and Reinert~\cite{Barbour:2013} and it would be interesting
to investigate whether our analysis could be adapted to this scenario.

Finally, there is the question of low-dimensional PGF-based modelling of the
whole network epidemic that incorporates stochasticity accurately. For example,
the work of Miller~\cite{Miller:2014a} considered accounting for early
fluctuations and Graham and House~\cite{Graham:2014} considered a diffusion
approximation once early fluctuations were negligible, but the results
presented here as well as those of Barbour and Reinert~\cite{Barbour:2013}
suggest that a more unified low-dimensional stochastic approach that explicitly
models early fluctuations may be possible.

\appendix
\section{Convergence of moments}
\label{app:moments}

We determine sufficient conditions for the first two moments of the number of infectives in
the epidemic $E_N$ among a population of $N$ individuals to converge to the corresponding moments of the limiting branching process $\mathcal{B}$.
For ease of exposition, in $E_N$ we assume that at time $t=0$ there is one infective and the remaining $N-1$ individuals are susceptible.  The initial infective is chosen by sampling a stub uniformly at random from all stubs used to form the network, with the individual attached to that stub being the initial infective.  The arguments are easily extended to other choices of initial infective(s).

In the independent and identically distributed (i.i.d.) degree case, we assume that a single sequence $D_1,D_2,\dots$ of i.i.d.~copies of $D$ is used to construct a sequence of epidemics $(E_N)$, where, for $N=1,2,\dots$, the epidemic $E_N$ is constructed using $D_1,D_2,\dots, D_N$.  In the prescribed degree case (see Section~\ref{subsec:model}), recall that
$p_k^{(N)}$ $(k=0,1,\dots)$ denotes the empirical degree distribution in the epidemic
$E_N$ and, for $f:\mathbb{R} \to \mathbb{R}$, let $\mu_{f\left(D^{(N)}\right)}=\sum_{k=0}^{\infty} p_k^{(N)} f(k)$.

For $N=1,2,\dots$ and $t \ge 0$, let $Y_N(t)$ be the number of infectives in
$E_N$ at time $t$ and let $Z(t)$ be the number of individuals in the limiting
branching process $\mathcal{B}$.  Then arguing as in the proof of Ball and
Neal~\cite{Ball:2008}, Theorem A.1, shows that in the i.i.d.~degree case, if
$\mu_D <\infty$ then the sequence of epidemics $(E_N)$ and the limiting
branching process $\mathcal{B}$ can be constructed on a common probability
space so that , with probability one, for any $t > 0$, $Y_N(u)$ and $Z(u)$
coincide for all $u \in [0,t]$ for all sufficiently large $N$.  The same
conclusion holds in the prescribed degree case provided $p^{(N)}_k \to p_k$
$(k=0,1,\ldots)$ and $\mu_D^{(N)} \to \mu_D$ as $N \to \infty$, where
$\sum_{k=0}^{\infty} p_k=1$  and $\mu_D < \infty$.  Thus, under these
conditions, in both cases, for any $t \ge 0$, $Y_N(t)$ converges almost surely
to $Z(t)$ as $N \to \infty$.   We obtain further conditions, under which, for
fixed $t \ge 0$, the sequence $\left(Y_N(t)^2\right)$ is uniformly integrable,
which then (e.g.~Grimmett and Stirzaker~\cite{Grimmett:2001}, Chapter 7,
Section 10) implies immediately that $\lim_{N \to \infty}\EX{Y_N(t)} =
\EX{Z(t)},\lim_{N \to \infty}\EX{Y_N(t)^2} = \EX{Z(t)^2}$ and, for any $0 \le s
\le t$,  $\lim_{N \to \infty}{\rm cov}\left(Y_N(s),Y_N(t)\right) = {\rm
cov}\left(Z(s), Z(t)\right)$. To show that $\left(Y_N(t)^2\right)$ is uniformly
integrable it is sufficient to show that the sequence
$\left(\EX{Y_N(t)^{2+\delta}}\right)$ is bounded above for some $\delta>0$.

For ease of exposition, we use notation from the i.i.d.~degree case.  The construction in Ball and Neal~\cite{Ball:2008} involves
for each $N$ constructing a realisation of a branching process, $\mathcal{B}_N$ say, which is defined analagously to $\mathcal{B}$ but using the empirical distribution of $D_1,D_2,\dots, D_N$ rather than the distribution of $D$.  In $\mathcal{B}_N$, for each birth a stub is chosen independently and uniformly from all the $D_1+D_2+\dots+D_N$ stubs and the degree of the individual that the chosen stub belongs to gives the degree of the individual born at that birth. The process of infectives in $E_N$ follows $\mathcal{B}_N$ except when (i) a sampled stub has previously been chosen, in which case stubs are resampled until one that has not been chosen previously is obtained, or (ii) a sampled  stub has not been chosen previously but is attached to an individual that has already been infected, in which case the corresponding birth and all descendants of that individual in $\mathcal{B}_N$ are ignored in $E_N$.  Note that (i) implies that $\mathcal{B}_N$ need not be an almost sure upper bound for the process of infectives in $E_N$.  The branching processes $\mathcal{B}_N$ $(N=1,2,\dots)$ and $\mathcal{B}$ are coupled so that with probability one, for any fixed $t>0$, $\mathcal{B}_N$ and $\mathcal{B}$ coincide over $[0,t]$ for all sufficiently large $N$.

Let $D_{(1)},D_{(2)},\dots,D_{(N)}$ be the order statistics of $D_1,D_2,\dots, D_N$, i.e.~$D_1,D_2,\dots, D_N$ arranged in increasing order.   For $\epsilon \in (0,1)$, let $\mathcal{B}_{N,\epsilon}$ be the branching process that is defined analagously to
$\mathcal{B}$, but using the empirical distribution of $D_{([N\epsilon]+1)},D_{([N\epsilon]+2)},\dots, D_{(N)}$.  (For $x \in \mathbb{R}$, $[x]$ denotes the greatest integer $\le x$.)  For $t \ge 0$, let
$Z_ {N,\epsilon}(t)$ be the number of individuals alive in $\mathcal{B}_{N,\epsilon}$ at time $t$ and let $T_{N,\epsilon}(t)$ denote the total progeny of $\mathcal{B}_{N,\epsilon}$ by time $t$, including the initial ancestor.  Then $Y_N(t)\overset{st}{\le}Z_ {N,\epsilon}(t)$, provided that $T_{N,\epsilon}(t) \le N\epsilon$, where $\overset{st}{\le}$ denotes stochastically smaller than. (Up until $[N \epsilon]$ infections have occurred in $E_N$, the empirical distribution of the degrees of unsampled stubs, where the degree of a stub is the degree of the individual to which it is attached, is stochastically smaller
than that of the stubs belonging to the top $N-[N \epsilon]$ individuals when ordered by degree.)  As $Y_N(t)$ is at most $N$,
it follows that
\begin{equation*}
\EX{Y_N(t)^{2+\delta}} \le \EX{Z_ {N,\epsilon}(t)^{2+\delta}}+N^{2+\delta}\mathbb{P}\left(T_{N,\epsilon}(t)>N\epsilon\right).
\end{equation*}
By Markov's inequality,
\begin{equation*}
\mathbb{P}\left(T_{N,\epsilon}(t)>N\epsilon\right)\le\frac{1}{(N\epsilon)^{2+\delta}}\EX{T_{N,\epsilon}(t)^{2+\delta}}.
\end{equation*}
Also, $Z_ {N,\epsilon}(t) \overset{st}{\le} T_ {N,\epsilon}(t)$, so
\begin{equation}
\label{YNUniintbound}
\EX{Y_N(t)^{2+\delta}} \le \left(1+\epsilon^{-(2+\delta)}\right)\EX{T_{N,\epsilon}(t)^{2+\delta}}.
\end{equation}

We now bound $\EX{T_{N,\epsilon}(t)^{2+\delta}}$, for $\delta \in (0,1)$.  Note
that this moment is smaller than the corresponding moment for the branching
process in which $\gamma=0$ and individuals retain their original effective
degree throughout their
lifetime.  Moreover, by rescaling the time axis, we can assume without loss of generality that $\tau=1$.  Thus consider
a multitype Markov birth process, with types $0,1,\dots,J$, in which an individual of type $i$ gives birth at rate $i$ and
the types of successive births are i.i.d.~with probability mass function
$\hat{p}_j$ $(j=0,1,\dots,J)$.  For $i=0,1,\dots,J$, let $\hat{T}_i(t)$ denote the total number of individuals alive in this process at time $t$ given that at time $0$ there is one indivdiual, whose type is $i$.  For $\alpha>0$ and $t \ge 0$, let
$\hat{\mu}_i^{(\alpha)}(t)=\EX{\hat{T}_i(t)^{\alpha}}$ $(i=0,1,\dots,J)$ and let $\hat{\mu}^{(\alpha)}(t)=\sum_{i=0}^J \hat{p}_i
\hat{\mu}_i^{(\alpha)}(t)$.  It is possible using a backward argument to derive explicit expressions for $\hat{\mu}_i^{(k)}(t)$
for $k=1,2,\dots$, though the algebra soon becomes very tedious.  As our aim is to bound $\hat{\mu}^{(2+\delta)}(t)$, we simply derive bounds for $\hat{\mu}_i^{(1)}(t),\hat{\mu}_i^{(2)}(t)$ $(i=0,1,\dots,J)$ and finally $\hat{\mu}^{(2+\delta)}(t)$.  Moreover, our bounds are deliberately coarse to facilitate easy application to $\EX{T_{N,\epsilon}(t)^{2+\delta}}$.  For $f:\mathbb{R} \to \mathbb{R}$, let $\hat{\mu}_{f(D)}=\sum_{i=0}^J f(i) \hat{p}_i$.

For $i=0,1,\dots,J$, the backward equation for $\hat{\mu}_i^{(1)}(t)$ is
\begin{equation}
\label{backmuhat1i}
\ddt{\hat{\mu}_i^{(1)}}=i\hat{\mu}^{(1)}, \qquad \hat{\mu}_i^{(1)}(0)=1.
\end{equation}
Multiplying~\eqref{backmuhat1i} by $\hat{p}_i$ and summing over $i=0,1,\dots,J$, yields
\begin{equation*}
\ddt{\hat{\mu}^{(1)}}=\hat{\mu}_D\hat{\mu}^{(1)}, \qquad \hat{\mu}^{(1)}(0)=1.
\end{equation*}
Thus
\be
\label{muhat1}
\hat{\mu}^{(1)}(t)={\rm e}^{\hat{\mu}_D t},
\ee
which on substituting into~\eqref{backmuhat1i} yields $\hat{\mu}_i^{(1)}=1+i\int_0^t {\rm e}^{\hat{\mu}_D s} \dint{s}$, so
\be
\label{muhat1i}
\hat{\mu}_i^{(1)}\le i(1+t){\rm e}^{\hat{\mu}_D t}\qquad (i=1,2,\dots,J).
\ee

For $i=0,1,\dots,J$, the backward equation for $\hat{\mu}_i^{(2)}(t)$ is
\be
\label{backmuhat2i}
\ddt{\hat{\mu}_i^{(2)}}=2i\hat{\mu}_i^{(1)}\hat{\mu}^{(1)}+i\hat{\mu}^{(2)}, \qquad \hat{\mu}_i^{(2)}(0)=1.
\ee
Substituting from~\eqref{muhat1} and~\eqref{muhat1i}, and then multiplying~\eqref{backmuhat2i} by $\hat{p}_i$ and summing over $i=0,1,\dots,J$, yields
\begin{equation*}
\ddt{\hat{\mu}^{(2)}}\le 2\hat{\mu}_{D^2}(1+t){\rm e}^{2\hat{\mu}_D t} +\hat{\mu}_D \hat{\mu}^{(2)}, \qquad \hat{\mu}^{(2)}(0)=1,
\end{equation*}
whence
\begin{eqnarray*}
\label{muhat2}
\hat{\mu}^{(2)}(t)&\le&{\rm e}^{\hat{\mu}_D t}+\int_0^t {\rm e}^{\hat{\mu}_D (t-s)}2\hat{\mu}_{D^2}(1+s){\rm e}^{2\hat{\mu}_D s}\dint{s}\\
&\le& \left(1+2\hat{\mu}_{D^2} t(1+t)\right){\rm e}^{2\hat{\mu}_D t}.
\end{eqnarray*}
Substituting this bound into~\eqref{backmuhat2i} and noting that the right-hand side of~\eqref{backmuhat2i} is increasing in $t$ yields
\be
\label{muhat2i}
\hat{\mu}_i^{(2)}(t) \le i^2 g(t,\hat{\mu}_{D^2}){\rm e}^{2\hat{\mu}_D t} \qquad (i=1,2,\dots,J),
\ee
where
\be
\label{gtD2}
g(t,\hat{\mu}_{D^2})=1+t\left[1+2(1+t)(1+\hat{\mu}_{D^2} t)\right].
\ee

Let $\delta \in (0,1)$.  For $i=0,1,\dots,J$, the backward equation for $\hat{\mu}_i^{(2+\delta)}(t)$ is
\be
\label{backmuhatdeltai}
\ddt{\hat{\mu}_i^{(2+\delta)}}=-i \hat{\mu}_i^{(2+\delta)}(t)+i\EX{\left(\hat{T}_i(t)+\hat{T}(t)\right)^{2+\delta}}, \qquad \hat{\mu}_i^{(2+\delta)}(0)=1,
\ee
where $\hat{T}_i(t)$ and $\hat{T}(t)$ are independent and $\hat{T}(t)$ is distributed as a mixture of $\hat{T}_0(t), \hat{T}_1(t),\dots,\hat{T}_J(t)$ with mixing probabilities $\hat{p}_0, \hat{p}_1,\dots, \hat{p}_J$.

Let $a$ and $b$ be nonnegative real numbers.  Application of the mean
value theorem yields that
\begin{equation*}
(a+b)^{2+\delta} \le a^{2+\delta}+(2+\delta)b(a+b)^{1+\delta}.
\end{equation*}
Further, $(a+b)^{1+\delta} \le
2^{\delta}\left(a^{1+\delta}+b^{1+\delta}\right)$, so
\begin{equation*}
(a+b)^{2+\delta} \le a^{2+\delta}+2^{\delta}(2+\delta)\left(b
a^{1+\delta}+b^{2+\delta}\right).
\end{equation*}
Setting $a=\hat{T}_i(t)$ and $b=\hat{T}(t)$ in this inequality, taking
expectations exploiting the independence of $\hat{T}_i(t)$ and
$\hat{T}(t)$, substituting into~\eqref{backmuhatdeltai} and noting that
$2^{\delta}(2+\delta) <6$ gives
\be
\label{backmuhatdeltai1}
\ddt{\hat{\mu}_i^{(2+\delta)}}\le 6i\left[\hat{\mu}^{(1)}(t)\hat{\mu}_i^{(1+\delta)}(t) +\hat{\mu}^{(2+\delta)}(t)\right], \qquad \hat{\mu}_i^{(2+\delta)}(0)=1.
\ee

Now
\begin{equation*}
\hat{\mu}_i^{(2)}(t)=\EX{\left(\hat{T}_i(t)^{1+\delta}\right)^{\frac{2}{1+\delta}}}
\ge \left(\hat{\mu}_i^{(1+\delta)}(t)\right)^{\frac{2}{1+\delta}},
\end{equation*}
by Jensen's inequality, so
\be
\label{mui1delta}
\hat{\mu}_i^{(1+\delta)}(t)\le \left(\hat{\mu}_i^{(2)}(t)\right)^{\frac{1+\delta}{2}}
\le i^{1+\delta}g(t,\hat{\mu}_{D^2}){\rm e}^{(1+\delta)\hat{\mu}_D t},
\ee
using~\eqref{muhat2i} and noting that $g(t,\hat{\mu}_{D^2})\ge 1$.  Substituting~\eqref{mui1delta} into~\eqref{backmuhatdeltai1}, multiplying by $\hat{p}_i$ and summing over $i=0,1,\dots,J$, yields
\begin{equation*}
\ddt{\hat{\mu}^{(2+\delta)}}\le 6 \hat{\mu}_{D^{2+\delta}}g(t,\hat{\mu}_{D^2}){\rm e}^{(2+\delta)\hat{\mu}_D t}+6 \hat{\mu}_D \hat{\mu}^{(2+\delta)}.
\end{equation*}
Hence,
\begin{eqnarray}
\label{hatmud2delta}
\hat{\mu}^{(2+\delta)} &\le& {\rm e}^{6\hat{\mu}_D t}+\int_0^t {\rm e}^{6\hat{\mu}_D (t-s)} 6 \hat{\mu}_{D^{2+\delta}}g(s,\hat{\mu}_{D^2}){\rm e}^{(2+\delta)\hat{\mu}_D s}
\dint{s}\nonumber\\
&\le& {\rm e}^{6\hat{\mu}_D t}\left[1+6 \hat{\mu}_{D^{2+\delta}}tg(t,\hat{\mu}_{D^2})\right].
\end{eqnarray}

We return to epidemics on networks and introduce some more notation.  For $\epsilon \in (0,1)$, let $k_0(\epsilon)=\min\{k:p_0+p_1+\dots,p_k > \epsilon\}$ and, for $k=0,1,\dots$, let
\begin{equation*}
p_k(\epsilon)= \begin{cases}
	0 & \text{ if } k<k_0(\epsilon) \text{ ,}\\
	1-\frac{1}{1-\epsilon}\sum_{k=k_0(\epsilon)}^{\infty} p_k & \text{ if } k= k_0(\epsilon)  \text{ ,}\\
	\frac{p_k}{1-\epsilon} & \text{ if } k> k_0(\epsilon)  \text{ .}
\end{cases}
\end{equation*}
For $f:\mathbb{R} \to \mathbb{R}$, let $\mu_{f(D(\epsilon))}=\sum_{k=0}^\infty p_k(\epsilon) f(k)$.  As above, we assume without loss of generality that $\tau=1$.

Consider first the model with prescribed degrees.  Let $D_{(1)}^{(N)},D_{(2)}^{(N)},\dots,D_{(N)}^{(N)}$ be the order statistics of $D_1^{(N)},D_2^{(N)},\dots, D_N^{(N)}$.  For $\epsilon \in (0,1)$, let $p_k^{(N)}(\epsilon)=(N-[N\epsilon])^{-1}\sum_{i=N-[N\epsilon]}^N \delta_{k,D_{(i)}^{(N)}}$ $(k=0,1,\dots)$ and,\newline for $f:\mathbb{R} \to \mathbb{R}$, let
\begin{equation*}
\mu_{f(D^{(N)}(\epsilon))}=\sum_{k=0}^\infty p_k^{(N)}(\epsilon) f(k)=\frac{1}{N-[N\epsilon]}\sum_{k=[N\epsilon]+1}^N f\left(D_{(k)}^{(N)}\right)^\alpha.
\end{equation*}
Let $\tilde{p}^{(N)}_k(\epsilon)=kp_k^{(N)}(\epsilon)/\mu^{(N)}_{D(\epsilon)}$ $(k=1,2,\dots)$. For $f:\mathbb{R} \to \mathbb{R}$, let
\begin{equation*}
\mu_{f\left(\tilde{D}^{(N)}(\epsilon)\right)}=\sum_{k=1}^\infty \tilde{p}_k^{(N)}(\epsilon) f(k)=\frac{\mu_{D^{(N)}(\epsilon)f\left(D^{(N)}(\epsilon)\right)}}{\mu_{D^{(N)}(\epsilon)}}.
\end{equation*}

Fix the population size $N$ and $\epsilon \in (0,1)$.  In the birth process used to bound the right-hand side of~\eqref{YNUniintbound}, the types of individuals born are distributed according to
$\hat{p}_i=\tilde{p}^{(N)}_{i+1}(\epsilon)$ $(i=0,1,\dots,J)$, where $J=\max\left\{D_{(k)}^{(N)}:k=1,2,\dots,N\right\}$. Thus, for $\delta \in (0,1)$, it follows using~\eqref{YNUniintbound} and~\eqref{hatmud2delta} that
\begin{equation*}
\EX{Y_N(t)^{2+\delta}} \le h_1(N,\epsilon,t),
\end{equation*}
where
\begin{equation*}
h_1(N,\epsilon,t)=\left(1+\epsilon^{-(2+\delta)}\right)\left[1+6\mu_{(\tilde{D}^{(N)}(\epsilon)-1)^{2+\delta}}t g\left(t,\mu_{(\tilde{D}^{(N)}(\epsilon)-1)^2}\right)\exp\left(6 \mu_{\tilde{D}^{(N)}(\epsilon)-1}t\right)\right].
\end{equation*}

Suppose that there exists $\delta \in (0,1)$ such that $\mu_{D^{3+\delta}}<\infty$ and $\mu_{(D^{(N)})^{3+\delta}} \to \mu_{D^{3+\delta}}$ as $N \to \infty$.  It is easily verified that these conditions imply that, for each $\epsilon \in (0,1)$
and any $\alpha \in [0,3+\delta]$, $\mu_{D(\epsilon)^{\alpha}}<\infty$ and $\mu_{(D^{(N)}(\epsilon))^\alpha} \to \mu_{D(\epsilon)^{\alpha}}$ as $N \to \infty$.  Hence, for any $\alpha \in [0,2+\delta]$, $\mu_{(\tilde{D}^{(N)}(\epsilon)-1)^\alpha} \to \mu_{(\tilde{D}(\epsilon)-1)^\alpha}$ as $N \to \infty$, where $\mu_{(\tilde{D}(\epsilon)-1)^\alpha}< \infty$. It follows that $h_1(N,\epsilon,t) \to h_1(\epsilon,t)$ as $N \to \infty$, where
\begin{equation*}
h_1(\epsilon,t)=\left(1+\epsilon^{-(2+\delta)}\right)\left[1+6\mu_{(\tilde{D}(\epsilon)-1)^{2+\delta}}t g\left(t,\mu_{(\tilde{D}(\epsilon)-1)^2}\right)\exp \left(6 \mu_{\tilde{D}(\epsilon)-1}t\right)\right] <\infty.
\end{equation*}
Thus the sequence $\left(\EX{Y_N(t)^{2+\delta}}\right)$ is bounded and, for any $t \ge 0$ and any $\alpha \in [0,2]$,  $\EX{Y_N(t)^\alpha} \to \EX{Z(t)^\alpha}$ as $N \to \infty$.

Turn now to the model with i.i.d.~degrees. Recall that we construct a sequence of epidemics $(E_N)$ from a single sequence
$D_1,D_2,\dots$ of i.i.d.~copies of $D$.  For $N=1,2,\dots$, let $D^{(N)}_k=D_k$ ($k=1,2,\dots,N$). Using the formulae derived previously for the prescribed degree case, but noting that now the degrees are random, by conditioning on the degree sequence
$D_1,D_2,\dots$ we obtain that
\be
\label{EYN2bound}
\EX{Y_N(t)^{2+\delta}} \le h_2(N,\epsilon,t),
\ee
where
\begin{equation*}
h_2(N,\epsilon,t)=\EX{\left(1+\epsilon^{-(2+\delta)}\right)\left\{1+6\mu_{(\tilde{D}^{(N)}(\epsilon)-1)^{2+\delta}}t g\left(t,\mu_{(\tilde{D}^{(N)}(\epsilon)-1)^2}\right)\exp \left(6 \mu_{\tilde{D}^{(N)}(\epsilon)-1}t\right)\right\}}.
\end{equation*}

Fix  $\epsilon \in (0,1)$.  Recalling the definition~\eqref{gtD2} of the function $g$, to obtain an upper bound for 
$h(N,\epsilon,t)$, it is sufficient to obtain upper bounds for
\begin{equation*}
\EX{\mu_{(\tilde{D}^{(N)}(\epsilon)-1)^{2+\delta}}\exp \left(6 \mu_{\tilde{D}^{(N)}(\epsilon)-1}t\right)} \qquad\mbox{and}\qquad \EX{\mu_{(\tilde{D}^{(N)}(\epsilon)-1)^{2+\delta}}\mu_{(\tilde{D}^{(N)}(\epsilon)-1)^{2}}\exp \left(6 \mu_{\tilde{D}^{(N)}(\epsilon)-1}t\right)}.
\end{equation*}
Now $\mu_{(\tilde{D}^{(N)}(\epsilon)-1)^{2}} \le \mu_{(\tilde{D}^{(N)}(\epsilon)-1)^{2+\delta}}$ and, by Jensen's inequality, $\left(\mu_{(\tilde{D}^{(N)}(\epsilon)-1)^{2+\delta}}\right)^2 \le \mu_{(\tilde{D}^{(N)}(\epsilon)-1)^{2(2+\delta)}}$, so, since $\delta \in (0,1)$, it is sufficient to obtain an upper bound for $\EX{\mu_{(\tilde{D}^{(N)}(\epsilon)-1)^6}\exp \left(6 \mu_{\tilde{D}^{(N)}(\epsilon)-1}t\right)}$.  Further, using the Cauchy-Schwarz inequality,
\begin{equation*}
\EX{\mu_{(\tilde{D}^{(N)}(\epsilon)-1)^6}\exp \left(6 \mu_{\tilde{D}^{(N)}(\epsilon)-1}t\right)} \le \sqrt{
\EX{\mu_{(\tilde{D}^{(N)}(\epsilon)-1)^6}^2} \EX{\exp \left(12 \mu_{\tilde{D}^{(N)}(\epsilon)-1}t\right)}}.
\end{equation*}

Let $M_{D^2}(\theta)=\EX{\exp \left(\theta D^2\right)}$ ($\theta \in \mathbb{R}$) be the moment-generating function of $D^2$ and suppose that there exists $\theta_0>0$ such that $M_D(\theta_0)<\infty$.  Note that this implies that $\EX{D^\alpha}<\infty$ for all $\alpha \ge 0$.

Assume first that $p_0=0$, so $\mu_{D^{(N)}(\epsilon)}\ge 1$ almost surely. Now
\begin{eqnarray}
\label{mudtile6bd}
\mu_{(\tilde{D}^{(N)}(\epsilon)-1)^6}&=&\frac{\mu_{D^{(N)}(\epsilon)\left(D^{(N)}(\epsilon)-1)^6\right)}}{\mu_{D^{(N)}(\epsilon)}}\nonumber\\
&\le& \mu_{D^{(N)}(\epsilon)\left(D^{(N)}(\epsilon)-1)^6\right)}\qquad\mbox{almost surely}\nonumber\\
&=&\frac{1}{N-[N\epsilon]}\sum_{k=[N\epsilon]+1}^N D^{(N)}_{(k)}\left(D^{(N)}_{(k)}-1\right)^6\nonumber\\
&\le& \frac{1}{N(1-\epsilon)}\sum_{k=1}^N D_k^7,
\end{eqnarray}
since $D^{(N)}_k=D_k$ ($k=1,2,\dots,N$).  Thus, since $D_1,D_2,\dots,D_N$ are i.i.d.,
\begin{eqnarray*}
\EX{\mu_{(\tilde{D}^{(N)}(\epsilon)-1)^6}^2}&\le&\frac{1}{N^2(1-\epsilon)^2}\EX{\left(\sum_{k=1}^N D_k^7\right)^2}\\
&=&\frac{1}{N^2(1-\epsilon)^2}\left[N\mu_{D^{14}}+N(N-1)\mu_{D^7}^2\right]\\
&\le&\frac{1}{(1-\epsilon)^2}\left(\mu_{D^{14}}+\mu_{D^7}^2\right).
\end{eqnarray*}

A similar argument to~\eqref{mudtile6bd} yields $\mu_{(\tilde{D}^{(N)}(\epsilon)-1)}\le\frac{1}{N(1-\epsilon)}\sum_{k=1}^N D_k^2$, so
\begin{equation*}
\EX{\exp \left(12 \mu_{\tilde{D}^{(N)}(\epsilon)-1}t\right)} \le \EX{\exp \left(\frac{12t}{N(1-\epsilon)}\sum_{k=1}^N D_k^2 \right)}=
\left[M_{D^2}\left(\frac{12t}{N(1-\epsilon)}\right)\right]^N.
\end{equation*}
Fix $t \ge 0$.  Then $\EX{\exp \left(12 \mu_{\tilde{D}^{(N)}(\epsilon)-1}t\right)}<\infty$ for $N \ge N(\epsilon,t)$, where $N(\epsilon,t)=\frac{12t}{(1-\epsilon)\theta_0}$.  Now \newline $M_{D^2}(\theta)=1+\mu_{D^2}\theta+o(\theta)$ as $\theta \to 0$, so
\begin{eqnarray*}
\lim_{N \to \infty} \EX{\exp \left(12 \mu_{\tilde{D}^{(N)}(\epsilon)-1}t\right)}&\le&\lim_{N \to \infty}\left[1+\frac{12t}{N(1-\epsilon)}\mu_{D^2}+o\left(\frac{1}{N}\right)\right]^N\\
&=&\exp\left(\frac{12t\mu_{D^2}}{1-\epsilon}\right)<\infty.
\end{eqnarray*}

The above arguments show that there exists $h_2(\epsilon,t)<\infty$ such that $h_2(N,\epsilon,t)<h_2(\epsilon,t)$ for all $N \ge N(\epsilon,t)$. Now $Y_N(t) \le N$ for all $N$, so $\EX{Y_N(t)^{2+\delta}}\le N(\epsilon,t)^{2+\delta}$ for $N<N(\epsilon,t)$.
Thus, recalling~\eqref{EYN2bound}, the sequence $\left(\EX{Y_N(t)^{2+\delta}}\right)$ is bounded and, for any $\alpha \in [0,2]$,  $\EX{Y_N(t)^\alpha} \to \EX{Z(t)^\alpha}$ as $N \to \infty$.

Suppose now that $p_0>0$. Then $D \overset{st}{\le} D'$, where $D'$ has distribution given by $\mathbb{P}(D'=1)=p_0+p_1$ and $\mathbb{P}(D'=k)=p_k$ ($k=2,3,\dots$).  It follows that for fixed population size $N$, fixed $\epsilon \in (0,1)$ and any $t \ge 0$, $T_{N,\epsilon}(t) \overset{st}{\le}  T_{N,\epsilon}'(t)$, where $T_{N,\epsilon}'(t)$ is the total  progeny at time $t$ of the branching process defined analagously to  $\mathcal{B}_{N, \epsilon}$ but using the empirical distribution of $D_1',D_2',\dots,D_N'$, where $D_1',D_2',\dots,D_N'$ are
i.i.d.~copies of $D'$.  Further, $M_{D'^2}(\theta_0)<\infty$ if $M_{D^2}(\theta_0)<\infty$ and the above argument can be used to show that the sequence $\left(\EX{Y_N(t)^{2+\delta}}\right)$ is bounded.

The above argument is easily adapted to show that in the prescribed degree case $\lim_{N \to \infty}\EX{Y_N(t)} = \EX{Z(t)}$
under the weaker condition that there exists $\delta>0$ such that $\mu_{D^{2+\delta}}<\infty$ and $\mu_{(D^{(N)})^{2+\delta}} \to \mu_{D^{2+\delta}}$ as $N \to \infty$.  Moreover, although we have not worked through all of the details, it seems likely that the argument can also be adapted to prove that, for any $\alpha>1$, if there exists  $\delta>1$ such that $\mu_{D^{\alpha+\delta}}<\infty$ and $\mu_{(D^{(N)})^{\alpha+\delta}} \to \mu_{D^{\alpha+\delta}}$ as $N \to \infty$
then $\lim_{N \to \infty}\EX{Y_N(t)^\alpha} = \EX{Z(t)^\alpha}$ for all $t \ge 0$.  Again we have not worked through all of the details but in the i.i.d.~degree case it seems likely that the above condition on $M_{D^2}(\theta)$ quarantees that
$\lim_{N \to \infty}\EX{Y_N(t)^\alpha} = \EX{Z(t)^\alpha}$ for all $\alpha,t \ge 0$. Finally, in the i.i.d.~degree case it seems likely that weaker conditions will suffice when the limiting branching process $\mathcal{B}$ is subcritical, i.e. when $r<0$, as in that case the exponential functions appearing in $\EX{T_{N,\epsilon}(t)^{2+\delta}}$, prior to taking expectations, will all have negative arguments.

\section{Eigenvalues of $\mygmat{\Omega}$}
\label{app:eigenvalues}

Let $\mygmat{A}=[a_{l,k}]=\mygmat{\Omega} - \lambda \mygmat{I}$.  Observe that
$a_{0,0}=-(\lambda + \gamma)$, $a_{0,k}=0$ for $k=1,2,\ldots,\kmax$,
$a_{l,\kmax}=0$ for $l=1,2,\ldots,\kmax-1$ and $a_{\kmax,\kmax}=-(\kmax \tau +
\lambda + \gamma)$.  Thus, expanding the determinant $|\mygmat{A}|$ along the $0$-th row and
then the cofactor $A_{0,0}$ down the last column yields
\be
\label{detA}
|\mygmat{A}|=(\lambda + \gamma)(\kmax \tau + \lambda + \gamma)|\mygmat{B}|  \text{ ,}
\ee
where
\ba
\nonumber
\mygmat{B} &=
\begin{bmatrix}
	\tau(\dd_2 -1) -\lambda - \gamma & \tau \dd_3 & \cdots & \tau \dd_{\kmax} \\
	2\tau(\dd_2 + 1) & 2\tau(\dd_3 -1) -\lambda - \gamma & \cdots & 2 \tau \dd_{\kmax} \\
	\vdots & \vdots & \ddots & \vdots \\
	(\kmax-1)\tau \dd_2 & (\kmax-1) \tau \dd_3 & \cdots & (\kmax-1) \tau (\dd_{\kmax}-1)
	- \lambda - \gamma
\end{bmatrix}\text{ .}\\
\ea
More precisely, $\mygmat{B}$ is the $(\kmax-1) \times (\kmax-1)$ matrix with elements
\be
\nonumber
b_{l,k}= \tau l \left(\dd_{k+1} + \delta_{l,k+1}\right)
 -(\gamma + \tau l + \lambda) \delta_{l,k} \qquad (l,k=1,2,\ldots,\kmax-1) \text{ .}
\ee
Subtracting $l \times$ the first row of $\mygmat{B}$ from the $l$-th row of
$\mygmat{B}$, for $l=2,3,\ldots,\kmax-1$, now gives
$|\mygmat{B}|=|\mygmat{C}|$, where
\ba
\nonumber
\mygmat{C} &=
\begin{bmatrix}
	\tau(\dd_2 -1) -\lambda - \gamma & \tau \dd_3 & \cdots & \tau \dd_{\kmax} \\
	2(2\tau + \lambda + \gamma) & -2\tau -\lambda - \gamma & \cdots & 0 \\
	3(\tau + \lambda + \gamma) & 3\tau & \cdots & 0 \\
	\vdots & \vdots & \ddots & \vdots \\
	(\kmax-1) (\tau + \lambda + \gamma) & 0 & \cdots & -(\kmax-1) \tau - \lambda - \gamma
\end{bmatrix}
\ea
has elements given by
\ba
\nonumber
c_{1,k}&=\tau \dd_{k+1} -(\gamma + \tau + \lambda) \delta_{1,k} & (k=1,2,\ldots,\kmax-1)\text{ ,} \\
c_{l,k}&= \tau l \delta_{l,k+1}
-(\gamma + \tau l + \lambda) \delta_{l,k} +l(\gamma + \tau + \lambda)\delta_{1,k} \qquad & (l=2,3,\ldots,\kmax-1; k=1,2,\ldots,\kmax-1) \text{ .}
\ea
In particular, $c_{1,1}= \tau(\dd_2-1)-\gamma-\lambda$, $c_{1,k}=\tau
\dd_{k+1}$ for $k=2,3,\ldots,\kmax-1$, and $c_{l,k}=0$ for $2 \le l<k\le
\kmax-1$.  Thus, adding $k \times$ the $k$-th column of $\mygmat{C}$ to the
first column of $\mygmat{C}$, for $k=2,3,\ldots,\kmax-1$, yields
$|\mygmat{C}|=|\mygmat{D}|$, where
\ba
\nonumber
\mygmat{D} & =
\begin{bmatrix}
	\tau(({\textstyle \sum_{l=0}^{\kmax}} l \dd_{l+1}) -1)
	  -\lambda - \gamma & \tau \dd_3 & \cdots & \tau \dd_{\kmax} \\
	0 & -2\tau -\lambda - \gamma & \cdots & 0 \\
	0 & 3\tau & \cdots & 0 \\
	\vdots & \vdots & \ddots & \vdots \\
	0 & 0 & \cdots & -(\kmax-1) \tau - \lambda - \gamma
\end{bmatrix} \text{ .}
\ea
Note that $d_{1,1}=\tau\left[\left(\sum_{l=0}^{\kmax} l \dd_{l+1}\right)
-1\right]-\gamma-\lambda$ and, for $l \ge 2$, that $d_{l,1}=0, d_{l,l}=-(\gamma
+ \tau + \lambda)$ and $d_{l,k}=0$ for $k>l$.  Thus expanding $|\mygmat{D}|$
down the first column gives
\be
\nonumber
|\mygmat{D}|=\left\{\tau\left[\left(\sum_{l=0}^{\kmax} l \dd_{l+1}\right) -1\right]-\gamma-\lambda\right\}\left(-1\right)^{\kmax-2}\prod_{l=2}^{\kmax-1}
\left(l\tau+\gamma+\lambda\right)\text{ .}
\ee
Recalling~\eqref{detA} and $|\mygmat{B}|=|\mygmat{C}|=|\mygmat{D}|$, it follows that the eigenvalues of
$\mygmat{\Omega}$ are given by~\eqref{eigomega}.

\section{Derivation of variance} \label{app:variance} Recall the definition of
$\mymat{C}_k$ at~\eqref{C_k}. Note, using~\eqref{fac2off}, that for
$k\in\mathcal{K}$, \ba \label{onesCketc} \TR{\myvec{1}}\mymat{C}_k \myvec{1} &
= \tau k + \gamma \text{ ,}\\ \TR{\myvec{1}}\mymat{C}_k \myvec{n} & =
\TR{\myvec{n}}\mymat{C}_k \myvec{1} = (r + 2\gamma) k \text{ ,}\\
\TR{\myvec{n}}\mymat{C}_k \myvec{n} & = \tau \mu_{(\tilde{D}-2)^2} k + \gamma
k^2 \text{ .} \ea Now let $\myvec{c}_{11}$ be a column vector whose $k$-th
element is $\TR{\myvec{1}}\mymat{C}_k \myvec{1}$, and define $\myvec{c}_{1n}$
and $\myvec{c}_{nn}$ similarly, using $\TR{\myvec{1}}\mymat{C}_k \myvec{n}$ and
$\TR{\myvec{n}}\mymat{C}_k \myvec{n}$, respectively. Noting that $r +
2\gamma=\gamma+\tau\mu_{\tilde{D}-2}$, in this more compact
notation~\eqref{onesCketc} becomes \ba \label{vecc11etc} \myvec{c}_{11} & =
\tau \myvec{n} + \gamma \ones \text{ ,}\\ \myvec{c}_{1n} & = \left(\gamma+\tau
\mu_{\tilde{D}-2}\right) \myvec{n} \text{ ,}\\ \myvec{c}_{nn} & = \tau
\mu_{(\tilde{D}-2)^2} \myvec{n} + \gamma \myvec{n}_2 \text{ ,} \ea
and~\eqref{vk} yields \ba \myvec{v}(t) & = \int_{0}^{t} \left(
\mu_{\tilde{D}-2}^{-1} \left({\rm e}^{r(t-u)} - {\rm e}^{-\gamma (t-u)}
\right)\right)^2 {\rm e}^{\mygvec{\Omega} u}\myvec{c}_{nn}  \; \dint{u} \\
&\quad + 2 \int_{0}^{t} \mu_{\tilde{D}-2}^{-1}  \left({\rm e}^{r(t-u)} - {\rm
e}^{-\gamma (t-u)} \right){\rm e}^{-\gamma (t-u)} {\rm e}^{\mygvec{\Omega}
u}\myvec{c}_{1n}  \; \dint{u}\\ &\quad + \int_{0}^{t} {\rm e}^{-2 \gamma (t-u)}
	{\rm e}^{\mygvec{\Omega} u}\myvec{c}_{11} \; \dint{u} \text{ .} \label{vvec}
	\ea Now \be \mygmat{\Omega} \myvec{n}_2  = \tau
	\mu_{(\tilde{D}-1)^2+1}\myvec{n} - (2\tau + \gamma) \myvec{n}_2\text{ ,} \ee
	so, using  Proposition~\ref{prop:matid}, with $\mymat{M}=\mygmat{\Omega},
	\myvec{x}=\myvec{n}_2 \, \myvec{y}=\myvec{n}, a=-(2\tau+\gamma),
	b=\tau\mu_{(\tilde{D}-1)^2+1}$ and $c=r$, and recalling
	from~\eqref{growthrate} that $r+\gamma=\mu_{\tilde{D}-2}\tau$ so
	$a-c=-\tau(2+\mu_{\tilde{D}-2})=-\tau \mu_{\tilde{D}}$, we have \be {\rm
	e}^{\mygvec{\Omega} u}\myvec{n}_2  =
	\mu_{\tilde{D}}^{-1}\mu_{(\tilde{D}-1)^2+1} \left( {\rm e}^{ru} - {\rm
	e}^{-(2\tau+\gamma)u} \right) \myvec{n} + {\rm e}^{-(2\tau+\gamma)u}
	\myvec{n}_2 \text{ .} \ee Hence, using also~\eqref{n1}, \ba \label{eOmegac}
	{\rm e}^{\mygvec{\Omega} u}\myvec{c}_{11} & =
	\mu_{\tilde{D}-2}^{-1}\left(\gamma+\tau \mu_{\tilde{D}-2}\right){\rm
	e}^{ru}\myvec{n} +\gamma{\rm e}^{-\gamma
	u}\left(\ones-\mu_{\tilde{D}-2}^{-1}\myvec{n}\right)\text{ ,}\\ {\rm
	e}^{\mygvec{\Omega} u}\myvec{c}_{1n} & = \left(\gamma+\tau
	\mu_{\tilde{D}-2}\right){\rm e}^{ru}\myvec{n}\text{ ,}\\ {\rm
	e}^{\mygvec{\Omega} u}\myvec{c}_{nn} & = \left(\tau
	\mu_{(\tilde{D}-2)^2}+\gamma \mu_{\tilde{D}}^{-1}
\mu_{(\tilde{D}-1)^2+1}\right){\rm e}^{ru}\myvec{n}+\gamma{\rm
e}^{-(2\tau+\gamma)u}\left(\myvec{n}_2-\mu_{\tilde{D}}^{-1}
\mu_{(\tilde{D}-1)^2+1}\myvec{n}\right) \text{ .} \ea

\noindent{}Let $I_1(t)=\int_0^t {\rm e}^{-2 \gamma (t-u)} {\rm e}^{ru} \; \dint{u}, I_2(t)=\int_0^t {\rm e}^{-2 \gamma (t-u)} {\rm e}^{-\gamma u} \; \dint{u}, I_3(t)=\int_0^t\left({\rm e}^{r(t-u)}-{\rm e}^{-\gamma(t-u)}\right){\rm e}^{-\gamma(t-u)} {\rm e}^{ru} \; \dint{u}$, $I_4(t)=\int_0^t \left({\rm e}^{r(t-u)}-{\rm e}^{-\gamma(t-u)}\right)^2 {\rm e}^{ru} \; \dint{u}$ and
$I_5(t)=\int_0^t \left({\rm e}^{r(t-u)}-{\rm e}^{-\gamma(t-u)}\right)^2 {\rm e}^{-(2\tau+\gamma)u}\; \dint{u}$.  It is easily verified that these integrals are given by~\eqref{integrals}.
Substituting~\eqref{eOmegac} into~\eqref{vvec} and using~\eqref{integrals} yields~\eqref{vvec1}.

\section{Derivation of covariance function}
\label{app:covariance}
For $t \ge 0$, let $\myvec{u}(t)$ be the column vector whose $k$-th element is
$\TR{\myvec{1}}\mymat{V}^{(k)}(t)\myvec{n}$. Arguing as in the derivation of~\eqref{vvec} yields
\ba
\label{uvec}
\myvec{u}(t) &= \int_{0}^{t}
\mu_{\tilde{D}-2}^{-1} \left({\rm e}^{r(t-u)} - {\rm e}^{-\gamma (t-u)}
\right){\rm e}^{r(t-u)} {\rm e}^{\mygvec{\Omega} u} \myvec{c}_{nn}  \; \dint{u}\\
&\quad + \int_{0}^{t} {\rm e}^{-\gamma (t-u)} {\rm e}^{r(t-u)} {\rm e}^{\mygvec{\Omega} u}\myvec{c}_{1n}  \; \dint{u}\text{ .}
\ea
Using~\eqref{eOmegac} now gives
\be
\label{uvec1}
\myvec{u}(t)= \beta_1(t) \myvec{n} +  \beta_2(t) \myvec{n}_2\text{ ,}
\ee
where
\ba
\label{betas}
\beta_1(t) &= \gamma \mu_{\tilde{D}-2}^{-1}\left[\mu_{\tilde{D}}^{-1}\mu_{(\tilde{D}-1)^2+1}(I_7(t)-I_8(t))+\mu_{\tilde{D}-2}I_6(t)\right]\\
&\quad+\tau\mu_{\tilde{D}-2}^{-1}\left[\mu_{(\tilde{D}-2)^2}I_7(t)+ \mu_{\tilde{D}-2}^2 I_6(t)\right]\text{ ,}\\
\beta_2(t) &= \gamma \mu_{\tilde{D}-2}^{-1} I_8(t)\text{ ,}
\ea
with $I_6(t)=\int_0^t {\rm e}^{- \gamma (t-u)} {\rm e}^{r (t-u)}{\rm e}^{ru} \; \dint{u}, I_7(t)=\int_0^t \left({\rm e}^{r(t-u)} - {\rm e}^{-\gamma (t-u)}\right){\rm e}^{r(t-u)}{\rm e}^{ru} \; \dint{u}$ and \newline $I_8(t)=\int_0^t \left({\rm e}^{r(t-u)} - {\rm e}^{-\gamma (t-u)}\right){\rm e}^{r(t-u)} {\rm e}^{-(2\tau+\gamma)u}\; \dint{u}$.  It is easily verified that these integrals are given by~\eqref{integrals1}.  The expression~\eqref{sigmavec} for the covariance follows using~\eqref{sigmast1},~\eqref{uvec},~\eqref{uvec1} and~\eqref{vk}.

\section{Late behaviour of subcritical survival probabilities}
\label{app:survival}

To bound the late probabilities of survival in the subcritical case, first note
that due to the inability of type-$0$ individuals to transmit we have
\be
q_0(t) = {\rm e}^{-\gamma t} \text{ .} \label{q0sol}
\ee
Recalling that $q_k(t)=1-\pi_k(t)$, it follows from~\eqref{qqext}, with
$\kmax=\infty$, that for $k>0$,
\be
\ddt{q_k} = -(\gamma + \tau k)q_k + \tau k q_{k-1} + \tau k (1-q_{k-1})\sum_{l=0}^{\infty}
\tilde{p}_{l+1}q_l \text{ .} \label{qkext}
\ee
This leads to the equation for $k=1$ in
the form
\be
\ddt{q_1}  = r q_1 + h(t) \text{ ,}
\label{q1ext}
\ee
where
\be
h(t)  = \underbrace{\tau {\rm e}^{-\gamma t} \left(1 + \tilde{p}_{1} - \sum_{k=0}^{\infty}
\tilde{p}_{k+1}q_k(t)\right)}_{h_1(t)} + \underbrace{\tau \sum_{k=1}^{\infty} \tilde{p}_{k+1}
\left(q_k(t) - k q_1(t)\right)}_{h_2(t)} \text{ .}
\label{heq}
\ee
We assume first that $r>-\gamma$. Integrating~\eqref{q1ext} gives
\be
{\rm e}^{-r t}q_1(t) = 1 - \int_{0}^{t} {\rm e}^{-rt} h(u)\dint{u} \text{ .}
\nonumber
\ee
The limit $\lim_{t\rightarrow\infty}{\rm e}^{-rt}q_1(t)$ therefore exists if
$r$ is within the region of convergence of the Laplace transform of $h$.
Considering $h_1(t)$ in~\eqref{heq}, since $q_k(t)\in[0,1]$, for all $k$, we
have that $\tilde{p}_{1} \le h_1(t) \leq \tau (1+\tilde{p}_{1}) {\rm e}^{-\gamma t}$, so the
Laplace transform of $h_1$ converges by the assumption that $r>-\gamma$.

Now considering $h_2$, we follow Windridge~\cite{Windridge:2014} and consider
an initial individual with effective degree $k$, and stubs labelled by integer
$i=1,2,\ldots,k$. Let $T_i$ be the time that the individual and its progeny
through stub $i$ exist, and let $T$ be the lifetime of the branching process.  Then, using the Bonferroni
inequalities as in~\cite{Windridge:2014}, for $k \ge 1$,
\be
\{T>t\} = \bigcup_{i=1}^{k} \{T_i>t\} \text{ ,}\qquad
q_k(t) \leq k q_1(t) \qquad \mbox{and} \qquad
q_k(t) \geq k q_1(t) - k^2 \mathbb{P}(T_1>t, T_2>t) \text{ .}
\label{qkbounds}
\ee
Therefore, we have that
\be
0 \leq h_2(t) \leq \tau \mu_{(\tilde{D}-1)^2} \mathbb{P}(T_1>t, T_2>t) \text{ .}
\nonumber
\ee
Now, from~\eqref{qkupper} we have that for some constant $\kappa$,
\be
q_k(t) \leq k \kappa {\rm e}^{rt} \text{ .}
\ee
Writing $R$ for the lifetime of the initial infective individual we have that
\be
\mathbb{P}(T_1>t, T_2>t) = \mathbb{P}(R>t) + \mathbb{P}(T_1>t, T_2>t, R\leq t)
\text{ .}
\nonumber
\ee
We then recall that $\mathbb{P}(R>t) = q_0(t) = {\rm e}^{-\gamma t}$, after
which the argument follows closely that of~\cite{Windridge:2014} (as in the
derivation of~\eqref{pRT1T2bound} below) and we find
that
\be
\lim_{t \to \infty} \int_0^\infty {\rm e}^{-rt}\mathbb{P}(T_1>t, T_2>t) \dint{t} < \infty \text{ .}
\label{P12int}
\ee
Thus the Laplace transform of $h_2$ converges at $r$, whence
$\lim_{t\rightarrow\infty}{\rm e}^{-rt}q_1(t)$ is equal to a finite constant,
$c$ say.  Note that $c$ is strictly positive since $c \ge \lim_{t \to
\infty}{\rm e}^{-rt}\hat{q}_1(t)=\hat{c}>0$.  Further,~\eqref{P12int} implies
that $\lim_{t \to \infty}{\rm e}^{-rt}\mathbb{P}(T_1>t, T_2>t)=0$, and it
follows from the two inequalities in~\eqref{qkbounds} that, $\lim_{t \to
\infty} {\rm e}^{-rt}q_k(t)=kc$, for $k \ge 0$, proving~\eqref{ourq}.\\

\noindent{}We consider now the case when $r<-\gamma$.  For $t\ge 0$ and $k=0,1,\ldots$,
write
\be
\label{qktilde}
q_k(t)={\rm e}^{-\gamma t}+\tilde{q}_k(t) \qquad \mbox{and} \qquad \tilde{u}_k(t)={\rm e}^{\gamma t}\tilde{q}_k(t)\text{ ,}
\ee
so $\tilde{q}_k(t)$ is the probability that the branching process has survived to time $t$ but the initial individual has not,
and $\tilde{u}_0(t)=0$ for all $t \ge 0$.  It follows using~\eqref{qkext} that
\be
\ddt{\tilde{u}_1} = \tau\left[-\tilde{u}_1+(1-{\rm e}^{-\gamma t})\left(1+\sum_{l=1}^{\infty}
\tilde{p}_{l+1}\tilde{u}_l\right)\right] \text{ .}
\label{du1tilde}
\ee
The Bonferroni inequalities yield that, for $k \ge 1$,
\be
k \tilde{q}_1(t)- k^2 \mathbb{P}(R<t, T_1>t, T_2>t) \le \tilde{q}_k(t) \leq k \tilde{q}_1(t) \text{ ,}
\label{bonf}
\ee
so
\be
0 \le \tilde{u}_k(t) \le k \tilde{u}_1(t)\text{ ,}
\label{uktildebound}
\ee
and~\eqref{du1tilde} implies that
\be
\ddt{\tilde{u}_1} \le \tau\left(1+\mu_{\tilde{D}-2}\tilde{u}_1\right)\text{ ,}
\nonumber
\ee
whence, for all $t \ge 0$, recalling that $\mu_{\tilde{D}-2}<0$,
\be
0 \le  \tilde{u}_1(t) \le -\frac{1}{\mu_{\tilde{D}-2}}\left(1-{\rm e}^{\tau \mu_{\tilde{D}-2} t}\right) \le  -\frac{1}{\mu_{\tilde{D}-2}}\text{ .}
\label{u1tildebound}
\ee
Conditioning on the lifetime of the initial individual in the branching process,
\ba
\mathbb{P}(R \le t, T_1>t, T_2>t)&=\int_{u=0}^t \gamma {\rm e}^{-\gamma u} \left[\int_{v=0}^u \tau {\rm e}^{-\tau v}q_1(t-v)\dint{v}\right]^2 \dint{u}\\
&\le \int_{u=0}^t \gamma {\rm e}^{-\gamma u} \left[\int_{v=0}^u \tau {\rm e}^{-\tau v}\left(-\frac{\mu_{\tilde{D}-1}}{\mu_{\tilde{D}-2}}\right) {\rm e}^{-\gamma (t-v)}\dint{v}\right]^2 \dint{u}\text{ ,}
\nonumber
\ea
since~\eqref{qktilde} and~\eqref{u1tildebound} imply that $q_1(t) \le -\frac{\mu_{\tilde{D}-1}}{\mu_{\tilde{D}-2}} {\rm e}^{-\gamma t}$.
Elementary integration then shows that there exists $c'=c'(\mu, \tau,\mu_{\tilde{D}})<\infty$ such that, for all $t \ge 0$,
\be
{\rm e}^{\gamma t}\mathbb{P}(R \le t, T_1>t, T_2>t) \le
\begin{cases}
c'{\rm e}^{-\min\{\gamma,2\tau\}t}  & \text{ if } \gamma \ne 2 \tau \text{ ,}\\
c' t {\rm e}^{-\gamma t} & \text{ if } \gamma = 2 \tau \text{ .}
\end{cases}
\label{pRT1T2bound}
\ee
The differential equation~\eqref{du1tilde} may be written in the form
\be
\ddt{\tilde{u}_1} = \tau(1+\tilde{u}_1)-\tau \underbrace{{\rm e}^{-\gamma t}\left(1+\sum_{l=1}^{\infty}
\tilde{p}_{l+1}\tilde{u}_l\right)}_{h_3(t)}-\tau \underbrace{\sum_{l=2}^\infty \tilde{p}_{l+1}(l \tilde{u}_1-\tilde{u}_l)}_{h_4(t)}
\text{ ,}
\nonumber
\ee
whence
\be
\tilde{u}_1(t)=-\frac{1}{\mu_{\tilde{D}-2}}\left(1-{\rm e}^{\tau \mu_{\tilde{D}-2} t}\right)-\tau{\rm e}^{\tau \mu_{\tilde{D}-2} t}
\int_0^t {\rm e}^{-\tau \mu_{\tilde{D}-2} u}\left(h_3(u)+h_4(u)\right)\dint{u}\text{ .}
\label{tilde1ut}
\ee
Now~\eqref{uktildebound} and~\eqref{u1tildebound} imply that $0 \le h_3(t) \le
-\frac{1}{\mu_{\tilde{D}-2}}{\rm e}^{- \gamma t}$, whence
\be
\lim_{t \to \infty} {\rm e}^{\tau \mu_{\tilde{D}-2} t}\int_0^t {\rm e}^{-\tau \mu_{\tilde{D}-2} u}h_3(u)\dint{u}=0\text{ .}
\label{inth3}
\ee
Further, it follows using~\eqref{bonf} and~\eqref{pRT1T2bound} that $0 \le
h_4(t) \le \mu_{(\tilde{D}-1)^2} c'{\rm e}^{-\min\{\gamma,2\tau\}t}$, where
${\rm e}^{-\min\{\gamma,2\tau\}t}$ is replaced by $t {\rm e}^{-\gamma t}$ if
$\gamma = 2 \tau$, whence~\eqref{inth3} also holds when $h_3(u)$ is replaced by
$h_4(u)$.  Letting $t \to \infty$ in~\eqref{tilde1ut} yields~\eqref{ourq1}.

\section*{Acknowledgements}

We gratefully acknowledge support from the Isaac Newton Institute for
Mathematical Sciences, Cambridge, where we held Visiting Fellowships under the
Infectious Disease Dynamics programme and its follow-up meeting, during which
this work was initiated.  TH is supported by the Engineering and Physical
Sciences Research Council (Grant number EP/N033701/1). We would like to thank
Josh Ross for helpful comments on this manuscript.  We would also like to thank
the referees and associate editor for their constructive comments which have
improved the presentation of the paper.

\clearpage

\begin{table}
\begin{tabular}{p{.2\textwidth}p{.475\textwidth}p{.225\textwidth}}
\hline\noalign{\smallskip}
\textbf{Primary Notation} & \textbf{Meaning} & \textbf{Equivalent notation} \\
\noalign{\smallskip}\hline\noalign{\smallskip}
\multicolumn{3}{l}{\textit{Network properties:}}\\
\noalign{\smallskip}\hline\noalign{\smallskip}
$N$ & Size of the population & \\
$D$ & A random variable for an individual's degree & $D_1, D_2, \ldots$ \\
$p_k$ & Probability mass function for $D$ evaluated at $k$ & \\
$\kmax$ & The maximum degree & \\
$\mathcal{K}$ & The set of possible degrees & $\{0,1,\ldots \kmax\}$ \\
$\tilde{D}$ & A random variable for an individual's size-biased degree & $\tilde{D}_1, \tilde{D}_2, \ldots$ \\
$\tilde{p}_k$ & Probability mass function for $\tilde{D}$ evaluated at $k$ & $\mu_D^{-1}kp_k$ \\
\noalign{\smallskip}\hline\noalign{\smallskip}
\multicolumn{3}{l}{\textit{Vectors and matrices:}}\\
\noalign{\smallskip}\hline\noalign{\smallskip}
$\myvec{v}$ & A column vector whose $k$-th entry is $v_k$ & $(v_k)$ \\
$\TR{\myvec{v}}$ & A row vector (transpose of a column vector)& \\
$\mymat{M}$ & A matrix with $(k,l)$-th entry $M_{kl}$ or $m_{kl}$ & $[M_{kl}]$, $[m_{kl}]$ \\
$|\mymat{M}|$ & Determinant of matrix $\mymat{M}$ & \\
$\myvec{1}$ & A column vector whose entries are all equal to 1 & \\
$\myvec{n}$ & A column vector whose $i$th entry is $i$ & \\
$\myvec{n}_2$ & A column vector whose $i$th entry is $i^2$ & \\
$\mymat{I}$ & The identity matrix & $[\delta_{k,l}]$ \\
\noalign{\smallskip}\hline\noalign{\smallskip}
\multicolumn{3}{l}{\textit{Probability:}}\\
\noalign{\smallskip}\hline\noalign{\smallskip}
$\mathbb{P}(e)$ & Probability of event $e$ & \\
$\mu_{f(X)}$ & Expected value of a function $f$ of a random variable $X$ & $\EX{f(X)}$ \\
$M_X(\theta) $ & Moment generating function for random variable $X$ & $\EX{\mathrm{exp}(\theta X)}$ \\
$\mathrm{var}(X) $ & Variance of random variable $X$ & $\EX{X^2} - \EX{X}^2$ \\
$\mathrm{cov}(X,Y) $ & Covariance of random variables $X$ and $Y$ & $\EX{X Y} - \EX{X}\EX{Y}$ \\
\noalign{\smallskip}\hline\noalign{\smallskip}
\multicolumn{3}{l}{\textit{Epidemic and branching process dynamics:}}\\
\noalign{\smallskip}\hline\noalign{\smallskip}
$\tau$ & Rate of transmission across a network link & \\
$\gamma$ & Rate of recovery from infection & \\
$\omega_k$ & Death rate for individual of type $k$ & \\
$t$ & Real time & $s$ \\
$\mathcal{B}$ & The limiting branching process & \\
$E_N$ & The epidemic process in a population of size $N$ & \\
$K$ & A large value of infectious population size & \\
$Z^{(k)}_i(t)$ & Random number of individuals of type $i$ in the branching process at time $t$
given initial type $k$ & \\
$\pi_k(t)$ & Probability that the branching process is extinct at time $t$ given initial type $k$ &
$1-q_k(t)$
\\
\noalign{\smallskip}\hline
\end{tabular}
\caption{Here we define notation that is used in multiple sections of the paper.} \label{tab:notation}  
\end{table}

\clearpage

\begin{figure}
	\centering
	\subfloat{
	{\resizebox{.45\textwidth}{!}{\includegraphics{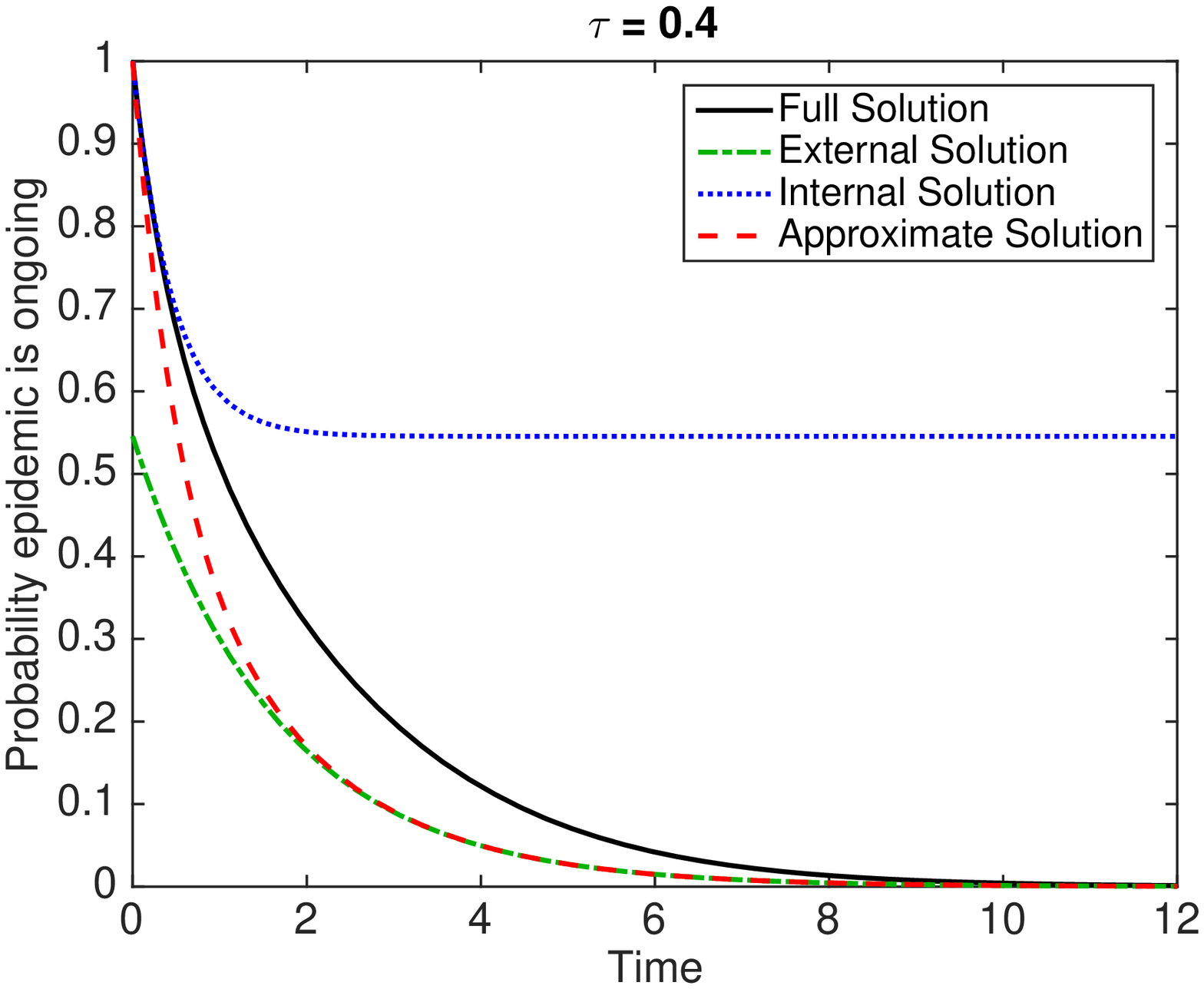}}}
	{\resizebox{.45\textwidth}{!}{\includegraphics{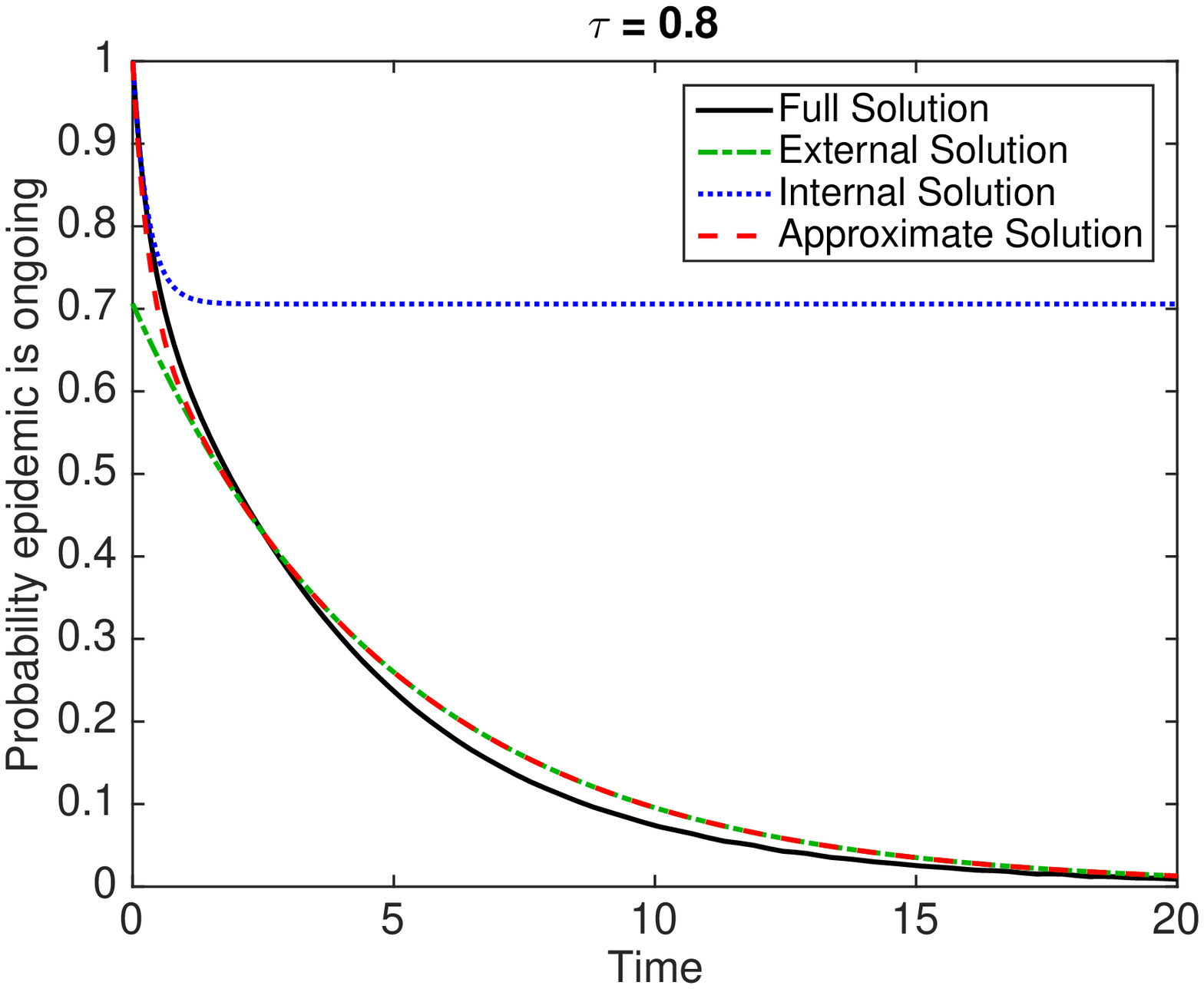}}}
}
\caption{Extinction probability for a subcritical epidemic compared to
	approximations. Results are for a 3-regular graph with $\gamma = 1$ and
	values of $\tau$ indicated in the figure titles. The internal and external
	solutions each fail severely at certain points, but the approximate solution
 crudely captures the overall behaviour.}
	\label{fig:ma}
\end{figure}

\newlength{\fs}
\setlength{\fs}{.3\textwidth}

\begin{figure}
	\centering
	\subfloat[Degree distribution histograms.]{%
	{\resizebox{\fs}{!}{\includegraphics{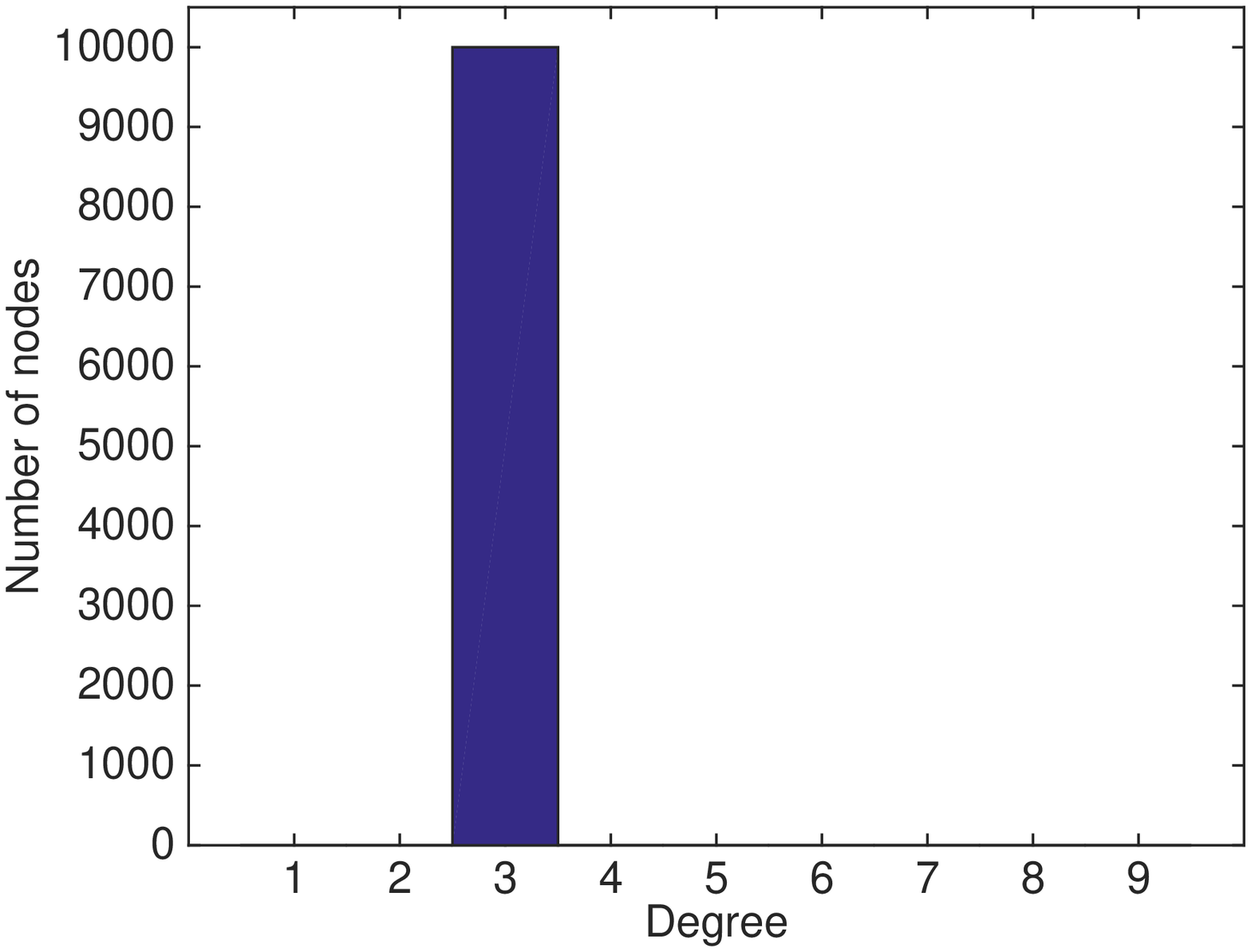} }} 
	{\resizebox{\fs}{!}{\includegraphics{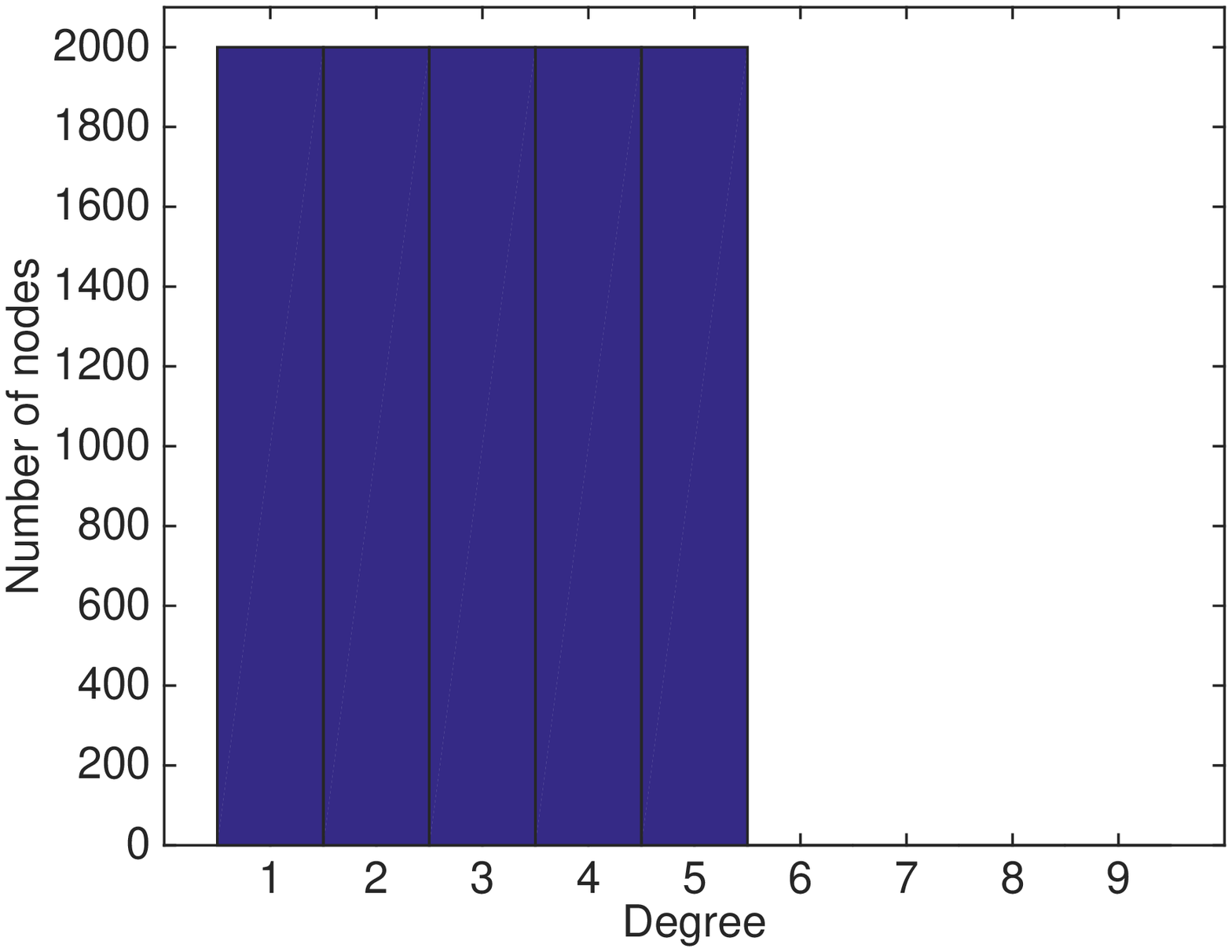} }} 
	{\resizebox{\fs}{!}{\includegraphics{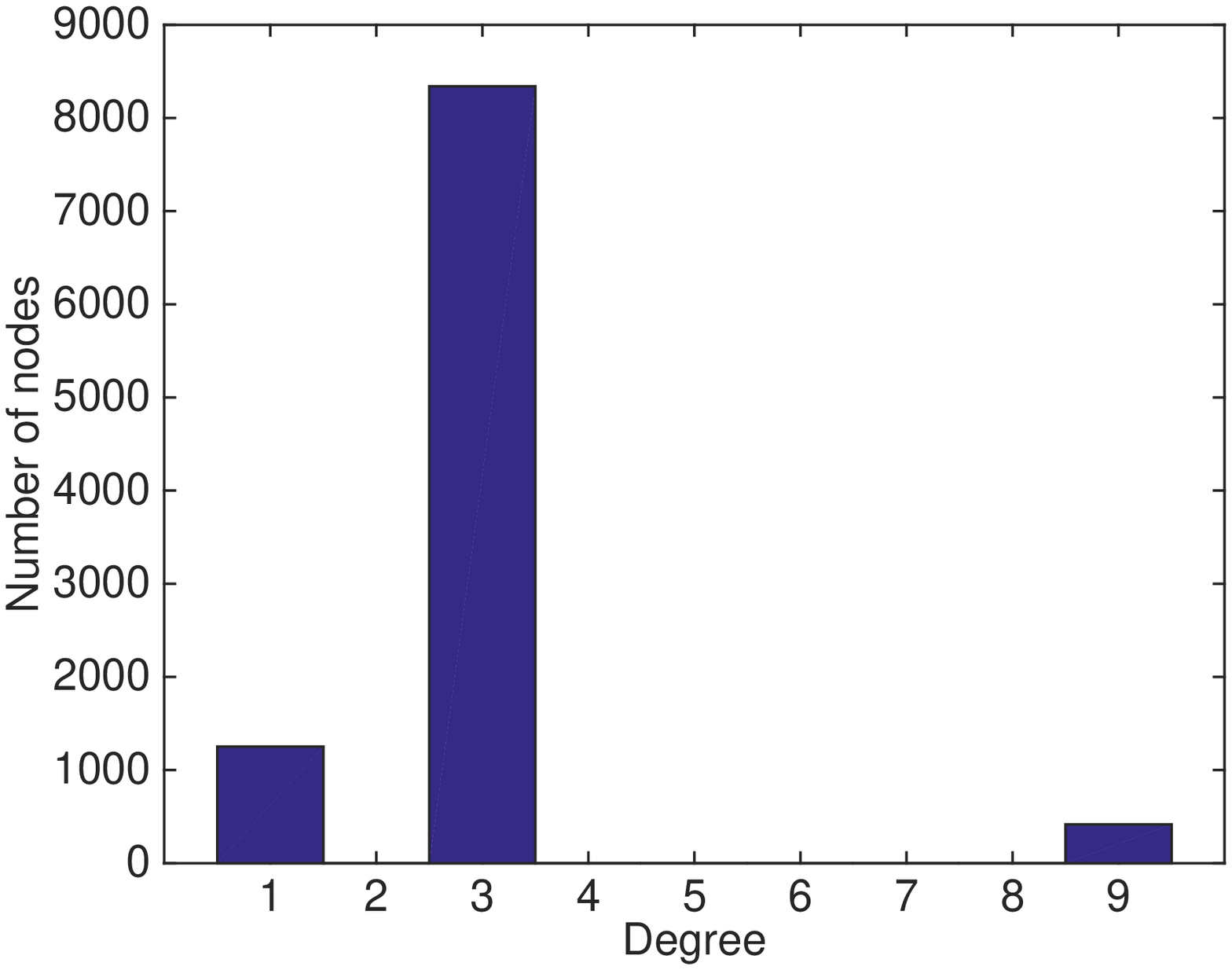} }}} \\
	\subfloat[100 sample trajectories.]{%
		{\resizebox{\fs}{!}{\includegraphics{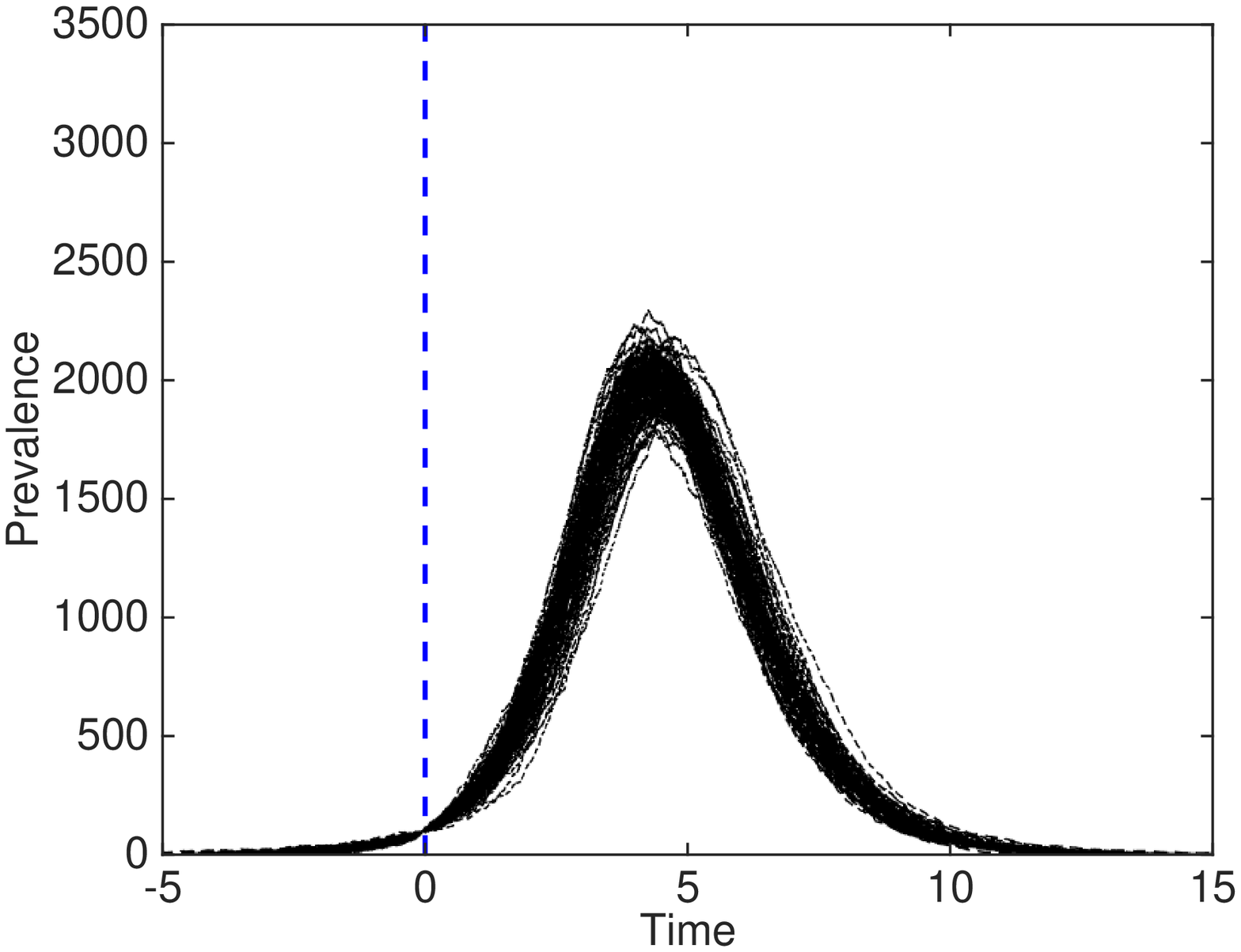} }} 
	{\resizebox{\fs}{!}{\includegraphics{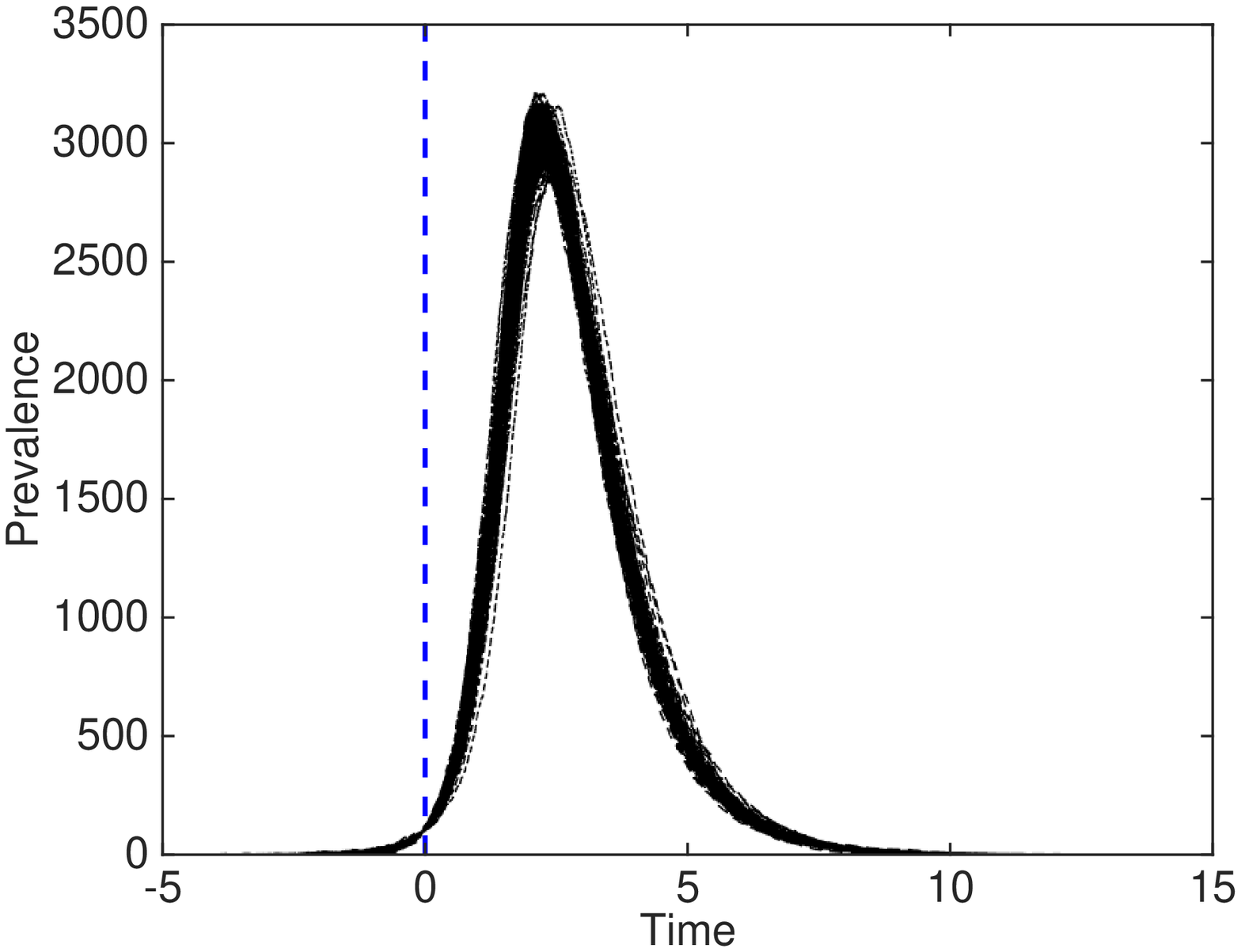} }} 
	{\resizebox{\fs}{!}{\includegraphics{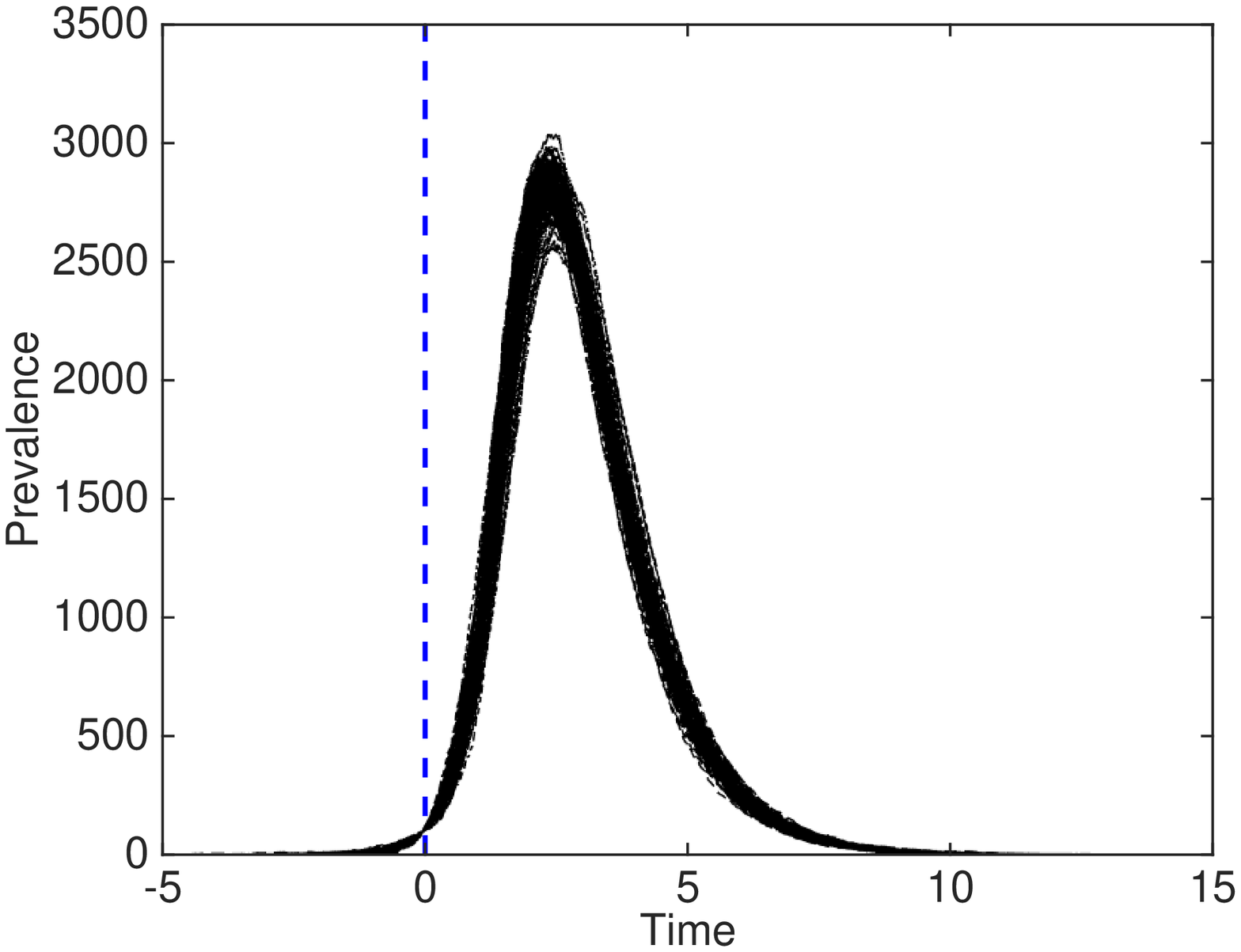} }}} \\
	\subfloat[Mean prevalence. Black solid: simulations; Red dashed: branching process.]{%
		{\resizebox{\fs}{!}{\includegraphics{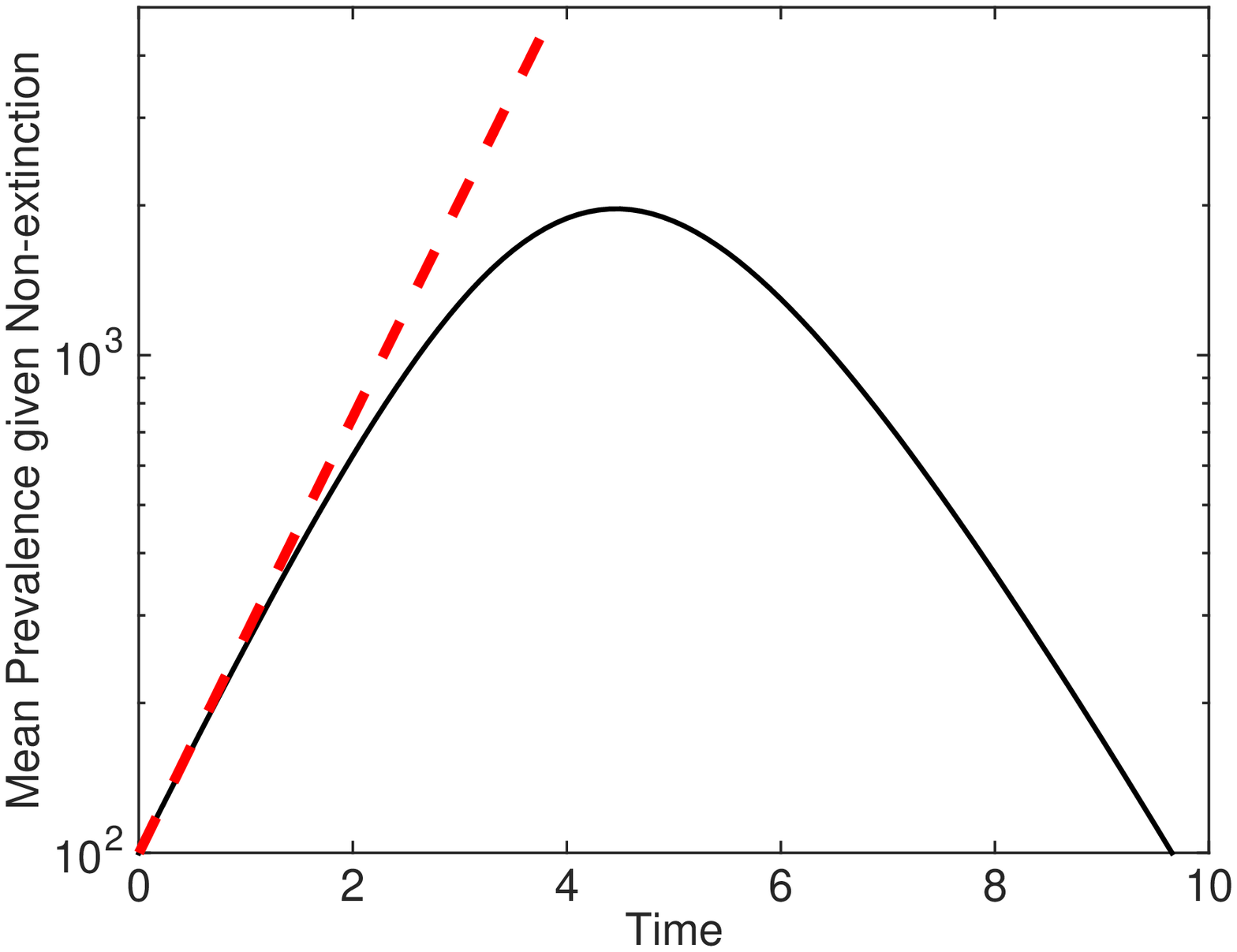} }} 
	{\resizebox{\fs}{!}{\includegraphics{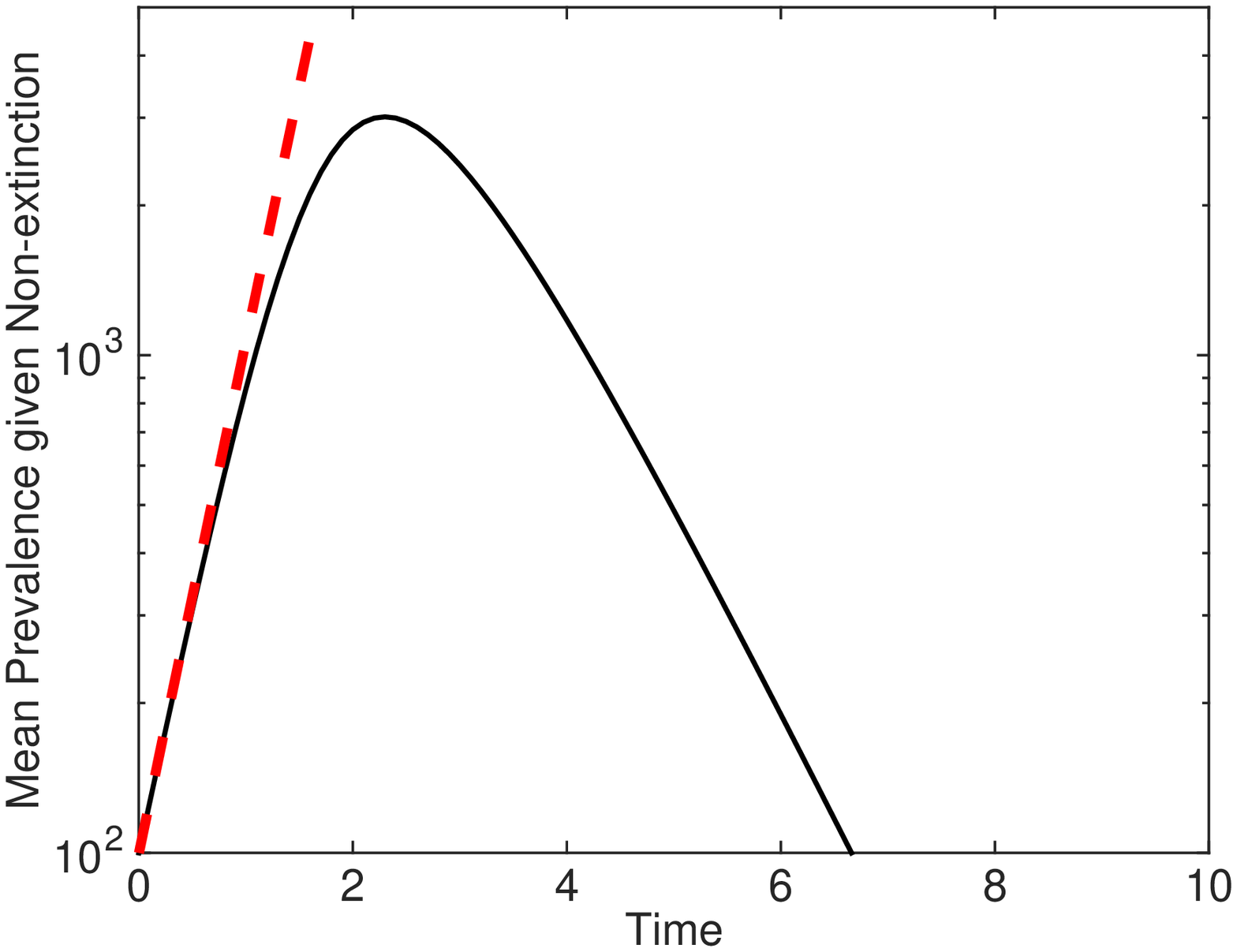} }} 
	{\resizebox{\fs}{!}{\includegraphics{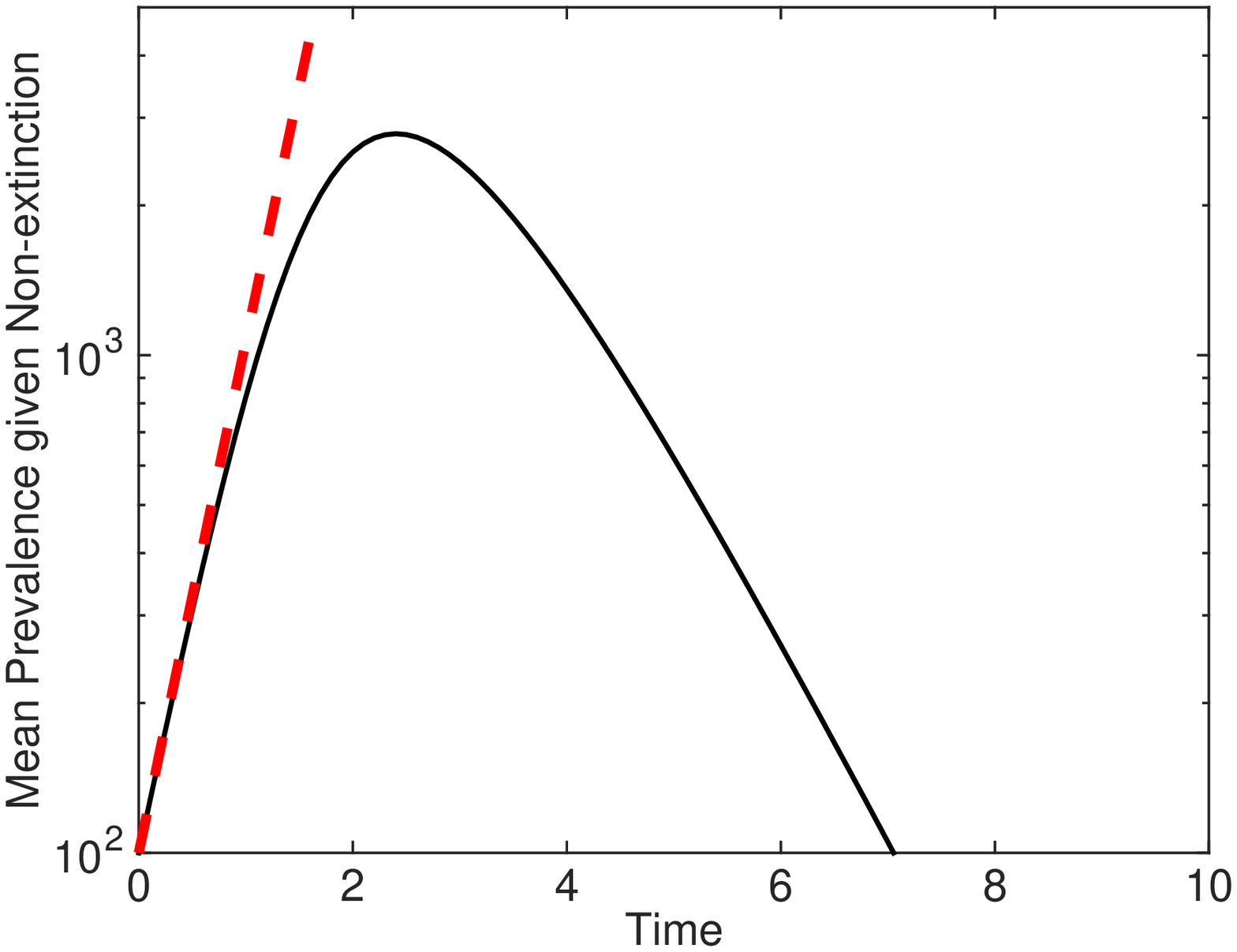} }}} \\
	\subfloat[Variance in prevalence. Black solid: simulations; Red dashed: branching process.]{%
		{\resizebox{\fs}{!}{\includegraphics{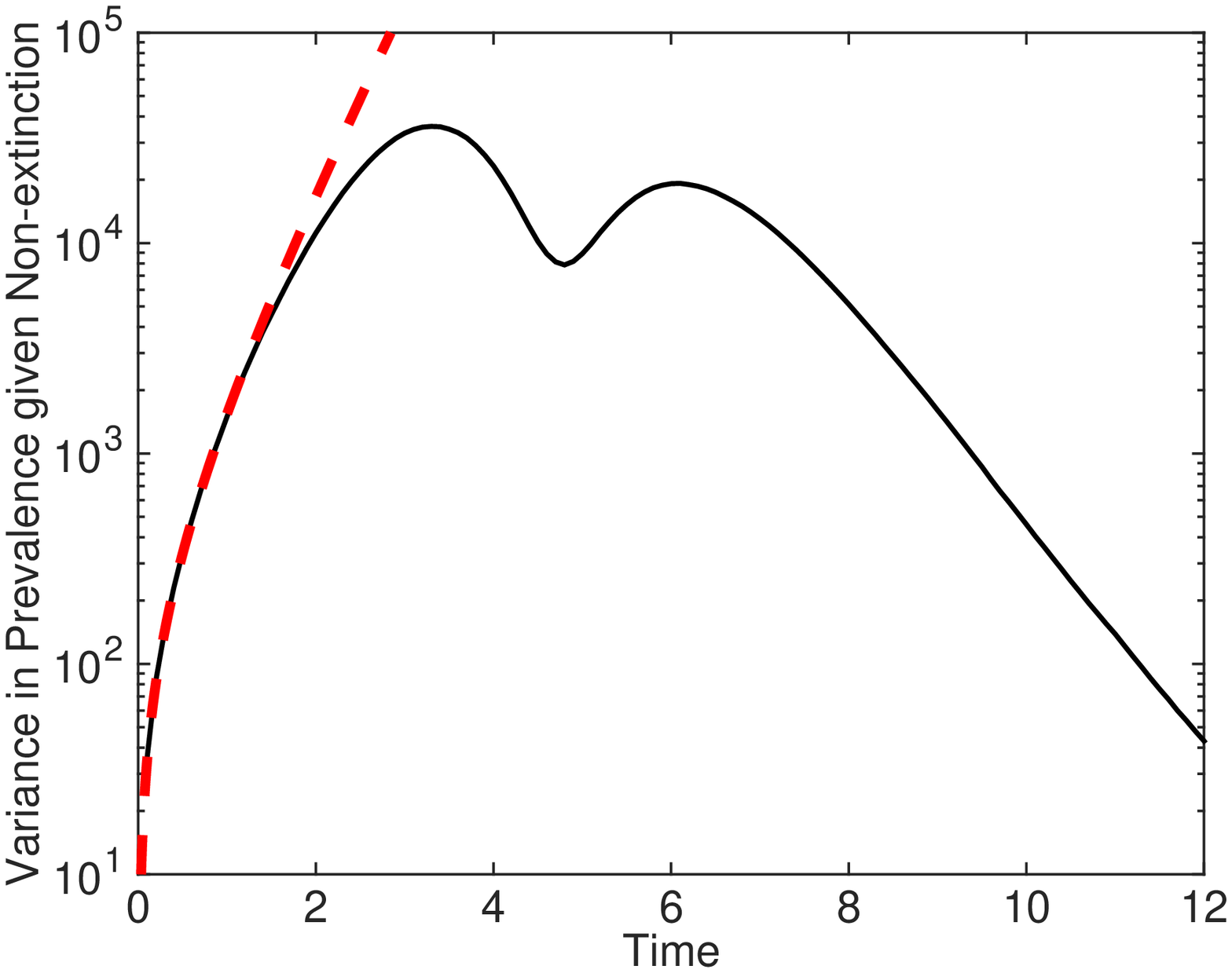} }} 
	{\resizebox{\fs}{!}{\includegraphics{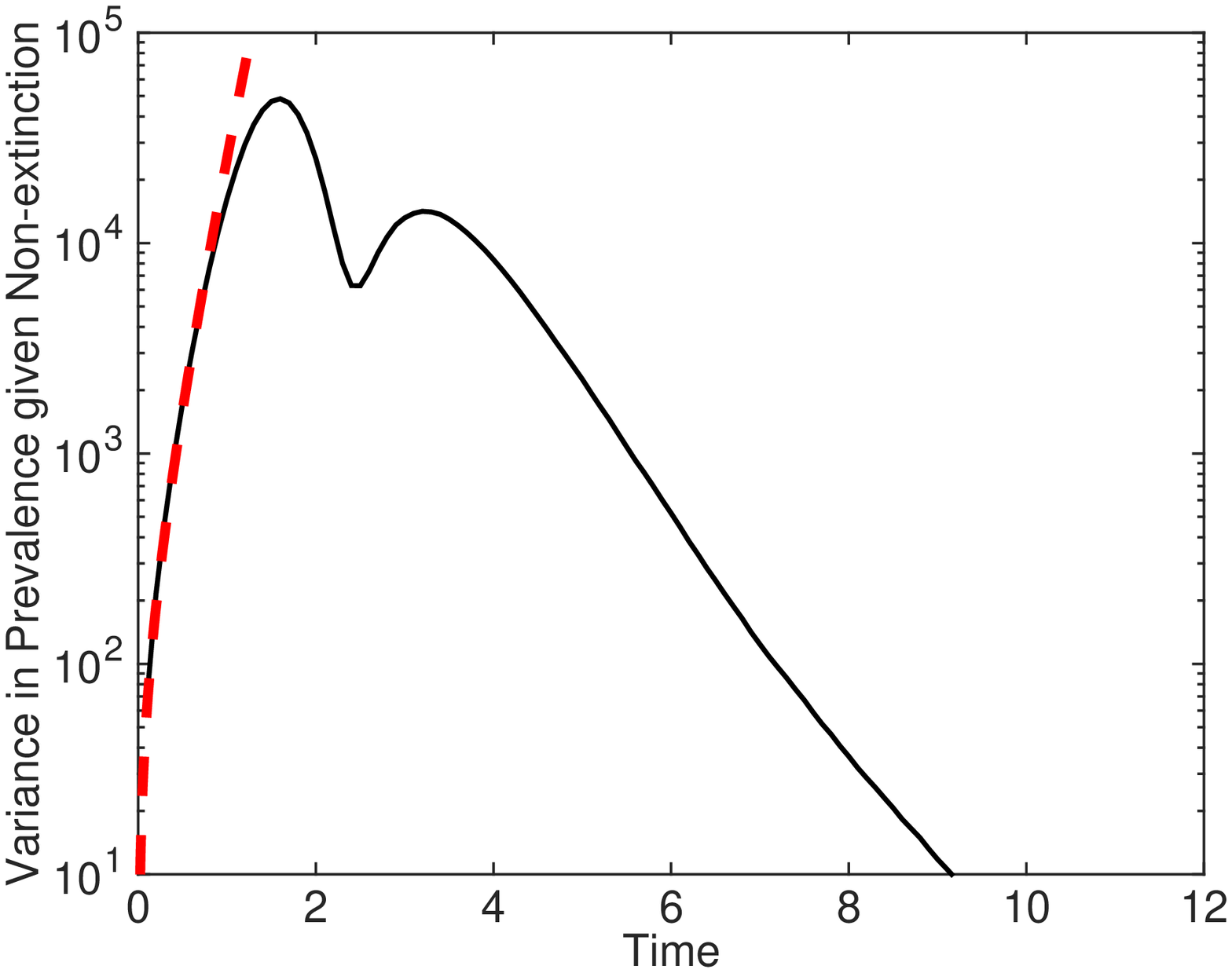} }} 
	{\resizebox{\fs}{!}{\includegraphics{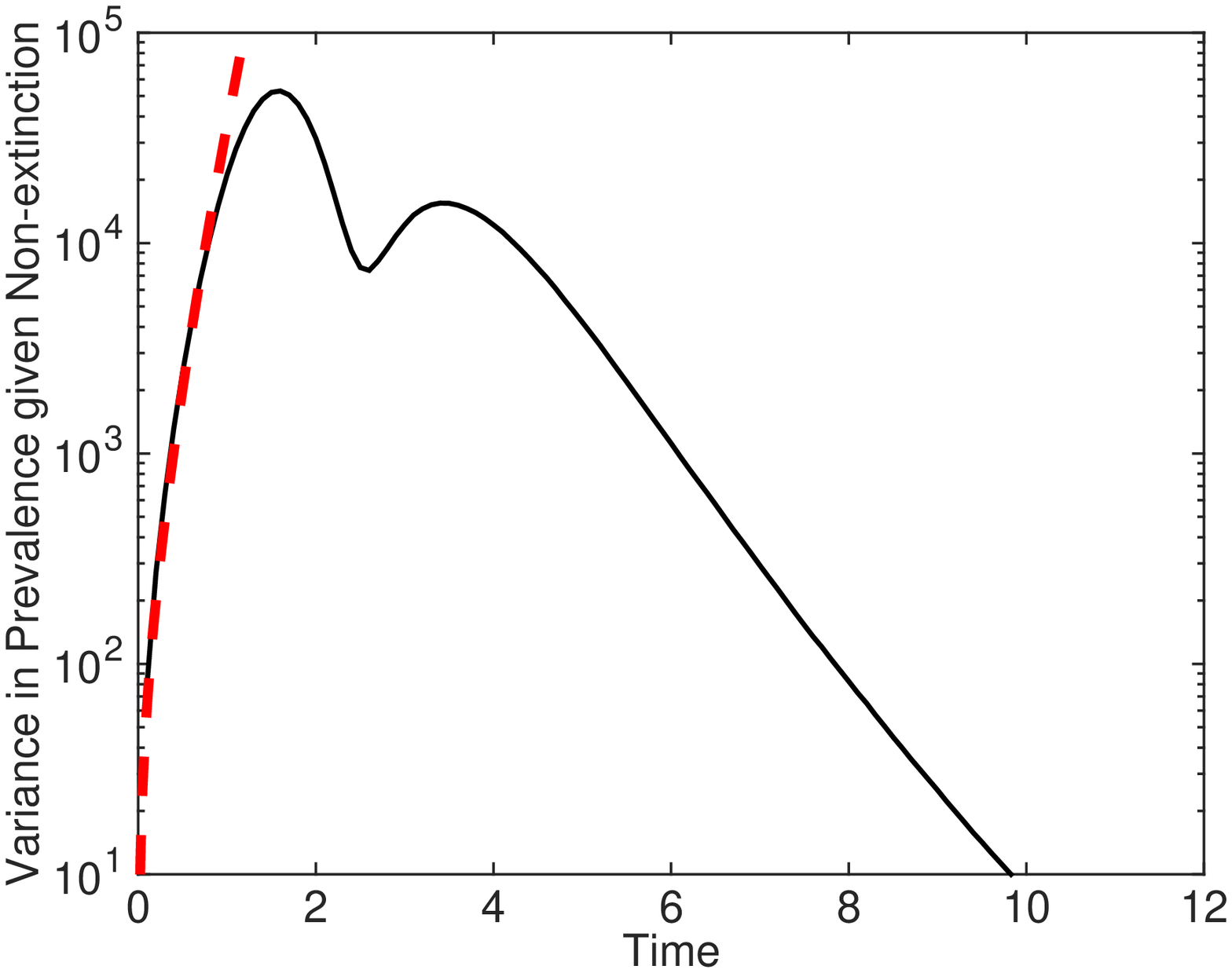} }}} \\
	\caption{Epidemic simulations that set $\mathrm{time}=0$ when prevalence is
	equal to 100. Parameters are $\tau=2$, $\gamma=1$ throughout.}
	\label{fig:restart}
\end{figure}

\begin{figure}
	\centering
	\subfloat[100 sample trajectories.]{%
	{\resizebox{\fs}{!}{\includegraphics{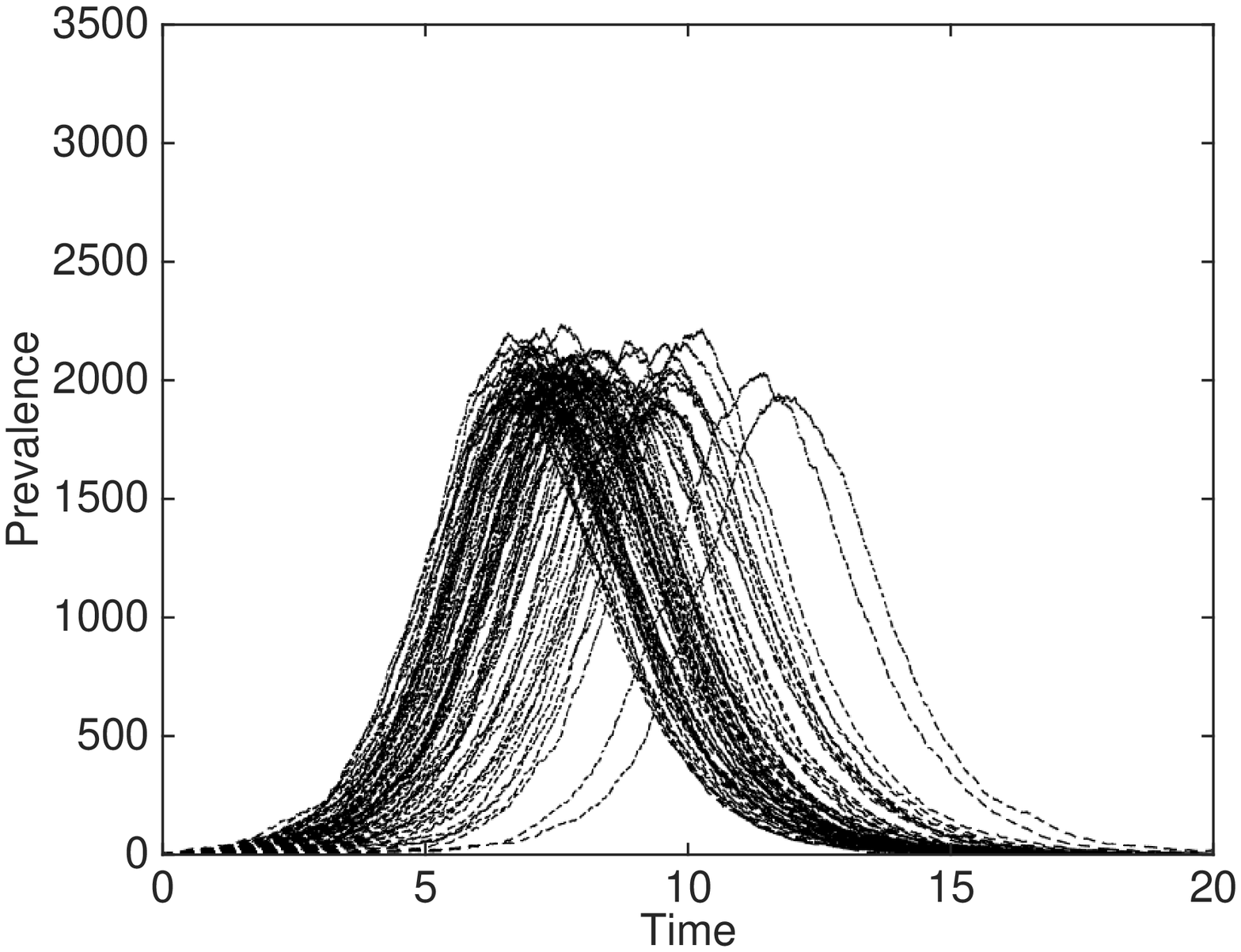} }} 
	{\resizebox{\fs}{!}{\includegraphics{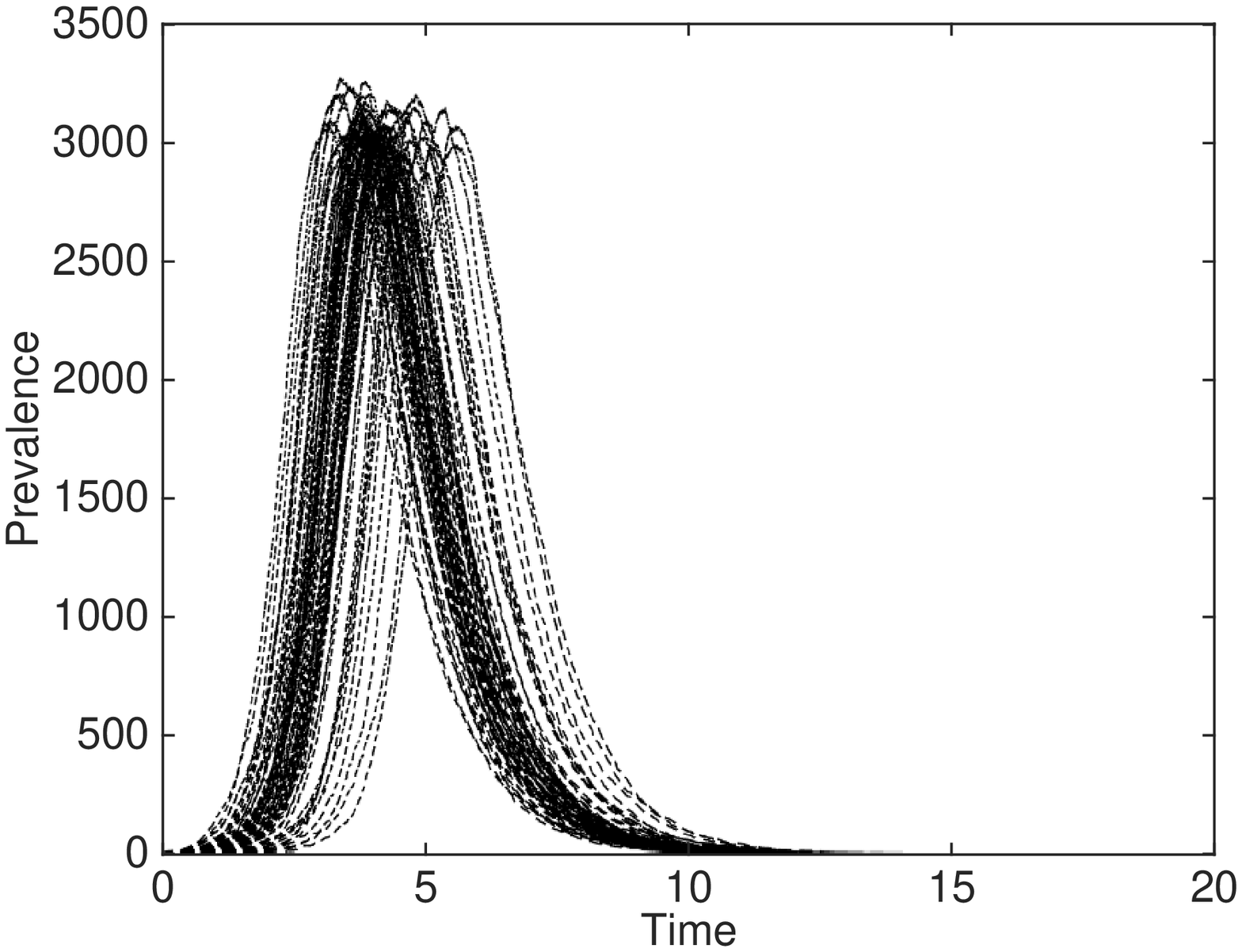} }} 
	{\resizebox{\fs}{!}{\includegraphics{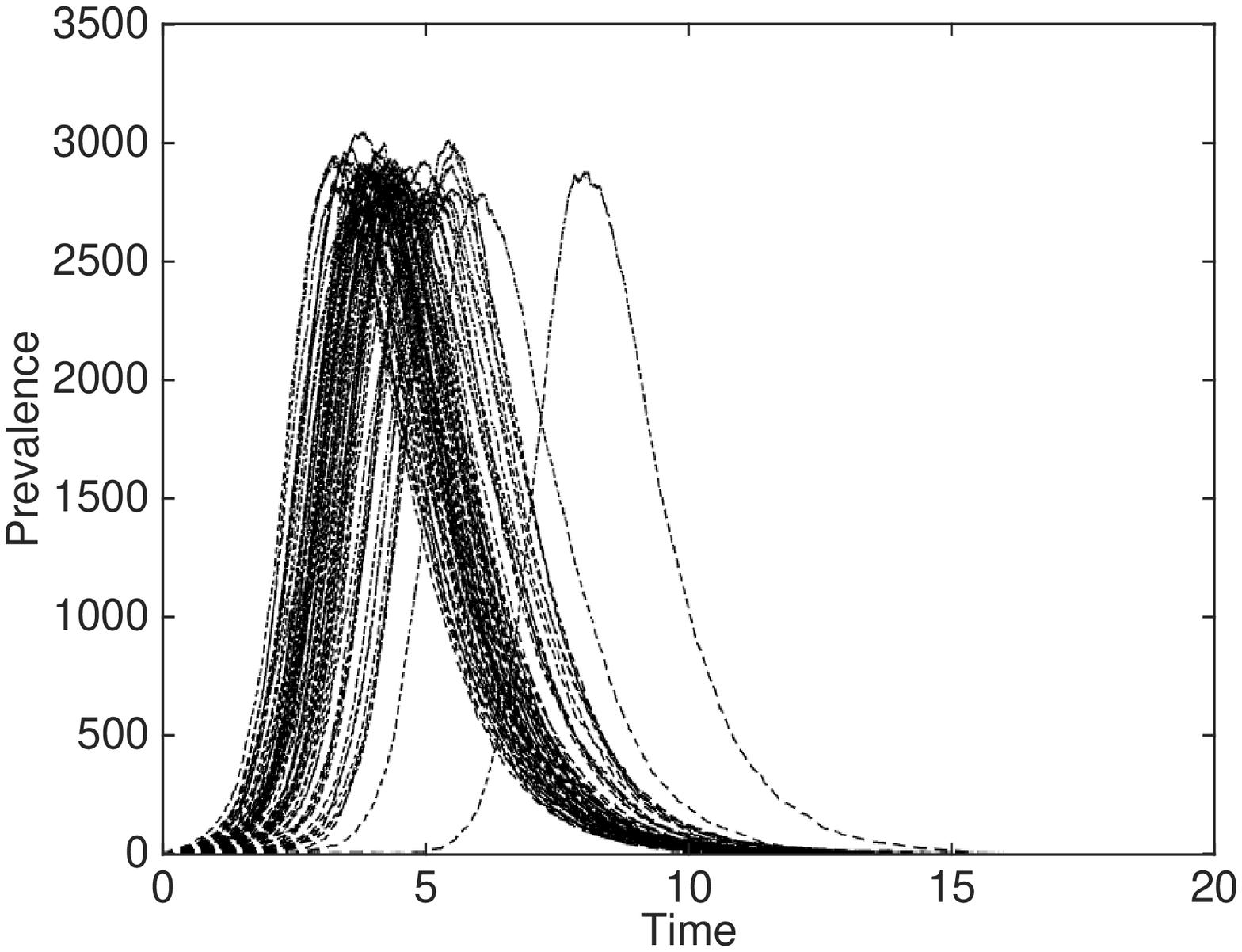}}}} \\
	\subfloat[Extinction probabilities. Black solid: simulations; Red dashed: branching process.]{
	{\resizebox{\fs}{!}{\includegraphics{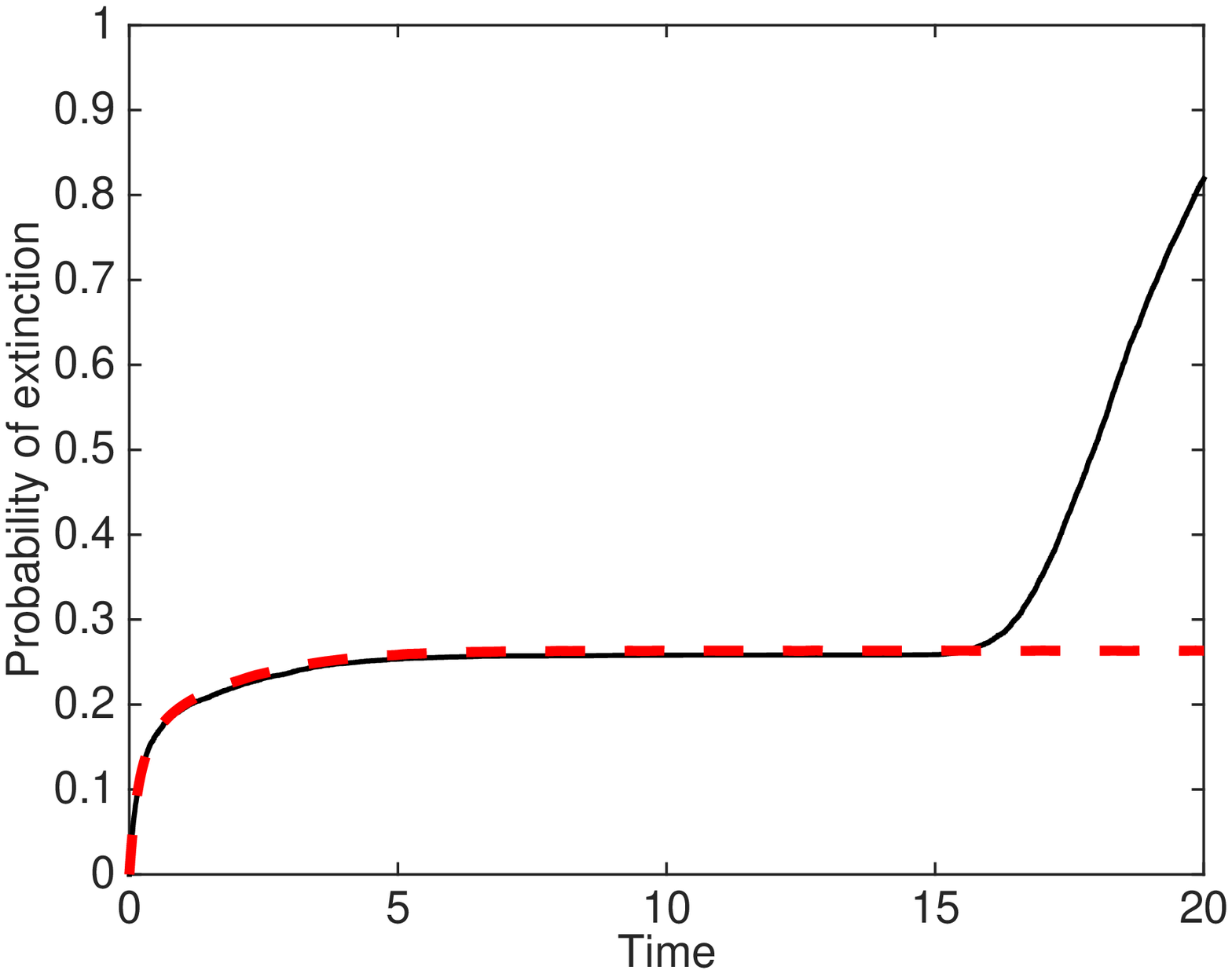} }} 
	{\resizebox{\fs}{!}{\includegraphics{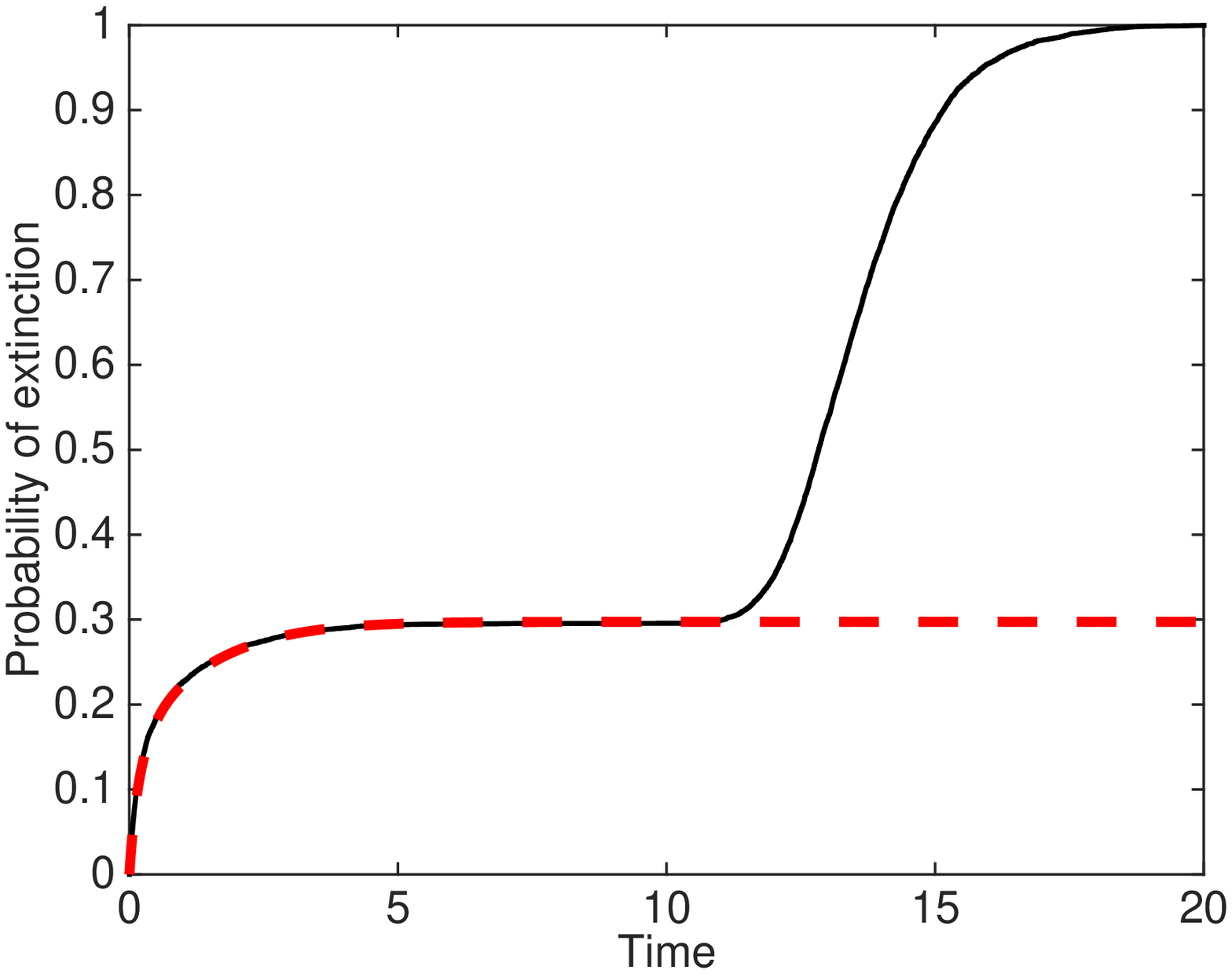} }} 
	{\resizebox{\fs}{!}{\includegraphics{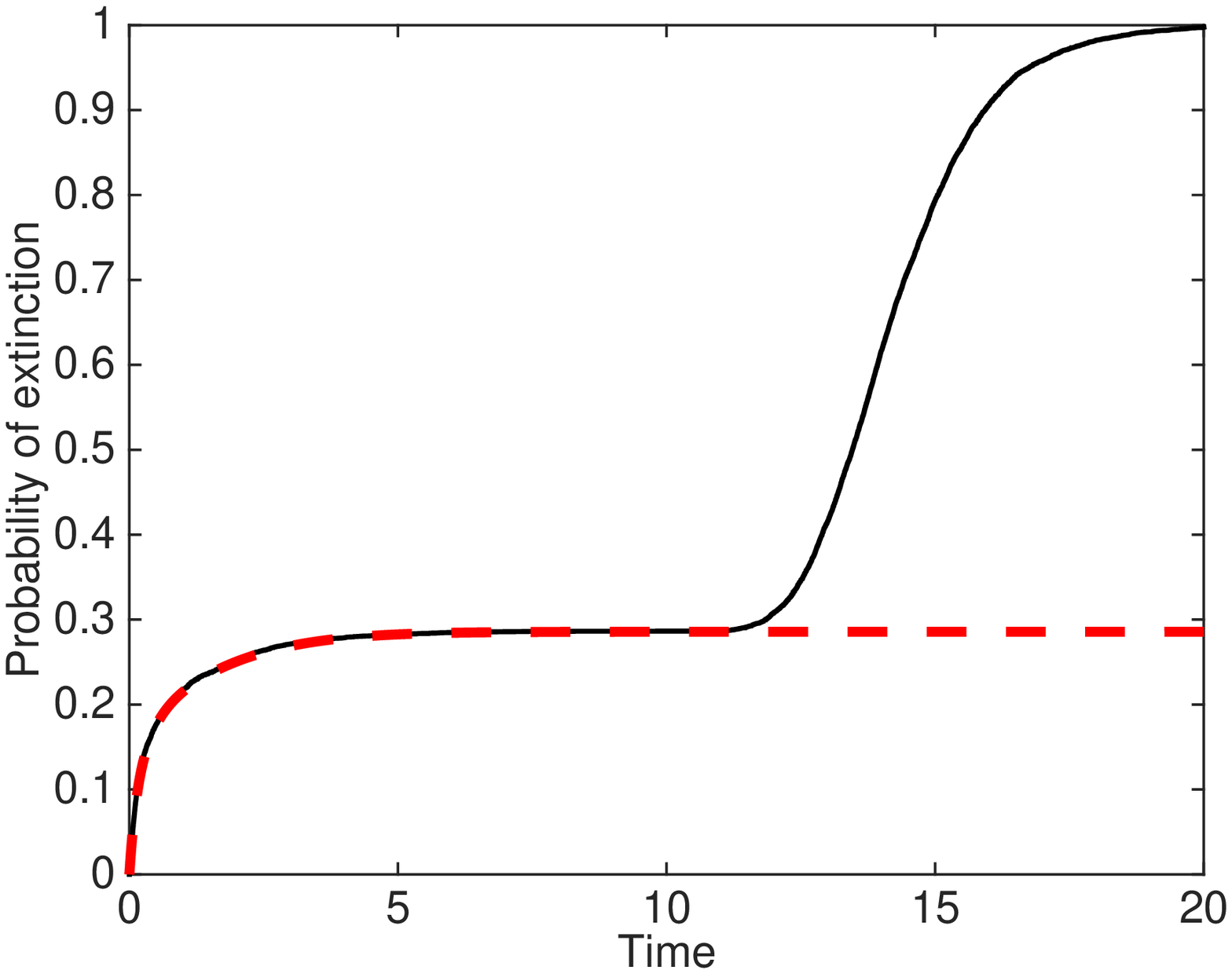} }}} \\
	\subfloat[Mean prevalence. Black solid: simulations; Red dashed: branching process.]{
	{\resizebox{\fs}{!}{\includegraphics{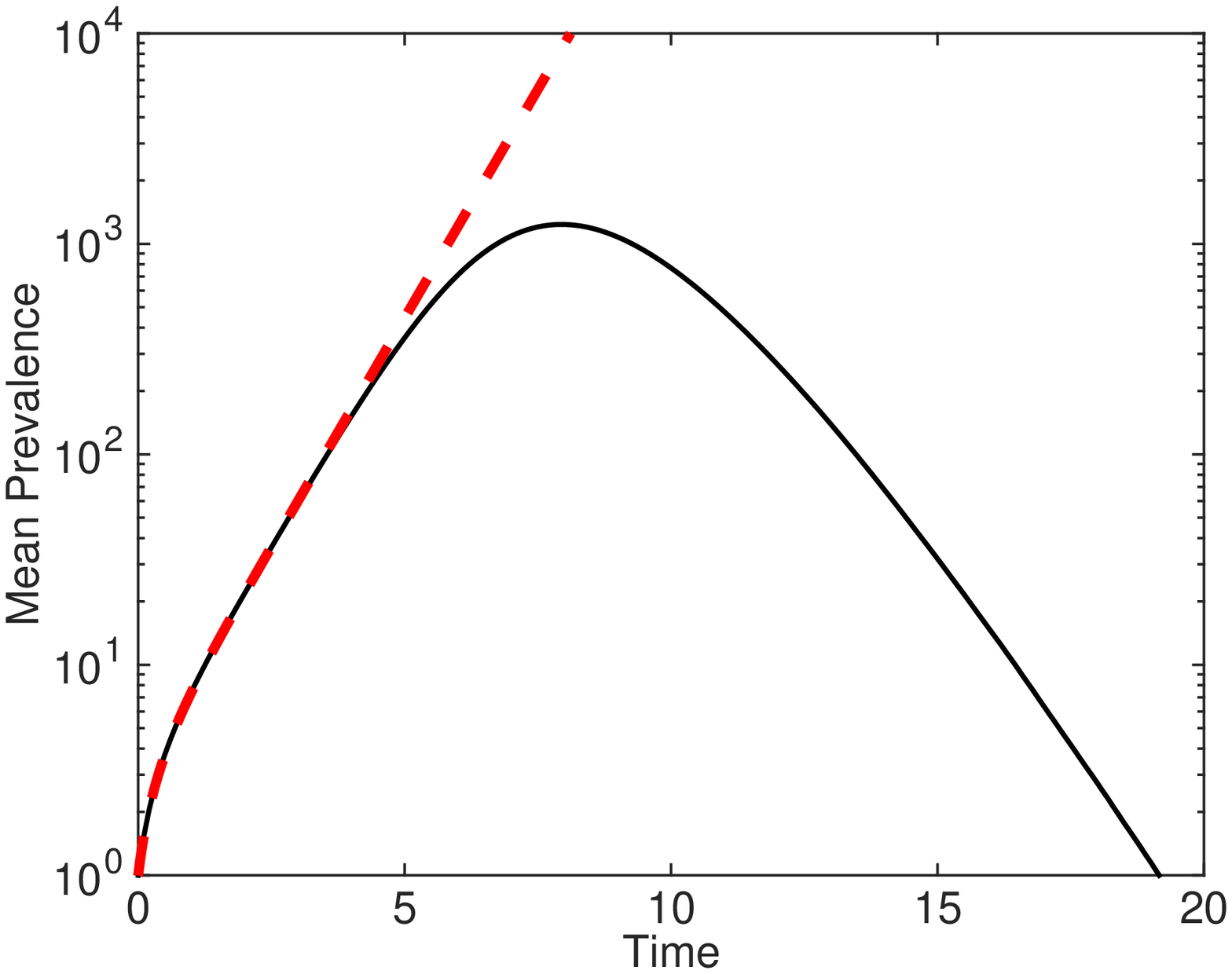} }} 
	{\resizebox{\fs}{!}{\includegraphics{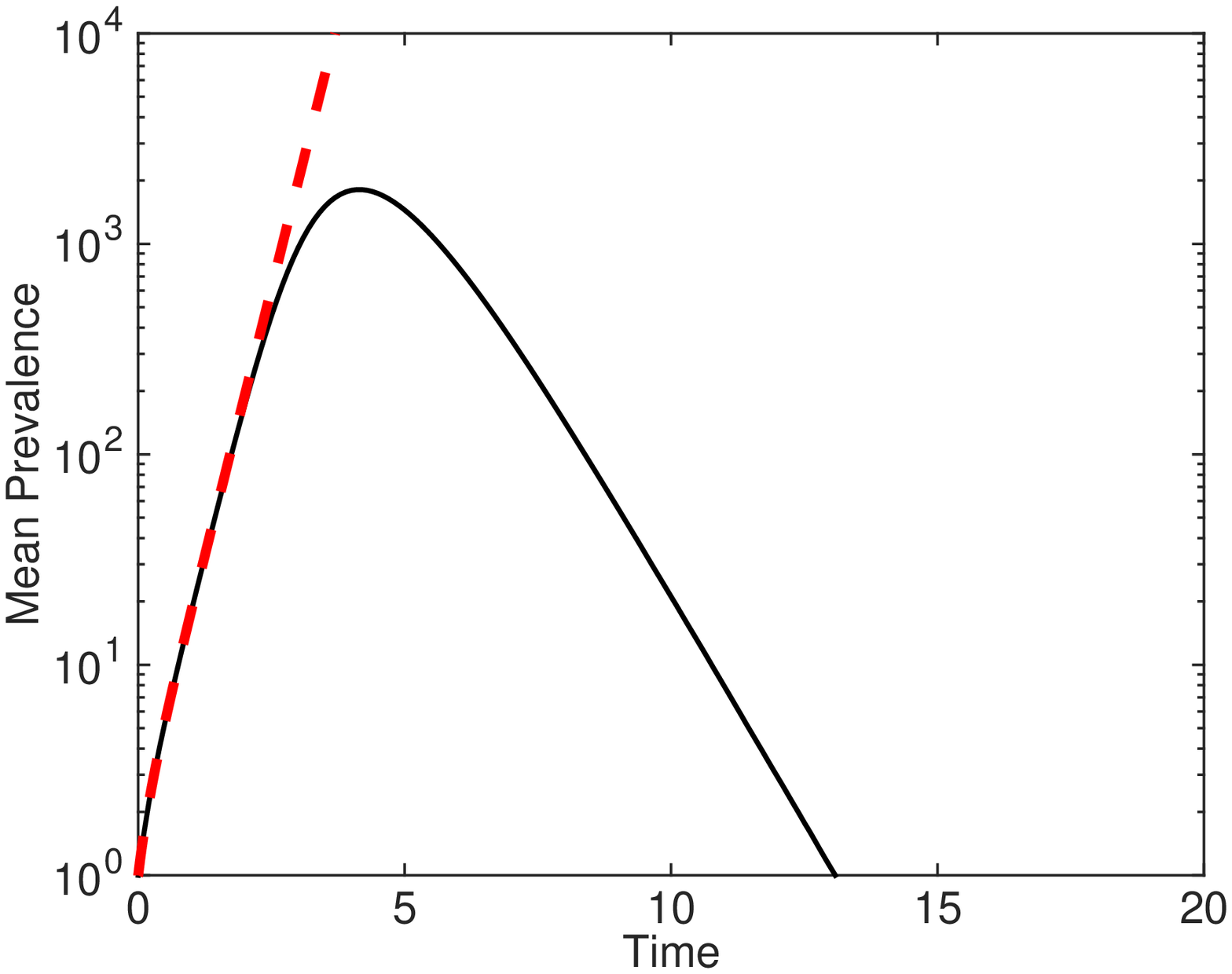} }} 
	{\resizebox{\fs}{!}{\includegraphics{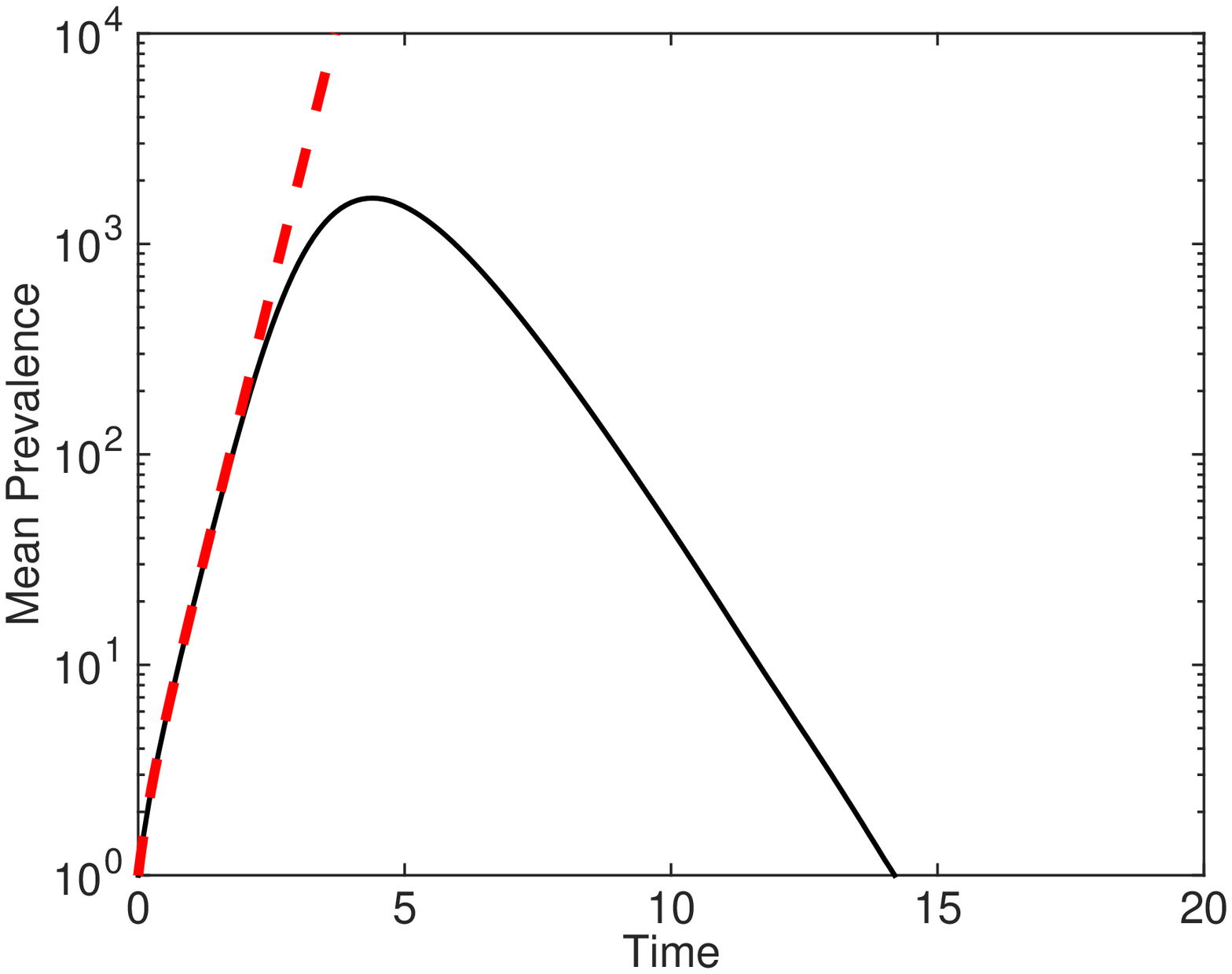} }}} \\
	\subfloat[Variance in prevalence. Black solid: simulations; Red dashed: branching process.]{
	{\resizebox{\fs}{!}{\includegraphics{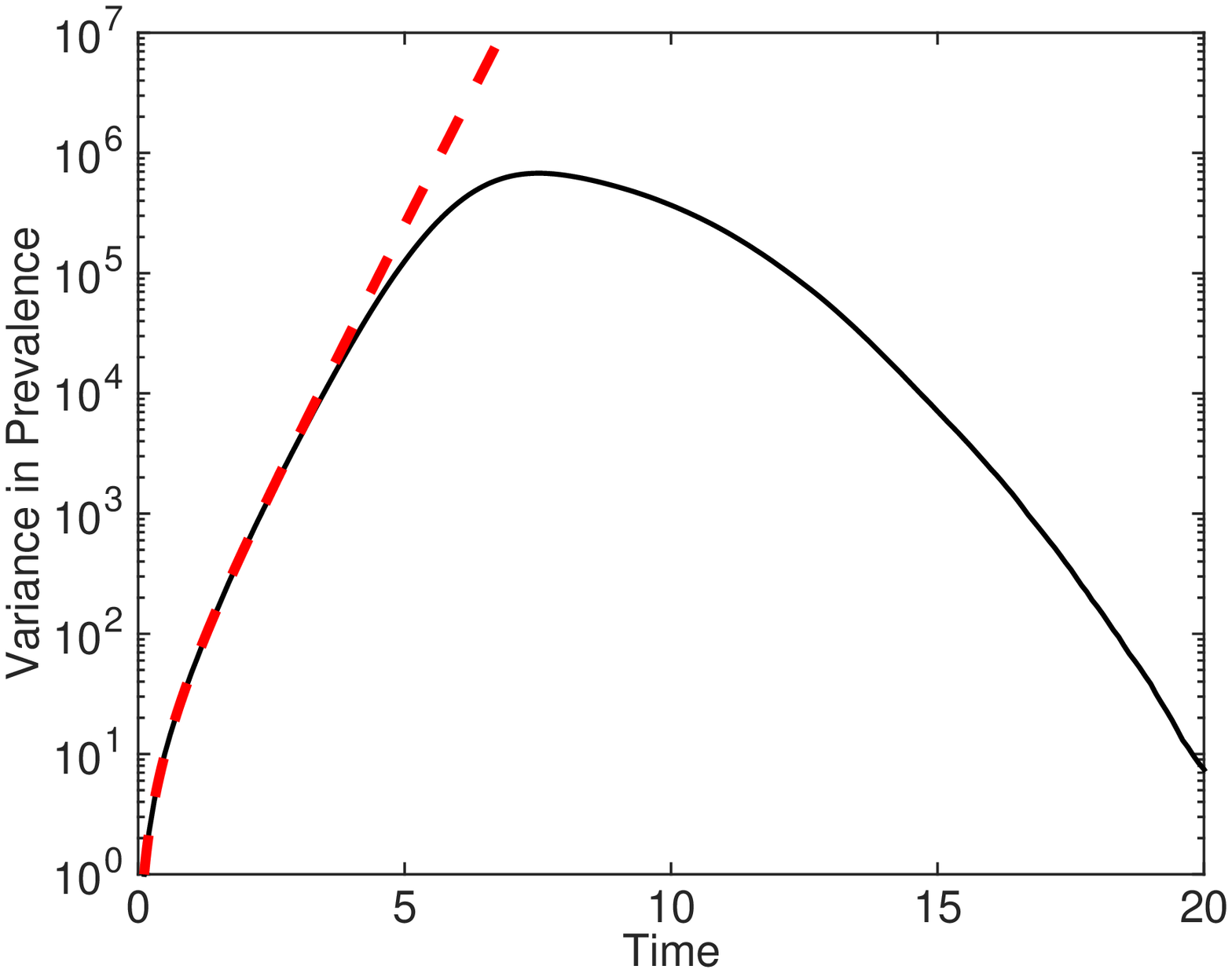} }} 
	{\resizebox{\fs}{!}{\includegraphics{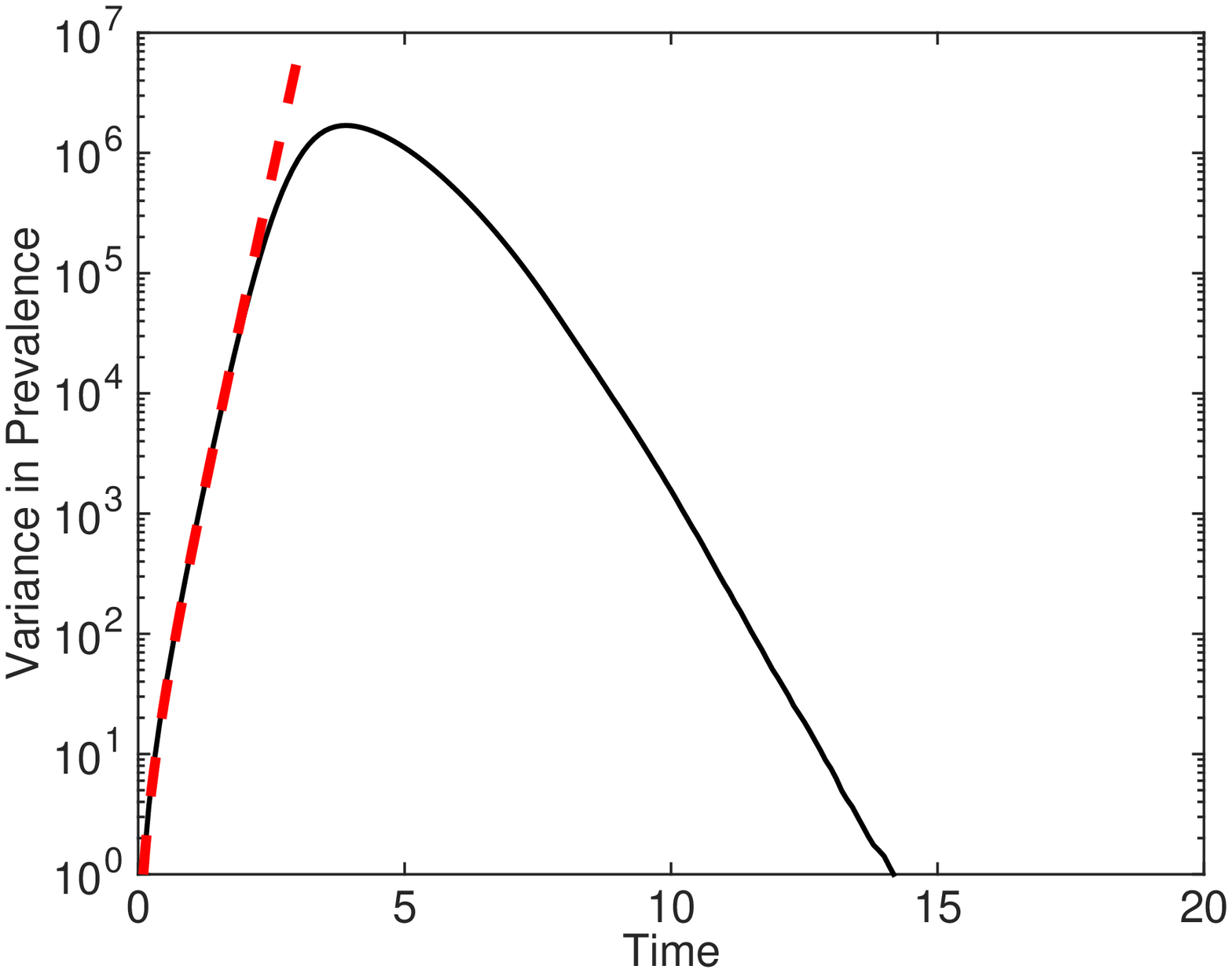} }} 
	{\resizebox{\fs}{!}{\includegraphics{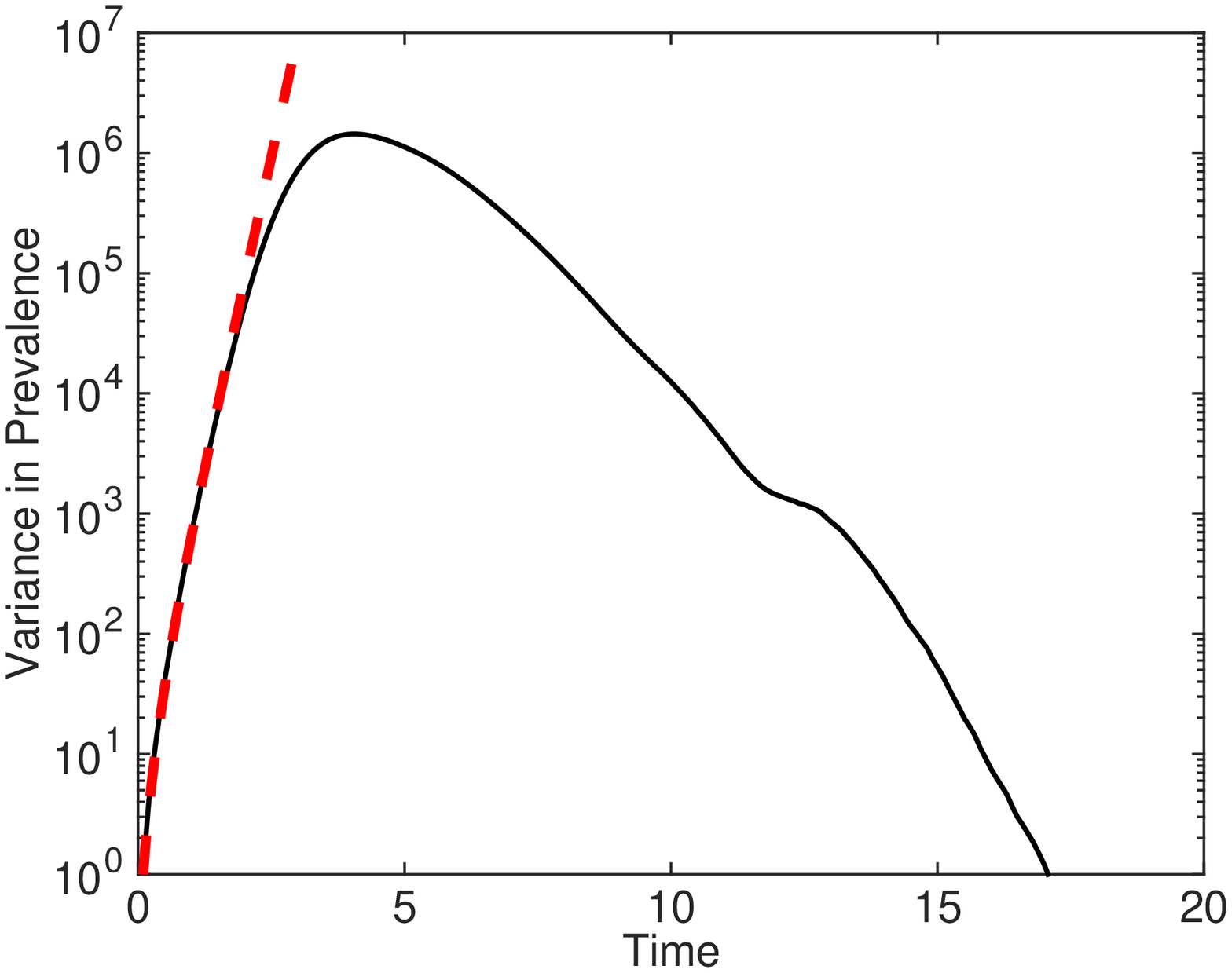} }}} \\
	\caption{Epidemic simulations starting from one node selected uniformly at
	random. Parameters are $\tau=2$, $\gamma=1$ throughout. Degree distributions
are as for Figure~\ref{fig:restart} above.}
	\label{fig:sims}
\end{figure}

\begin{figure}
	\centering
	\subfloat[Simulated epidemic]{%
	{\resizebox{.9\textwidth}{!}{\includegraphics{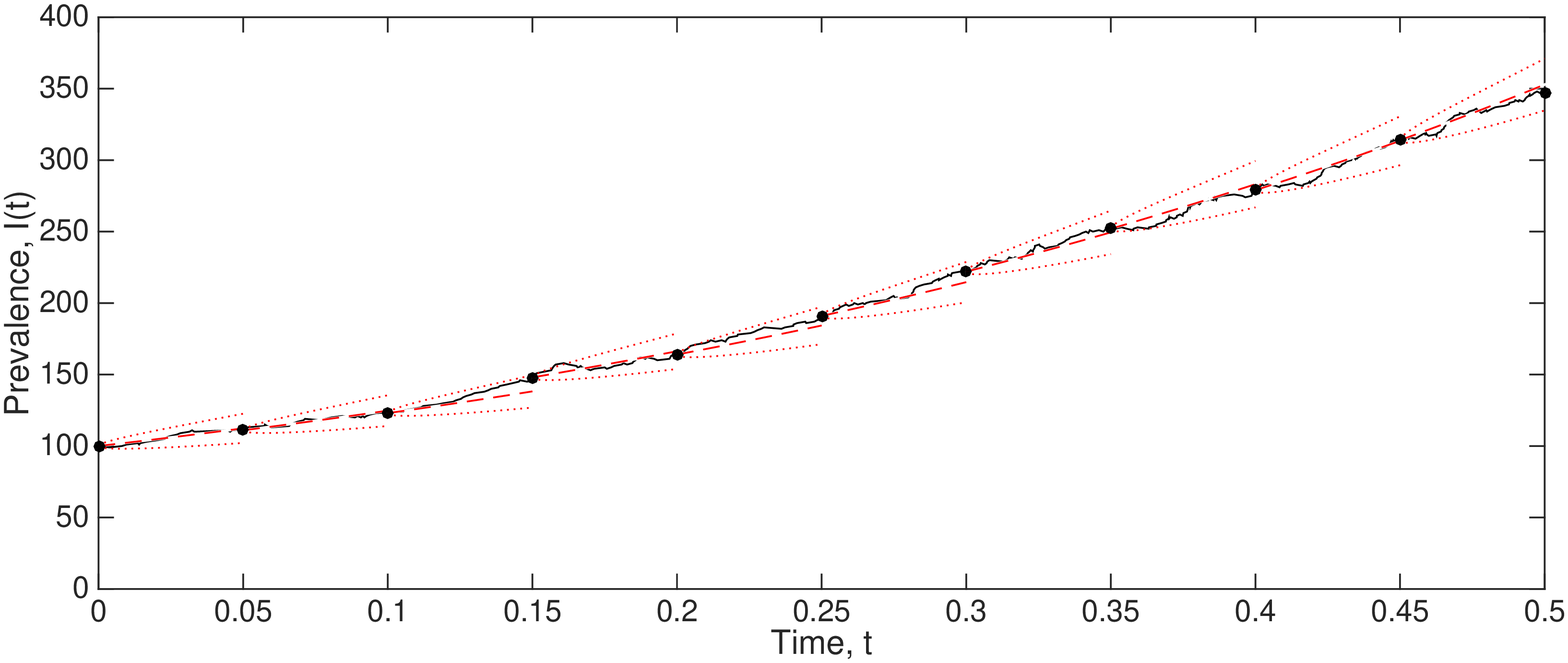} }}}\\
	\subfloat[Likelihood surfaces]{%
	{\resizebox{.45\textwidth}{!}{\includegraphics{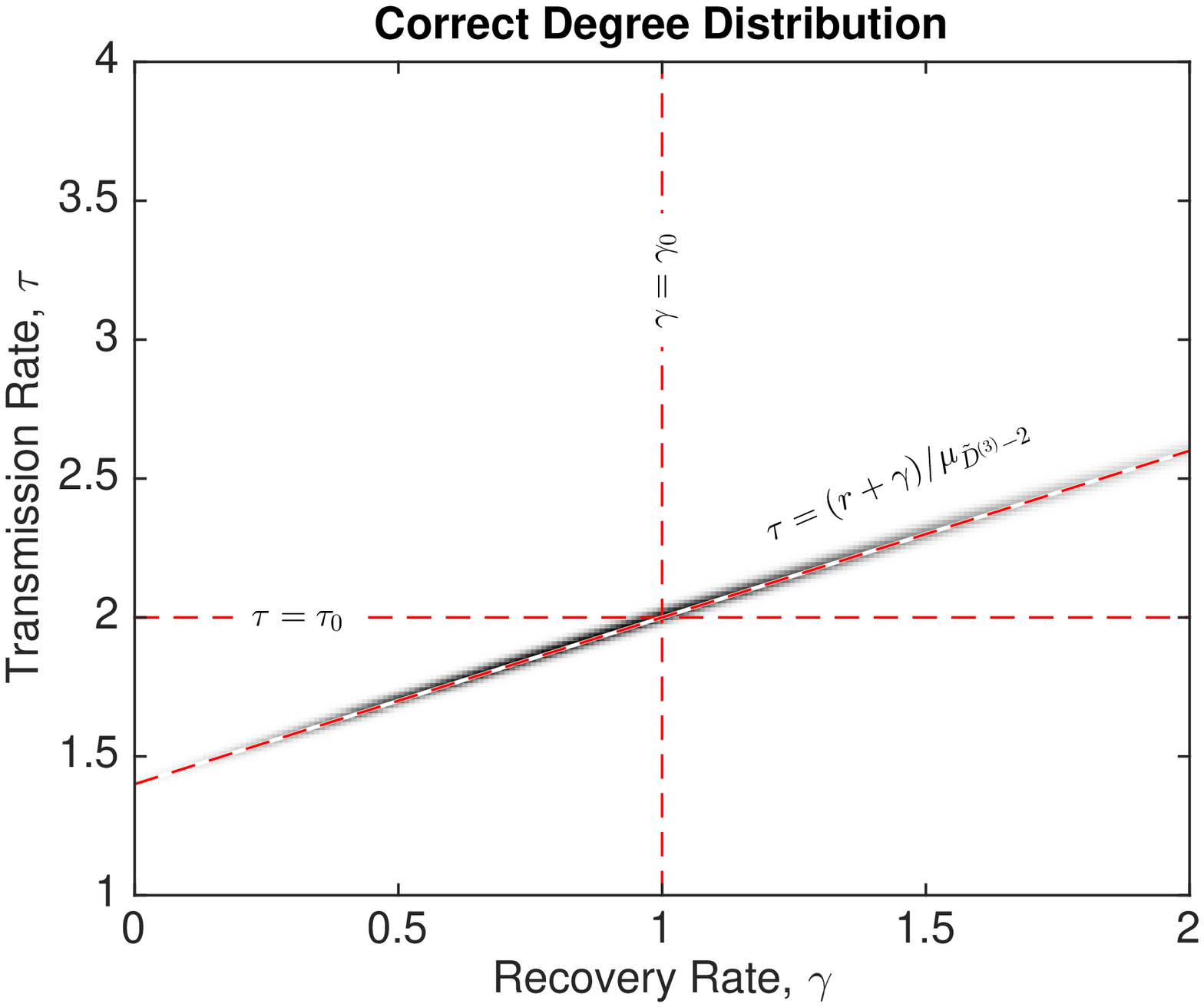} }}
{\resizebox{.45\textwidth}{!}{\includegraphics{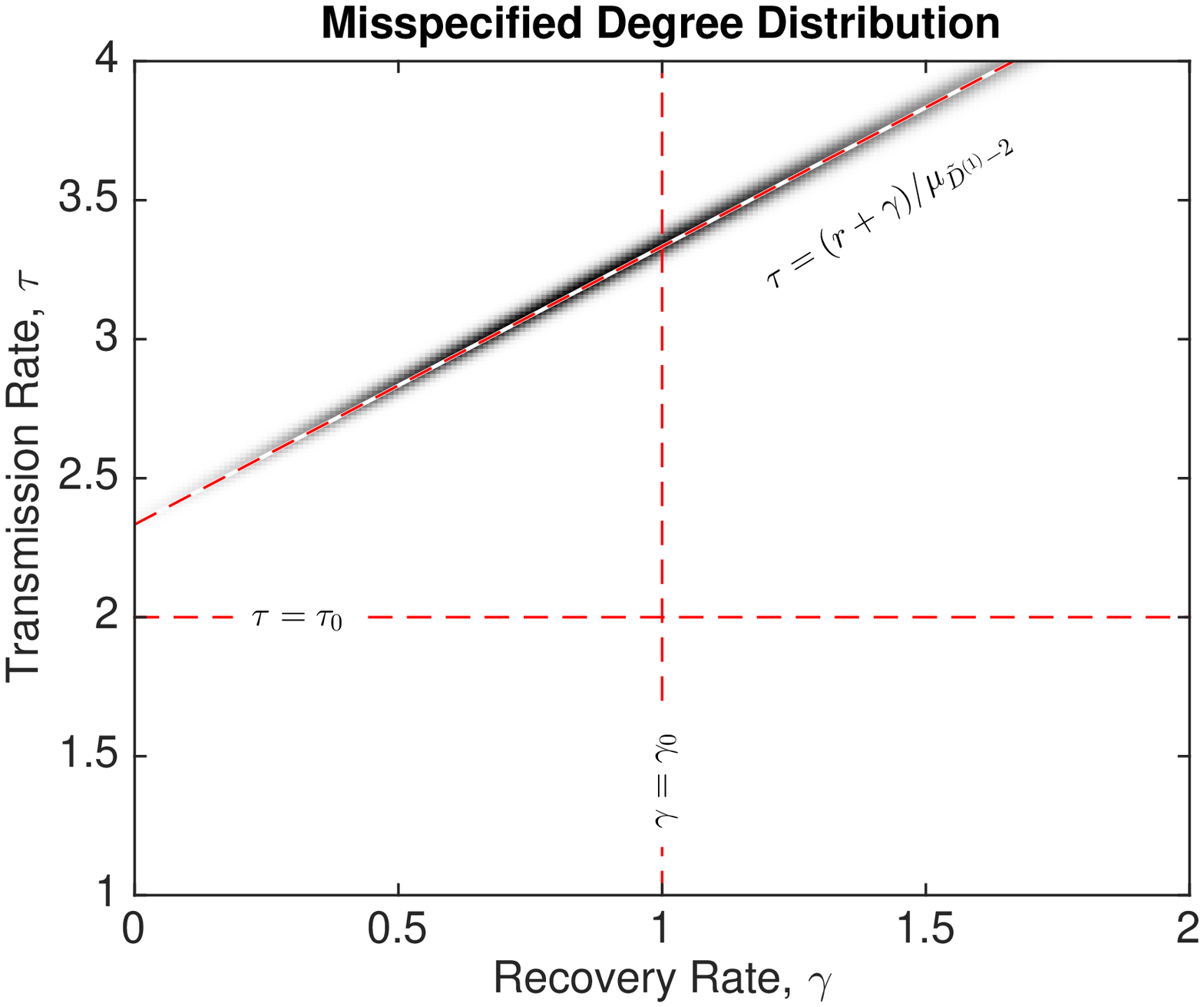}}} }
\caption{Simulation study. The top plot (i) shows the first quarter of the
	timepoints (observations as black dots, full trajectory as black solid line,
	Gaussian approximation mean as red dashed line, Gaussian approximation 95\%
	prediction interval as red dotted line). The bottom plots (ii) show
	likelihood surfaces for a Gaussian approximation model as described in
	Section~\ref{sec:infer}. True parameters are $\hat{\tau}=2$, $\hat{\gamma}=1$.
	Correct degree distribution $D^{(3)}$ is as given in the third columns of
	Figures \ref{fig:restart} and \ref{fig:sims} above, and the misspecified
	distribution $D^{(1)}$ is the distribution from the first column. Likelihood
	at a point is proportional to the intensity of shading, and three curves are
shown in each plot as dashed red lines. }
	\label{fig:infer}
\end{figure}

\end{document}